\def\sp{\hskip -5pt} 
\def\spa{\hskip -3pt}
\def\bk{{\boldsymbol k}}
\def\bx{{\boldsymbol x}}
\def\cF{{\ca F}}
\def\cH{{\ca H}}
\def\cD{{\ca D}}
\def\cO{{\ca O}}
\def\cF{{\ca F}}
\def\cP{{\ca P}}
\def\sS{{\mathsf S}}
\def\sK{{\mathsf K}}
\def\sH{\mathsf{H}}
\def\cW{\mathscr{W}}
\def\mA{\mathscr{A}}
\def\mS{\mathscr{S}}
\def\bC{{\mathbb C}}           
\def\bN{{\mathbb N}}
\def\bM{{\mathbb M}}
\def\bR{{\mathbb R}}
\def\bS{{\mathbb S}}
\def\beq{\begin{eqnarray}}
\def\eeq{\end{eqnarray}}
\newcommand{\ca}[1]{{\cal #1}}         
\def\ph{\varphi}
\def\supp{\mbox{supp}}
\def\Dom{\mbox{Dom}}
\newcommand{\Int}{\mathrm{Int}}
\newcommand{\Imm}{\mbox{Im}}
\newcommand{\Rea}{\mbox{Re}}
\begin{document}
\title{\Huge Modular dynamics in diamonds}
\author{{\Large Romeo Brunetti$^{1,a}$, Valter Moretti$^{1,2,b}$} 
\\\null\\
   \noindent$^1$ Dipartimento di Matematica -- Universit\`a di Trento
\\
 \noindent$^2$ Istituto Nazionale di Fisica Nucleare -- Gruppo Collegato di Trento\\
 via Sommarive 14,  
 I-38050 Povo (TN), Italy. 
 \\
\small  $^a$brunetti@science.unitn.it, $^b$moretti@science.unitn.it}

 \theoremstyle{plain}

  \newtheorem{definition}{Definition}[subsection]

  \newtheorem{theorem}[definition]{Theorem}

  \newtheorem{proposition}[definition]{Proposition}

  \newtheorem{corollary}[definition]{Corollary}

  \newtheorem{lemma}[definition]{Lemma}

\theoremstyle{definition}

 \newtheorem{remark}[definition]{Remark}

  \newtheorem{example}[definition]{Example}

\maketitle

\begin{abstract}
We investigate the relation between the actions of Tomita-Takesaki modular operators for local 
von Neumann algebras in the vacuum for free massive and massless bosons in four dimensional 
Minkowskian spacetime. In particular, we prove a long-standing conjecture that says that the 
generators of the mentioned actions differ by a pseudo-differential operator of order zero. 
To get that, one needs a careful analysis of the interplay of the theories in the bulk  and at the 
boundary of double cones (a.k.a. diamonds). 
After introducing some technicalities, we prove the crucial result that the vacuum state 
for massive bosons in the bulk of a double cone restricts to a KMS state at its boundary, 
and that the restriction of the algebra at the boundary does not depend anymore on the mass. 
The origin of such result lies in a careful treatment of classical Cauchy and Goursat problems for 
the Klein-Gordon equation as well as the application of known general mathematical techniques, 
concerning the interplay of algebraic structures related with the bulk and algebraic structures related 
with the boundary of the double cone, arising from quantum field theories in curved spacetime. 
Our procedure gives explicit formulas for the modular group and its generator in terms of integral 
operators acting on symplectic space of solutions of massive Klein-Gordon Cauchy problem. 
\end{abstract}
\newpage
\tableofcontents

\section{Introduction}\label{sec1}
This paper deals with the task of finding explicit expressions of the (generators of the) Tomita-Takesaki modular groups of automorphisms of von Neumann algebras 
 localized in double cones in $4$-dimensional  Minkowski spacetime for a free massive scalar field theory in the (cyclic and separating) vacuum state. In the following 
discussion we shall deal exclusively with Minkowski spacetime and hence we will not mention 
 it explicitly anymore.
 
Algebraic quantum field theory \cite{Haag} is a particularly fruitful approach for unveiling structural features of specific (class of) theories and, as
 such, it has been largely applied to study spin-statistics, PCT theorems and superselection structures. In this respect, one of the milestones has
  been the connection of Tomita and Takesaki modular theory\footnote{For short lucid introductions we refer to the papers of Guido \cite{Guido},  
  Lled\'o \cite{Lledo1} and Summers \cite{Summers}. For more extensive considerations we cannot help better than suggesting the reading of Borchers' 
  paper \cite{Borchers}.} \cite{TT} to quantum statistical mechanics, via KMS conditions \cite{BR1, BR2}. Modular theory has been connected with geometry-dynamics
   and thermodynamics, as shown in the pioneering papers of Bisognano-Wichmann \cite{BW1,BW2} and Sewell \cite{Se}. The first gave the interpretation of
    the modular group of automorphisms of von Neumann algebras --generated by Wightman scalar fields
    in the vacuum Poincar\'e-invariant state-- localized in wedge regions  as boost transformations leaving the wedges fixed; whereas  the second offered a physical interpretation of the Bisognano-Wichmann analysis  
    in terms of the Unruh and Hawking effects. These ideas have been further developed in various directions, which especially emphasize the geometric interpretation 
    of the action of the modular groups. Let us recall some of them. 

In the case of massless quantum field theories Buchholz \cite{Bu} proved that the generator of the modular group, w.r.t. the vacuum state, 
associated with a von Neumann algebra localized in the (forward) light-cone is related to the scaling transformation.  Hislop and Longo \cite{HL} 
showed that, for free scalar field theories invariant under the (suitably defined) group  of conformal transformations, the modular groups associated 
with classes of regions as light-cones, wedges and double cones, are all related to the generators of the geometric transformations leaving all respective 
regions fixed. In particular, they are all related via suitable elements of the conformal group. A purely algebraic version of these results has been presented 
in \cite{BGL1, BGL2}, using the results of a landmark paper by Borchers \cite{Bor}. Reconstruction of some spacetime and their 
symmetries can also be done, mainly using modular conjugations, as shown by Buchholz, Summers and collaborators \cite{BS1,BS2}, 
and, similarly, one can define free theories solely in terms of the modular data  \cite{BGL3}, basing the approach on the abstraction of the Bisognano-Wichmann 
correspondence and the representations of the Poincar\'e group. In the case of conformal field theories on the circle, modular groups (always related to a 
cyclic and separating vacuum state) continues to have a geometrical flavor when related to single intervals, but, as shown recently in \cite{MLR} also in 
case the localization has been weaken to several disjoint intervals. The last paper was inspired by recent results of Casini and Huerta \cite{CH} who in turn 
were able to put on solid ground the computation of the (resolvent of the) generator of the modular group (for massless free Fermi fields in two dimensions) 
developed abstractly in the elegant paper of Figliolini and Guido \cite{FG} (see also \cite{BJL,Saffary1}). However, it appears clear that the geometric behavior 
of the modular groups is tightly related either to the massless condition or to a particular choice of the state. Indeed, in the massive case for localizations 
different from wedges, for instance on double cones, one should not expect such a geometric behavior. Moreover, even in wedges localizations, it has
 been proved by Borchers and Yngvason \cite{BY} that other choice of states, for instance KMS states,  would certainly break down the geometric connection. 
 This last can only be retrieved in particular situations like near the apex of a double cone or the edge of the wedge.  The
  application of the milestone contributions by Wiesbrock about half-sided modular conditions
\cite{Wie1,Wie2,Wie3} was crucial to achieve those results.

It is the main goal of this paper to start a direct attack to one of the main questions left open  in the last years (see \cite{SW} for more): What is the explicit 
form of the generator of the modular group for massive (scalar) theories in the vacuum localized in a double cone? With this paper, we are able to give a
 precise answer only in the case of free theories, while the ambitious task of dealing with more general theories is left to future investigations. 

A well-known conjecture, sometimes referred to as  ``Fredenhagen's conjecture," says that the generator of the modular group for the massive
 free scalar theory (in the vacuum) for a double cone should differ from the massless one by a pseudodifferential operator of order zero. A possible
  strategy to prove this conjecture appeared in several places, for instance in \cite{SW} and \cite{Saffary2}, and relies upon the idea that, if  
  we consider the vacuum for the massless theory as a positive functional (state) in the (Fock) Hilbert space of the massive one, then the modular groups of both 
  theories associated with this last state should coincide. Notice that, for the massive case, that vacuum would be the ``wrong'' one. However, by one of the 
  celebrated theorems of Connes \cite{Connes}, the modular automorphisms of the von Neumann algebra for the  massive case referred to the ``wrong'' and correct vacua, 
  would differ only by a cocycle, i.e. by the action of a (continuous) family of unitary operators in the vacuum Hilbert space. However, so far, no one was able to turn this ideas into a real proof, not even the authors of the present paper who, instead, took a radically different route.

\subsubsection*{Main ideas and structure of the work} 

After having defined the general geometric scenario within
Section \ref{sec2}, we will tackle the main problem to give an explicit
representation of the modular group
of the free massive Klein-Gordon theory of a double cone as follows.
Instead of relying upon choices of vacua and construction of cocycles,
our main idea is to compare the massive free scalar theory
in the double cone with its shadow at the boundary.
This simple idea has been a source of many interesting recent results of
quantum field theories in curved spacetime \cite{DMP1,Mor1,Mor2,DMP2,DMP3,DMP4,DPP,P}, and
produced powerful tools for the comparison of theories in the boundary-bulk
correspondence.

 Generally speaking, one starts by considering the  algebra of
observables $\cW_m(D)$
 (the Weyl $C^*$-algebra of the free Klein-Gordon field with mass $m\geq
0$) localized in the interior of a region $D$ of a globally hyperbolic
spacetime, whose boundary
 contains a $3$-dimensional light-like submanifold with the structure of
a conical surface $V$. In the case considered in this paper, $D$ is
nothing but a double
 cone  in Minkowski spacetime and $V$ is the lower light-cone forming
part of the boundary of $D$.
 Under some hypotheses (e.g. see \cite{Mor1,DMP2} for asymptotically
flat and cosmological backgrounds), exploiting the symplectic structure
of the space of solutions of Klein-Gordon equation (KG from now on),
 it is possible to define a $C^*$-algebra of observables $\cW(V)$
localized on $V$, together with an embedding map ($C^*$-algebra
isometric homomorphism) $\ell^{VD}_m : \cW_m(D) \to \cW(V)$. Remarkably, the boundary algebra
$\cW(V)$ does not depend on the mass $m$ anymore, since the information
about $m$ is completely  encoded in the embedding $\ell^{VD}_m$ as we shall prove in Section \ref{sec3}.

  After some preparations given in Section \ref{sec3} concerning the characteristic Cauchy problem
(also known as the {\em Goursat problem}) of the KG equation with data
assigned on $V$,  the further step in our construction is to establish, in Section \ref{sec4}, the existence of a
 state $\lambda : \cW(V) \to \bC$ that induces the (restriction to $D$ of the) vacuum $\omega_m : \cW_m(D)\to \bC$ 
 through $\ell^{VD}_m$. In other words $\omega_m = \lambda\circ\ell^{VD}_m$. 
Once again, as a remarkable feature, $\lambda$ turns out to be independent from $m$.  The idea is
then to pass to the von Neumann algebras, $\pi_{m}(\cW_m(D))''$,
  $\pi_\lambda(\cW(V))''$ respectively associated with the
   GNS representations $(\cH_m, \pi_m, \Psi_m)$
   of $\omega_m$ and  $(\cH_\lambda, \pi_\lambda, \Psi_\lambda)$ of
   $\lambda$ and to study the interplay of the corresponding modular
groups,  referred to the respective cyclic and separating vectors
$\Psi_m$ and $\Psi_\lambda$.
   As we shall prove in Section \ref{sec5}, though $\cW_m(D)$ is smaller
than $\cW(V)$ -- that is $\ell_m^{VD}$ is not surjective -- the algebras
coincide when
   promoted to von Neumann algebras.
This is because $\ell_m^{VD}$ turns out to be implemented by a unitary
$Z_m: \cH_m  \to \cH_\lambda$
 which preserves the cyclic vectors, $Z_m \Psi_m = \Psi_\lambda$,
identifying the von Neumann algebras: $Z_m \pi_{m}(\cW_m(D))'' Z_m^{-1}
= \pi_{\lambda}(\cW(V))'' $.
 This technically complicated result will be achieved by a careful
analysis of the characteristic Cauchy problem and by enlarging in a canonical manner the initially
defined symplectic space of solutions of KG equation
following the analysis performed in Section \ref{sec3}.

To go on towards the final result, the crucial observation is that the
modular group of the boundary has a
geometric meaning related to the conformal Killing field studied in the massless case by Hislop and Longo in \cite{HL}.
This remark is used  in Section \ref{sec6} and permits to construct  the group of symplectic
isomorphisms of the space of solutions of KG equation that corresponds, through quantization
procedure, to the group of modular automorphisms of $(\pi_{m}(\cW_m(D))'',\Psi_m)$ in the one-particle Hilbert space.
In the massless case the one-parameter symplectic group will match that presented in \cite{HL}. Furthermore, the same procedure will give rise to an explicit formula for the infinitesimal generator
of the modular group (represented in the symplectic space of KG solutions). The difference of the generator for $m>0$ and the analog for
$m=0$  turns out to be a pseudodifferential operator of class $L^0_{1,1}$.

\section{Geometric features}\label{sec2}
Here and in the following $(\bM,g)$ is the four dimensional Minkowski spacetime assumed to be time-oriented,
 $g$ the flat metric with signature $-,+,+,+$ and $\nabla$ 
is the covariant derivative associated with the metric $g$. 
We fix a preferred Minkowskian frame used throughout  and  adopt the notation $x=(t, \bx) \in \bR \times \bR^3$
for the coordinates of the frame.
In the rest of the paper we shall use also the following {\em light-cone coordinates} on $\bM$:
 \beq 
 u \doteq  \frac{t+ ||\bx||}{2} \in \bR\:, \quad 
v\doteq \frac{t- ||\bx||}{2} \in \bR\:, \quad \omega \doteq  \frac{\bx}{||\bx||} \in \bS^2\ ,
\label{nullcoordinates}
\eeq
where $u-v \geq 0$, and $||\bx||$ stands for the euclidean norm in $\bR^3$.

Another useful tool will be the {\em signed squared geodesic distance} $\sigma :\bM\times\bM\rightarrow \bR$ of pairs of points
in $\bM$ with coordinates $x=(t,\bx)$ and $x'=(t',\bx')$:
\beq 
\sigma(x,x') \doteq -(t-t')^2 + ||\bx-\bx'||^2\label{distance}\:.
\eeq 

Concerning the {\em causal structure} of spacetimes, we adopt the definitions as in \cite{Wald1}. In particular,
 if $K\subset \bM$, then
$J^+(K)$ and  $J^-(K)$ denote, as subsets of $\bM$, the {\em causal future} and {\em causal past} of $K$, respectively. Restricting to some open set $A \subset \bM$,
and if $K\subset A$,
$J^\pm(K;A)$ denote the analogs referred to $(A,g\spa\restriction_A)$ viewed as a spacetime on its own right.

\subsection{Relevant properties of conformal Killing vector fields of $\bM$}\label{secconfflow}
If $Y$ is a {\em conformal Killing field} \cite{schottenloher} of Minkowski spacetime,
 we indicate by $\Upsilon^Y$ the {\em local} one--parameter group of diffeomorphisms generated by $Y$ (sometimes simply called
 the {\em flow} of $Y$). It is the smooth map $$\Upsilon^Y: \cO\to \bM\ ,$$  where $\cO\subseteq \bR\times\bM$ an open subset 
 containing the set $\{0\}\times\bM$, and which satisfies the relations
\begin{itemize}
\item[$(a)$] $\Upsilon^Y(0,x)=x$ ;
\item[$(b)$] 
$\Upsilon^Y(\tau'+\tau,x)=\Upsilon^Y(\tau',\Upsilon^Y(\tau,x))$, 
whenever both sides are defined.
\end{itemize}

The induced action $\beta^Y_\tau$ on 
(scalar) functions $f$ is individuated by the requirement
\beq 
(\beta^Y_\tau f)(\Upsilon^Y(\tau,x))=  J^Y_{\tau}(x) ^{-1/4}f(x)\label{conf}\:,
\eeq
where $J^Y_\tau$ is the Jacobian {\em in Minkowskian coordinates} of the the map  $x \mapsto \Upsilon^Y(\tau, x) $ and it is assumed that it does not vanish. 
The exponent $-1/4$ may be fixed differently \cite{schottenloher}, however our choice will turn out to be useful later when discussing properties 
of  the massless Klein-Gordon equation.
In this paper we shall be concerned 
with the {\em infinitesimal generator} $\gamma^{Y}$ of $\beta_\tau^{Y}$:
\beq
(\gamma^{Y} f)(x) \doteq  \frac{d}{d\tau} \bigl\arrowvert_{\tau=0} (\beta^Y_\tau f)(x) = -Y(f)(x) -\frac{1}{4} (\nabla_a Y^a) f(x)\:, \label{delta}
\eeq
where, as in the rest of the paper, we make use of the Einstein's summation convention and from now on 
latin indices $a,b,c,\dots$ range from $1$ to $4$.

There is an interesting interplay between the squared distance $\sigma$ in $\bM$ and the transformations induced by the conformal Killing vectors: 
\beq \sigma(\Upsilon^Y(\tau,x), \Upsilon^Y(\tau,x')) = J_\tau^{Y}(x)^{1/4} J_\tau^{Y}(x')^{1/4} \sigma(x,x')\eeq
(see for instance \cite{Haag,CFT}, collecting all the results for every subgroup of conformal transformation of $\bM$). Taking the derivative at $s=0$,  the useful relation arises:
\beq 
Y_x(\sigma)(x,x')  +  Y_{x'} (\sigma)(x,x') =  \frac{1}{4} \left(\nabla_aY^a(x) + \nabla_a Y^a(x')\right) \sigma(x,x')
\label{trick0}
\eeq
where the notation $Y_x$ means that the action of $Y$ is on the variable showed as a lower index.
 The previous relation implies that, if $F$ is a differentiable function on the real line
and $\gamma^Y$ is defined in (\ref{delta}):

\begin{align}
\gamma_x^{Y} F(\sigma(x,x')) + &\gamma_{x'}^Y F(\sigma(x,x'))\nonumber\\
\qquad &= \frac{1}{4} \left( \nabla_a Y^a(x) + \nabla_a Y^a(x')\right) (F(\sigma(x,x')) +\sigma 
F'(\sigma(x,x'))) \label{trick}\ .
\end{align}

\subsection{The standard double cones $D(p,q)$}

{\em Double cones} in $\bM$, also known as {\em diamonds}, are open regions generated by the choice of two points, $p$ and $q$, such that $q$ lies in the chronological future of $p$, and are defined by the position
\[
D(p,q)\doteq \Int(J^+(p)\cap J^-(q))\ .
\]
Given $D(p,q)$, it is always possible to construct a Minkowskian reference frame, said to be {\em adapted to} $D(p,q)$, such that
 $p$ and $q$ stay on the $t$-axis, $q$ in the future of  $p$ which, in turn, coincides to the origin of the coordinates. 
That coordinate frame is determined up to a spatial $3$-rotation and,  referring to those coordinates,  $D(p,q)$  takes the canonical form, where $a \doteq \sqrt{-\sigma(p,q)}/2$ is the {\em radius} of the double cone:  
\beq
D(p,q) \equiv  \{ (t,\bx) \in \bR^4\:\:|\:\:  |t-a| + ||\bx|| < a\}  \:.\label{doublecone}
\eeq 
 The boundary $\partial D$ is decomposed into {\em three} disjoint sets:
 A couple of lightlike conical $3$-surfaces,  $V(p,q)$ and  $W(p,q)$ with tips $p$ and $q$ respectively, and a 
 $2$-sphere, $C(p,q)$, defining the common base of $V(p,q)$ and $W(p,q)$. Using light-coordinates as 
in \eqref{nullcoordinates}:
\begin{align}
V(p,q) &\doteq  \{ (\omega,u,v) \in\bS^2\times \bR^2  \:|\: v=0\:, u \in [0,a) \}\:, \label{boundary}\\
W(p,q) &\doteq  \{ (\omega,u,v) \in\bS^2\times \bR^2  \:|\: u=a\:, v \in (0,a] \} \:,\\
C(p,q)  &\doteq  \{ (\omega,u,v) \in  \bS^2\times \bR^2 \:|\:  v=0\:, u = a \}\:.
   \label{boundary2}
\end{align}
The embedding of the cylinder $\bS^2 \times (0,a)$ into $\bM$:
\beq
\kappa : \bS^2 \times (0,a) \ni (\omega,u) \mapsto (u, u\omega) \in \bM
\eeq
realizes a {\em blow up} of the cone $V(p,q)$, removing its lower tip $p$. 

 To conclude, we notice that $(D(p,q), g\!\!\!\restriction_{D(p,q)})$ is globally hyperbolic.
Indeed, it is strongly causal, since $\bM$ is such and, for   $r,s \in D(p,q)$, the intersection of causal 
sets  $J^+(r;D(p,q)) \cap J^-(s;D(p,q))$ is compact since it coincide with the analogous set
$J^+(r) \cap J^-(s)$ referred to the whole spacetime $\bM$, which is compact in turn.

 \subsection{A relevant conformal Killing vector for $D$}
For a fixed double cone $D(p,q)$, taking into account the general expression \cite{schottenloher} of smooth conformal Killing fields everywhere defined in $\bM$, one easily sees that,
  up to a constant nonvanishing factor, there is a unique (nonvanishing) conformal Killing field $X^{(p,q)}$ of $\bM$ such that 
  its flow leaves  $V(p,q) \cup C(p,q)$ fixed,  $X^{(p,q)}$
is nonspacelike and invariant under $3$-rotations on $V(p,q) \cup C(p,q)$. It is:
\beq
X^{(p,q)} \doteq (t-a) \bx \cdot \nabla_\bx + \left(\frac{t^2 + \bx^2}{2}-at\right)\partial_t \quad \mbox{so $\nabla_b{X^{(p,q)}}^b = 4(t-a)$} \label{HL}\:,
\eeq
where we have adopted a Minkowskian coordinate frame adapted to $D(p,q)$. It will be useful to have also the form of the field in light-cone coordinates, namely,
\[
X^{(p,q)}  = u(u-a) \partial_u + v(v-a) \partial_v\ .
\]

Notice that $X^{(p,q)}$ is tangent to both $V(p,q)$ and $W(p,q)$ and it vanishes on $p$, $q$ and $C(p,q)$, therefore its flow leaves $D(p,q)$, $C(p,q)$,  $V(p,q)$, $W(p,q)$, $p$, $q$  separately fixed.  
As $\overline{D(p,q)} = D(p,q) \cup \partial D(p,q)$  is compact,  the  one-parameter group 
of diffeomorphisms generated by $X^{(p,q)}$ is global on $\overline{D(p,q)}$, i.e. its parameter ranges to the 
full real line $\bR$ when the orbit starts from a point in $\overline{D(p,q)}$ and the composition of flows in the right-hand side of  (b) in Section \ref{secconfflow} is always defined if $x \in \overline{D(p,q)}$ and $\tau,\tau'\in \bR$.

\begin{remark} In the following,  without loss of generality, we shall deal with a preferred double cone, indicated by $D$, assuming that $a=1$.
In that case, $p$, $q$,  $C(p,q)$,  $V(p,q)$, $W(p,q)$ and $X^{(p,q)}$ will be denoted by $o$, $o^+$,  $C$,  $V$, $W$ and $X$ respectively.
\end{remark}

\section{Symplectic and conformal structures for wave equations}\label{sec3}
 In the rest of the paper $\sS_m(\bM)$ is the real vector space of the 
smooth solutions $\phi$ of the  Klein-Gordon equation with mass   $m\geq 0$,  
\beq
g^{a b}\nabla_a\nabla_b \phi - m^2\phi =0 \label{KG}\:,
\eeq
which are compactly supported on smooth spacelike Cauchy surfaces $\Sigma$ of $\bM$. Moreover
\beq
\sigma_{\bM}(\phi_1,\phi_2) \doteq \int_\Sigma \left(\phi_2\nabla_{n_\Sigma} \phi_1 -\phi_1 
\nabla_{n_\Sigma}  \phi_2\right)  \: d\mu_\Sigma \label{sigma}
\eeq
is the standard  nondegenerate symplectic form on $\sS_m(\bM)$,  where $n_\Sigma$ 
is the future-oriented normal-to-$\Sigma$ unit vector 
 and $\mu_\Sigma$ is the standard measure induced on $\Sigma$ by $g$.  $\sigma_{\bM}$ is  Cauchy-surface-independent
in view of the Klein-Gordon equation and Stokes-Poincar\'e's theorem.
$(\sS_m(\bM), \sigma_{\bM})$ is a real symplectic space.

In the following $\Delta_m : C_0^\infty(\bM) \to C^\infty(\bM)$ denotes the {\em causal propagator} of the equation (\ref{KG}), that is 
the difference of the {\em advanced} and {\em retarded fundamental solutions}, which exist because $(\bM, g)$ is globally hyperbolic.

\subsection{Symplectic spaces associated to $D$ and $V$ and the Goursat problem}
All the mentioned structures can be defined replacing $(\bM,g)$ with a generic smooth globally hyperbolic spacetime    \cite{BGP}.
 In particular,  as the double cone $D\subset \bM$ is open
and  $(D,  g\spa\restriction_D)$ is globally-hyperbolic, one can define the symplectic space $(\sS_m(D), \sigma_D)$  of Klein-Gordon solutions in $D$  
with compactly supported Cauchy data on Cauchy surfaces $\Sigma_D$ of $D$, $\sigma_D$ being defined as the right-hand side of
  (\ref{sigma}) replacing $\Sigma$ for $\Sigma_D$.

\begin{remark} \label{gammam} It is worth noticing that there is a unique linear map 
preserving the relevant symplectic forms (i.e. a {\em symplectic homomorphism})
$L_{m}^{\bM D} :  \sS_m(D) \to \sS_m(\bM)$,  such that 
\beq
\left(L_{m}^{\bM D} \phi\right)\spa\restriction_D = \phi \quad \mbox{and}\quad \supp(L_m^{\bM D}\phi) \subset J^+(D) \cup J^-(D) \quad  \mbox{for every $\phi \in \sS_m(D)$}.\label{restadd}
\eeq
Indeed,  every smooth spacelike Cauchy surface $\Sigma_D$ of $D$ extends to a smooth spacelike 
Cauchy surface $\Sigma$  of $\bM$.
As the Cauchy problem is well posed in every globally hyperbolic spacetime   \cite{Wald1,BGP} 
 and  $\sigma_{\bM}\spa\restriction_{\sS_m(D)\times \sS_m(D)}= \sigma_{D}$, 
there exists  a linear function $L_{m}^{\bM D} :  \sS_m(D) \to \sS_m(\bM)$
 mapping  $\phi \in \sS_m(D)$ 
to  $L_{m}^{\bM D} \phi \in \sS_m(\bM)$ with the same Cauchy data on $\Sigma$ as $\phi$ on $\Sigma_D$.
By the uniqueness property of the Cauchy problem, (\ref{restadd}) turns out to be satisfied. The same argument proves that
$L_{m}^{\bM D}$ is uniquely determined.
Finally, $L_{m}^{\bM D}$ is injective because a linear map between two symplectic spaces that preserves the symplectic 
forms is injective provided the symplectic form of the domain is non degenerate, the proof being elementary.
\end{remark}

\noindent Let us consider the symplectic structures associated with the lower boundary $V$ of $D$.
Equipping $V$ with the topology induced by $\bM$, we define the real vector space
\beq
\sS(V) \doteq \left\{ \left. \Phi : V \spa \to \spa \bR\:\: \right| \:\: \Phi = uf\spa \restriction_V,\: 
\mbox{for some}\ f \in C^\infty(\bM),\: \supp(\Phi) \mbox{ is compact} \right\} \:,
\eeq  
where $u$ is the light-coordinate appearing in (\ref{boundary}).
By direct inspection one easily sees that every $\Phi \in \sS(V)$ results to be smooth away from the tip $o$, it vanishes in a neighborhood of $u=1$,
 $|\Phi(\omega,u)| \leq C_\Phi |u|$, uniformly in $\omega$, for some $C_\Phi\geq 0$ 
and all $u$-derivatives (of every order) of $\Phi$ are bounded functions.\\
$\sS(V)$ becomes a symplectic  space when equipped with the non degenerate symplectic form:
\beq
\sigma_V(\Phi_1,\Phi_2) \doteq \int_V   \left( \Phi_2(\omega,u) \frac{\partial \Phi_1(\omega,u)}{\partial u} -\Phi_1(\omega,u)  
  \frac{\partial \Phi_2(\omega,u)}{\partial u} \right)\: \:
  d\omega du\: \quad \Phi_1,\Phi_2\in \sS(V) \label{sigmaV}
\eeq
where $d\omega$ is the standard measure on the the $2$-sphere surface  $\bS^2$ with unitary radius.

The relation between $(\sS_m(D), \sigma_D)$ and $(\sS(V),\sigma_V)$ is stated within the following proposition, where henceforth $\varinjlim_{V}$ denotes the pointwise limit toward $V$ 
of functions.

\begin{proposition}\label{propdefjmath} For every fixed $m\geq 0$, the map
\beq  L_m^{VD} :  \sS_m(D) \ni \phi \mapsto  u \lim_{\to V} \phi \:, \label{jmath}
\eeq 
is an injective symplectic homomorphism from $(\sS_m(D), \sigma_D)$
to $(\sS(V),\sigma_V)$.
\end{proposition}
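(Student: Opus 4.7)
The plan is to verify in sequence: (a) $L_m^{VD}\phi \in \sS(V)$, (b) $L_m^{VD}$ is linear, (c) it intertwines the two symplectic forms, and (d) it is injective.

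The first two items are essentially bookkeeping. Given $\phi\in\sS_m(D)$, Remark \ref{gammam} provides the extension $f\doteq L_m^{\bM D}\phi \in \sS_m(\bM)\subset C^\infty(\bM)$ with $f\restriction_D=\phi$. Smoothness of $f$ on $\bM$ ensures that the pointwise limit of $\phi$ toward $V$ exists and equals $f\restriction_V$, so $L_m^{VD}\phi = u\,f\restriction_V$ is of the form required in the definition of $\sS(V)$. Linearity of $L_m^{VD}$ is inherited from that of $L_m^{\bM D}$. The compact-support clause is obtained via finite propagation speed: the compact support $K\subset\Sigma_D\subset D$ of the Cauchy data of $\phi$ propagates to $\supp f \subset J^+(K)\cup J^-(K)$ by \eqref{restadd}. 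Since $V$ lies in the causal past of $D$, one has $V\cap J^+(K)=\emptyset$; a geometric argument in light-cone coordinates, using that $K$ is strictly separated from the edge $C$, then shows that $V\cap J^-(K)$ is bounded away from $C$, whence compactness of the support of $L_m^{VD}\phi$ in $V$ (the factor $u$ taking care of the behavior at the tip $o$).

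The core of the proof is (c). A direct expansion of $\partial_u(u\phi_i)=\phi_i+u\,\partial_u\phi_i$ in \eqref{sigmaV}, with the symmetric $\phi_1\phi_2$ contribution cancelling by antisymmetry, gives
\[
\sigma_V(L_m^{VD}\phi_1, L_m^{VD}\phi_2) = \int_V u^2\bigl(\phi_2\,\partial_u\phi_1 - \phi_1\,\partial_u\phi_2\bigr)\,du\,d\omega .
\]
To match this with $\sigma_D(\phi_1,\phi_2)$ I would invoke the conserved Klein-Gordon current $j^a\doteq\phi_1\nabla^a\phi_2 - \phi_2\nabla^a\phi_1$, which by \eqref{KG} with a common mass $m$ satisfies $\nabla_a j^a=0$. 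Applying the divergence theorem to the subregion of $D$ bounded above by $\Sigma_D$ and below by an $\epsilon$-shift of $V$ into the interior (e.g.\ $\{v=-\epsilon\}\cap D$), the flux across $\Sigma_D$ equals $\sigma_D(\phi_1,\phi_2)$ by definition of $\sigma_D$, while the flux across the shifted null cone, computed in light-cone coordinates with the $u^2\,du\,d\omega$ factor produced by contracting $\sqrt{|g|}\,du\,dv\,d\omega$ with the transverse null vector $\partial_v$, limits as $\epsilon\to 0$ to the right-hand side of the display above. Boundary contributions near the tip $o$ vanish thanks to the $u^2$ prefactor, and those near $C$ vanish by (a). The symplectic identity follows, and (d) is then immediate, since $\sigma_D$ is non-degenerate, by the same reasoning as in Remark \ref{gammam}.

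The principal obstacle I expect is the careful treatment of the null boundary. On a null hypersurface there is no canonical normal direction or Riemannian surface measure, so one must commit to a transverse null vector (naturally $\partial_v$) and the associated Leray form, and then control the boundary contributions uniformly in the limit $\epsilon\to 0$, especially near the singular tip $o$ where the blow-up $\kappa$ from Section \ref{sec2} is the right tool. A secondary but still delicate point is making the causal-geometric assertion that $V\cap J^-(K)$ is bounded away from $C$ fully precise in light-cone coordinates.
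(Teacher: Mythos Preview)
Your approach coincides with the paper's: extension via $L_m^{\bM D}$ for membership in $\sS(V)$, the closed Klein--Gordon $3$-form together with Stokes--Poincar\'e for the symplectic identity, and non-degeneracy of $\sigma_D$ for injectivity. The paper sidesteps your two flagged difficulties by choosing the specific Cauchy surface $\Sigma$ at $t=1$ (so that $K\subset\Sigma_D$ and $C\subset\Sigma$ lie on a common spacelike hypersurface and are automatically spacelike separated, giving $\supp(L_m^{\bM D}\phi)\cap C=\emptyset$ at once) and by applying Stokes directly to the region bounded by $\Sigma\cap\overline{D}$ and $V$ without an $\epsilon$-shift.
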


\begin{proof} $\phi\in \sS_m(D)$ is the restriction to $D$ of an element $L_{m}^{\bM D}\phi \in \sS_m(\bM)$.
Therefore $ L_m^{VD}\phi$ is the restriction to $V$ of a smooth function $f\doteq L_{m}^{\bM D}\phi \in C^\infty(\bM)$.
Furthermore $\supp(L_{m}^{\bM D} \phi)$ does not intersect $C$
so that the support of $u(L_{m}^{\bM D}\phi)\spa\restriction_V$ is compact in $V$ as requested in the definition of $\sS(V)$.
 (Let $\Sigma$ be the Cauchy surface of $\bM$ at $t=1$
and $\Sigma_D$ its restriction to $D$. The Cauchy data of $L_{m}^{\bM D}\phi$ are included 
in a compact $K \subset \Sigma_D$ so that
$\supp(L_{m}^{\bM D}\phi) \subset J^+(K) \cup J^-(K)$;
however  $(J^+(K) \cup J^-(K))\cap C = \emptyset$ by construction and thus
 $\supp(L_{m}^{\bM D}\phi) \cap C = \emptyset$.)
Hence $ L_m^{VD}\phi \in \sS(V)$, by definition of $\sS(V)$.
To conclude it is enough proving that the linear map $L_m^{VD}: \sS_m(M) \to  \sS(V)$  preserves the symplectic forms. 
This fact entails that $L_m^{VD}$ is injective as $\sigma_D$ is nondegenerate.
For $\phi_1,\phi_2 \in \sS_m(M)$, the  $3$-form
$\eta \doteq  \sqrt{|g|} g^{ea} 
\left(\phi_1 \partial_e \phi_2 - \phi_2 \partial_e  \phi_2 \right) 
\epsilon_{abcd} dx^b \wedge dx^c \wedge dx^d$ 
where $\epsilon$ is completely antisymmetric and $\epsilon_{1234}=1$, satisfies $d \eta = 0$ in view of Klein-Gordon equation.
 Applying Stokes-Poincar\'e's theorem to $\eta$ for the set $B\subset D$ whose boundary is the union of $V$ and the intersection of $\overline{D}$ and the 
 surface  $\Sigma$ at $t=1$, one easily gets $\sigma_D(\phi_1,\phi_2) = \sigma_V( L_m^{VD}(\phi_1), L_m^{VD}(\phi_2))$. 
 \end{proof}

\noindent We aim now to investigate the possibility of inverting, on the whole $ \sS(V)$,  the injective  symplectic homomorphism 
$ L_m^{VD} : \sS(D)\to \sS(V)$. In other words we are concerned with the surjectivity of $ L_m^{VD}$.
Hence we have to study the problem of determining a solution of Klein-Gordon equation in $D\cup V$ when its  restriction to $V$ is assigned.
That is  the {\em Goursat problem} for the Klein-Gordon equation in $D\cup V$,
 using $V$ as the {\em characteristic surface}. We have the following crucial result.

\begin{theorem}\label{propgoursat} Fix $m^2\in \bR$. If $\Phi \in \sS(V)$,  there is  a unique function $\phi: D \to \bR$ 
such that:
\begin{itemize}
\item[$(i)$] $\phi$ is the restriction to $D$ of a function $C_0^\infty(\bM)$,

\item[$(ii)$] $\nabla^a \nabla_a \phi - m^2 
 \phi =0$ , 

\item[$(iii)$]  $u\varinjlim_{V}\phi = \Phi$.
\end{itemize}
\end{theorem}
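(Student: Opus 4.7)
The plan is to recognize condition (iii), combined with the structure $\Phi = uf|_V$ for some $f \in C^\infty(\bM)$, as prescribing the trace $\phi|_{V\setminus\{o\}} = f|_{V\setminus\{o\}}$ of the sought solution on the characteristic surface $V$. This is a Goursat problem for the Klein--Gordon equation with data on the future null cone of the vertex $o$, and I would reduce the proof to the standard theory of characteristic Cauchy problems for second-order strictly hyperbolic operators (H\"ormander's JFA 1990 note, Friedlander's monograph, or the globally hyperbolic version due to B\"ar and collaborators). The vertex $o$ is the only delicate point, since there the cone degenerates and the blow-up $\kappa$ fails to be regular.

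For uniqueness, let $\phi_1,\phi_2$ both satisfy (i)--(iii) and set $\chi := \phi_1-\phi_2$. Then $\chi$ admits a $C_0^\infty(\bM)$ extension, solves the homogeneous KG equation in $D$, and satisfies $u\chi \to 0$ on $V$. Since $u>0$ on $V\setminus\{o\}$ and $\chi$ is smooth up to $V$, one gets $\chi|_V\equiv 0$ by continuity across the vertex. A standard energy identity in light-cone coordinates on slabs $D\cap\{u<u_0\}$, obtained by contracting the equation with $\partial_u\chi$ and integrating by parts (the null-slab boundary contribution vanishes because of the trivial characteristic data), then yields a Gr\"onwall inequality and hence $\chi\equiv 0$ on $D$.

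For existence, push the datum $\phi_0 := f|_V$ through the blow-up $\kappa$. Because $\supp\Phi$ is compact in $V$ and bounded away from $C=\{u=1\}$, the pulled-back datum $\kappa^*\phi_0$ is smooth on $\bS^2\times(0,1)$ and vanishes in a neighborhood of $u=1$; moreover, because $\phi_0$ is the restriction of the globally smooth ambient function $f$, it extends smoothly through $u=0$. The characteristic Cauchy theorem then furnishes a unique smooth $\phi$ on $D \cup (V\setminus\{o\})$ solving KG with $\phi|_{V\setminus\{o\}} = f|_{V\setminus\{o\}}$, and finite propagation speed (applied to the compact support of the data away from $C$) ensures that $\phi$ vanishes on a neighborhood of $\overline W \cup C$ in $\overline D$.

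The main remaining obstacle, and what I expect to be the technical heart of the argument, is smoothness of $\phi$ across the vertex $o$ and the ensuing construction of the $C_0^\infty(\bM)$ extension required by (i). I would handle this via the ansatz $\phi = f + u\eta$, obtaining an equation for $\eta$ whose characteristic datum on $V$ vanishes to higher order at $o$, and iterating this reduction (or, equivalently, invoking a Borel-type construction to match all jets of $\phi$ at $o$ to those of $f$, which is licit because the KG equation determines these jets recursively once $\phi|_V$ and the smooth ambient $f$ are given). Once $\phi$ is smooth on all of $\overline D$, combining extension by zero across $\overline W \cup C$ (where $\phi$ vanishes in a neighborhood) with a Whitney-type extension across $V\setminus\{o\}$ and through $o$ (using the matched jets) delivers the required $\tilde\phi \in C_0^\infty(\bM)$, completing the proof.
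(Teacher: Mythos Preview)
Your proposal contains a genuine gap. The claim that finite propagation speed forces $\phi$ to vanish on a neighborhood of $\overline{W}\cup C$ in $\overline{D}$ is false. For a point $p\in D$ close to $C$, the characteristic domain of dependence $J^-(p)\cap V$ reaches all the way down to small values of $u$: writing $p=(t,\bx)$, the intersection consists of the points $(u',u'\omega)$ with $u'\le (t^2-\lVert\bx\rVert^2)/\bigl(2(t-\bx\cdot\omega)\bigr)$, and for $\omega$ opposite to $\bx/\lVert\bx\rVert$ this upper bound is small. Thus the data on $\{u\le u_0\}$ influences the solution arbitrarily close to $C$ (and likewise near $W$), so $\phi$ is generically nonzero there. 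As a consequence your extension-by-zero across $\overline{W}\cup C$ fails, and since your argument only produces smoothness of $\phi$ on $D\cup(V\setminus\{o\})$, you have no control of $\phi$ up to $W\cup C$ --- so even a Whitney extension across that part of the boundary is not available.

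The paper's proof sidesteps all of this in one move: extend the characteristic datum $\Phi/u=f_\Phi|_V$ by zero to the whole of $\partial J^+(o)$ (legitimate because $\mathrm{supp}\,\Phi$ stays away from $C$, so the extended datum is still the restriction of a smooth compactly supported ambient function) and apply Friedlander's Theorems 5.4.1 and 5.4.2 in the full causal future $J^+(o)$. Those theorems are formulated precisely for the Goursat problem on a null cone emanating from a vertex and deliver a solution that is $C^\infty$ on $\overline{J^+(o)}$, vertex included. Now $W$ and $C$ lie in $\overline{J^+(o)}$, so smoothness up to the full boundary $\partial D$ is automatic, and the $C_0^\infty(\bM)$ extension required by (i) follows. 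Your Borel/jet-matching scheme for vertex regularity and the separate energy argument for uniqueness are then unnecessary: Friedlander's 5.4.1 gives uniqueness and 5.4.2 gives existence with the correct regularity at $o$ already built in. The one idea you are missing is to enlarge the domain from $D$ to $J^+(o)$ before invoking the characteristic Cauchy theory.
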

 
\begin{proof}
By definition, if   $\Phi \in \sS(V)$,  $\Phi/u$ is
 the restriction to $V$ of a smooth function $f_\Phi$ defined in $\bM$ whose support does
not intersects $C$ so that we can always assume  $\supp(f_\Phi)$ under  the $t=1$ Cauchy surface. With this choice and 
extending  $\Phi/u$ to the zero function in $\partial J^+(o)\setminus V$, working in $J^+(o)$ and then restricting to $D$,
Theorems 5.4.1 and  5.4.2 in \cite{friedlander} straightforwardly imply  the thesis.  
\end{proof}

\noindent Notice that the extension of $\phi$ to $\overline{D}$ is uniquely determined by continuity. Coming back to our problem to invert $ L_m^{VD}$, we notice that
the only possible candidate $\phi$ for satisfying $\phi \in \sS_m(D)$ and $ L_m^{VD}(\phi) = \Phi$ 
with a given $\Phi \in \sS(V)$ is just the 
  $\phi$ of Proposition \ref{propgoursat}. If $\Sigma_D$ is a Cauchy surface of $D$ that extends to a Cauchy surface 
$\Sigma$ of $\bM$, there is however no guarantee for having $\phi \in \sS_m(D)$.
 Because the Cauchy data of $\phi$ on  $\Sigma$ may not have compact support in  $\Sigma_D$, as requested in the definition of $\sS_m(D)$. $\phi\spa\restriction_\Sigma$ 
may vanish  exactly on $C$ but not on $\Sigma_D$ and 
 $\partial_t\phi\spa\restriction_\Sigma$ may be strictly different from $0$ on the whole $\Sigma_D \cup C$. In two dimensional models, for $m=0$ similar counterexamples can be constructed very easily.
We conclude that  the injective symplectic map $ L_m^{VD} : S_m(D) \to \sS(V)$ is not surjective. 
A surjective restriction map can be defined by using a {\em larger} space of KG solutions, where $m\geq 0$ is fixed:
\beq
\widetilde{\sS}_m(D) \doteq \{\phi :  D \to \bR \:|\: (\nabla^a \nabla_a - m^2) 
 \phi =0\:, \phi = f\spa\restriction_{D}\ , \  f\in C_0^\infty(\bM)\ ,\ uf\spa\restriction_V \in \sS(V).\}
\eeq
Now, the map that naturally extends $ L_m^{VD}$:
\beq  L_m^{V\widetilde{D}} : \widetilde{\sS}_m(D) \in \phi   \mapsto u\lim_{\to V} \phi \in  \sS(V)\eeq 
is surjective in view of Proposition \ref{propgoursat}.  Following this way, and using again  $\sigma_D$ as symplectic form on $\widetilde{\sS}_m(D)$, 
$(\widetilde{\sS}_m(D), \sigma_D)$ turns out to be a well defined real symplectic space with a nondegenerate symplectic form and includes $(\sS_m(D), \sigma_D)$ as a symplectic subspace through
 the natural inclusion map that will be indicated as: \beq L_m^{\widetilde{D}D}: \sS_m(D) \to 
\widetilde{\sS}_m(D)\:.\eeq
Notice that $L_m^{VD} = L_m^{V\widetilde{D}} L_m^{\widetilde{D}D}$ by construction.

\begin{remark} 
{\bf (1)} The extension of $(\sS_m(D), \sigma_D)$ to $(\widetilde{\sS}_m(D), \sigma_D)$ is canonically determined by the requirement that 
$\widetilde{\sS}_m(D)$ contains the solutions of the Goursat problem with data in $S(V)$.\\
{\bf (2)} What we loose in the  extension of $(\sS_m(D), \sigma_D)$ to $(\widetilde{\sS}_m(D), \sigma_D)$ is the fact that
$(\widetilde{\sS}_m(D), \sigma_D)$ is not identifiable as a symplectic subspace  of $(\sS_m(\bM),\sigma_{\bM})$.
It is because  the KG solutions of $\widetilde{\sS}_m(D)$ do not extend into a unique canonical way 
to smooth KG solutions in the whole $\bM$ through a suitable extension of the embedding $L_{m}^{\bM D}$
 in remark \ref{gammam}.
That is extending the Cauchy initial data on $\Sigma_D$ as the zero data on $\Sigma\setminus \Sigma_D$, as we did defining $L_{m}^{\bM D}$.
 In the case of a generic element of $\widetilde{\sS}_m(D)$ this procedure gives rise to discontinuous Cauchy datum $\partial_t \phi\spa \restriction_\Sigma$ crossing 
 $C\subset \Sigma$, generating 
singularities of the $v$-derivative of the KG solutions, propagating along null geodesics emanating from $C$,
when interpreting the KG equation in a suitable weaker sense. Other smooth extensions to $\bM$ are however possible, but they depend on the initial 
element $\phi \in \widetilde{\sS}_m(D)$  and on  on arbitrary choices (as smoothing functions).
\end{remark}

\noindent To go on, proving that $L_m^{V\widetilde{D}} : \widetilde{\sS}_m(D) \to \sS(V)$ is injective and surjective, we state a technical lemma, that we 
shall use in several occasions later.
 Its proof is given in the appendix and it is based on uniqueness theorems for weak solutions of 
Klein-Gordon equation in suitable Sobolev spaces (to prove that (\ref{nFMM''})
 and (\ref{nFMM'})  hold if $t\geq 0$)
 and exploits the H\"ormander's theorem about propagation of singularities (to prove that they hold for $t\leq 0$).

\begin{lemma}\label{lemmafourier} If $\phi \in \widetilde{\sS}_m(D)$ with $m\geq  0$, for $(t,\bx)\in D$  it holds 
\begin{align}
\phi(t,\bx) & = \frac{1}{(2\pi)^{3/2}}\int_{\bR^3} \frac{d{\bk}}{\sqrt{2E({\bk})}}\ e^{i{\bk}\cdot {\bx} -itE(\bk)} \left( 
 \widehat{\phi}({\bk}) +\overline{ \widehat{\phi}(-{\bk})}
\right) \label{nFMM''}\\ 
\partial_t \phi(t,{\bx}) & = \frac{-i}{(2\pi)^{3/2}} \int_{\bR^3} d{\bk}\  \sqrt{2E({\bk})}\ e^{i{\bk}\cdot {\bx} - itE(\bk)} \left( 
 \widehat{\phi}({\bk}) - \overline{\widehat{\phi}(-{\bk})}
\right)
\label{nFMM'}\:,
\end{align}
where, for every fixed $t$, the integrals above have to be understood in the sense of the Fourier-Plancherel transform, 
or as standard integrals when $\phi \in \sS_m(D)$ and where:
\beq
 \widehat{\phi}({\bk}) \doteq \frac{e^{iE(\bk)}}{(2\pi)^{3/2}} \int_{\Sigma} 
\left(\sqrt{2E({\bk})} \phi(1,{\bx}) + i\sqrt{\frac{2}{E({\bk})}} \partial_{t} \phi (1, {\bx})\right)\ e^{-i{\bk}\cdot {\bx}}\ d{\bx}\:. \label{nFMM}
\eeq
Above
$\Sigma$ is the $t=1$ Cauchy surface of $\bM$ and $E({\bk}) \doteq 
\sqrt{m^2 + {\bk}^2}$.
\end{lemma}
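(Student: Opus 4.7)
My plan is to reduce the two identities to the classical Fourier representation of Klein-Gordon solutions with smooth compactly supported Cauchy data, and then to bootstrap to $\widetilde{\sS}_m(D)$ by a Fourier-Plancherel argument combined with microlocal propagation of singularities. The reference Cauchy surface throughout is $\Sigma=\{t=1\}$, whose intersection $\Sigma_D=\Sigma\cap D$ is the equatorial disk of the diamond, and which enjoys the useful property that $D$ is exactly its domain of dependence.

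The first step handles the sub-case $\phi\in \sS_m(D)\subset \widetilde{\sS}_m(D)$. Here $\Phi\doteq L_m^{\bM D}\phi$ lies in $\sS_m(\bM)$ and has Cauchy data $\varphi_0\doteq\Phi|_\Sigma$, $\varphi_1\doteq\partial_t\Phi|_\Sigma$ both in $C_0^\infty(\bR^3)$. Writing $\Phi$ in its standard positive/negative-frequency plane-wave expansion on $\bM$ with amplitudes determined by $(\widehat{\varphi_0},\widehat{\varphi_1})$ evaluated at $t=1$, and using the reality relation $\overline{\widehat{\varphi_j}(\bk)}=\widehat{\varphi_j}(-\bk)$, the formulas (\ref{nFMM''})--(\ref{nFMM}) follow by elementary algebraic rearrangement; all $\bk$-integrals converge absolutely because the amplitudes are Schwartz.

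For a general $\phi=f|_D\in \widetilde{\sS}_m(D)$, the functions $\varphi_0\doteq\phi(1,\cdot)$ and $\varphi_1\doteq\partial_t\phi(1,\cdot)$ are smooth and bounded on $\Sigma_D$ but generically fail to vanish on $C=\partial\Sigma_D$. Extending them by zero to $\Sigma$ produces compactly supported $L^2$-functions whose Fourier transforms lie in $L^2(\bR^3)$ by Plancherel; consequently (\ref{nFMM}) defines $\widehat\phi$ as an $L^2$-function (after checking the weight $1/\sqrt{E(\bk)}$ is harmless near $\bk=0$ also in the massless case). Reading the right-hand sides of (\ref{nFMM''})--(\ref{nFMM'}) as Fourier-Plancherel inverses at each fixed $t$, one gets an $L^2$-valued curve $\phi^\sharp(t,\cdot)$ with $t$-derivative $\eta^\sharp(t,\cdot)$, and a direct Fourier-space computation shows that $\phi^\sharp$ is a weak solution of $(\nabla^a\nabla_a-m^2)\phi^\sharp=0$ on $\bM$ with Cauchy data $(\varphi_0,\varphi_1)$ on $\Sigma$.

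It remains to identify $\phi^\sharp$ with $\phi$ pointwise on $D$, and this is where the dichotomy alluded to in the lemma appears. On the future side of $\Sigma$ inside $D$, uniqueness for weak solutions of the Klein-Gordon Cauchy problem in the Sobolev scale (of Lions--Magenes type) applied on the domain of dependence $D(\Sigma_D)=D$ forces $\phi^\sharp=\phi$. On the past side of $\Sigma$, the zero extension of the data generates jump singularities on $C$, and here H\"ormander's theorem on propagation of singularities is invoked: the wave-front set of the extended data is conormal to $C$, so $\mathrm{WF}(\phi^\sharp)$ is confined to the null bicharacteristic flow-out from $C$, which by the diamond geometry is exactly the light-like cones $V\cup W\subset\partial D$. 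Hence $\phi^\sharp$ is smooth inside $D$ and coincides there with $\phi$. The main obstacle is precisely this microlocal step: one has to verify that the non-smooth zero extension of boundary values on $C$ cannot produce singularities leaking into the open diamond, and this uses critically that $V$ and $W$ are null hypersurfaces ruled by null geodesics emanating from $C$.
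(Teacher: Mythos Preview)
Your overall strategy coincides with the paper's: the $\sS_m(D)$ case is classical, and for $\widetilde{\sS}_m(D)$ one defines the Fourier--Plancherel solution $\phi^\sharp$, identifies it with $\phi$ on one side of $\Sigma$ by Sobolev weak uniqueness, and on the other by propagation of singularities. There is, however, a concrete gap. You assert that $\varphi_0=\phi(1,\cdot)$ generically fails to vanish on $C$, and that zero-extending the data to $L^2(\bR^3)$ gives $\widehat\phi\in L^2$ by Plancherel. The first is false: the definition of $\widetilde{\sS}_m(D)$ forces $uf|_V\in\sS(V)$ to have compact support in $V$, hence $f|_V$ vanishes near $u=1$ and, by continuity, $\phi(1,\cdot)=0$ on $C=\partial\Sigma_D$. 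The second then does not follow from what you wrote: the weight $\sqrt{E(\bk)}$ multiplying $\widehat{\varphi_0}$ in (\ref{nFMM}) is unbounded at infinity, so $\widehat{\varphi_0}\in L^2$ alone is insufficient, and the Lions--Magenes uniqueness you invoke needs data in $H^1\times L^2$, not $L^2\times L^2$. The paper repairs both points precisely by the observation you missed: since $\varphi_0$ vanishes on $\partial\Sigma_D$ and is smooth up to the boundary, its zero extension lies in $H^1_0(\Sigma_D)\subset H^1(\bR^3)$, whence $E(\bk)\widehat{\varphi_0}\in L^2$ and $\psi\in C([1,T];H^1)\cap C^1([1,T];H^0)$ as required.

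The microlocal halves are also organised differently. You argue that $\mathrm{WF}$ of the extended data is conormal to $C$, so $\mathrm{WF}(\phi^\sharp)$ lies in the null flow-out $V\cup W$; this is correct in spirit but relies on the FIO parametrix for the Cauchy problem rather than on propagation of singularities proper. The paper's route is leaner and avoids analysing the data's wavefront set: it first extends $\phi$ from $D$ to all of $J^+(o)$ via the Goursat problem with characteristic datum $u^{-1}\Phi$ extended by zero on $\partial J^+(o)\setminus V$, then zero-extends to $[1,\infty)\times\bR^3$, checks this is a weak KG solution (the boundary term on $\partial J^+(o)$ vanishes), and applies Sogge's uniqueness on the full space to get $\psi=\phi$ on $D\cap\{t\ge 1\}$. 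With smoothness of $\psi$ established there, H\"ormander's theorem applied directly to the weak solution $\psi$ gives the past half immediately: any null bicharacteristic through a hypothetical singular point in $D\cap\{t<1\}$ must enter $D\cap\{t>1\}$, a contradiction. Your domain-of-dependence formulation of weak uniqueness is a legitimate alternative to the Goursat extension, but either way you need the $H^1$ control you omitted.
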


\begin{proposition}\label{propdeftjmath} For every fixed $m\geq 0$, the map that extends $ L_m^{VD} : \sS_m(D)\to \sS(V)$,
\beq  L_m^{V\widetilde{D}} :  \widetilde{\sS}_m(D) \ni \phi \mapsto  u \lim_{\to V} \phi \:, \label{tjmath}
\eeq
is a symplectic isomorphism from $(\widetilde{\sS}_m(D), \sigma_D)$
onto $(\sS(V),\sigma_V)$. Moreover, for every $t$
there is a sequence $\{\epsilon_n\}_{n\in \bN}\subset \bR_+$ with $\epsilon_n \to 0$ as $n\to +\infty$, such that:
\beq
\left((L_m^{V\widetilde{D}})^{-1} \Phi\right)(t,\bx) =  
\lim_{n\to +\infty} -2 \int_{\bS^2} d\omega \int_0^1du \Delta_m((t-i\epsilon_n,\bx), (u, u\omega)) 
u \partial_u \Phi(\omega,u) \:, \label{ntempE}
\eeq
almost everywhere in $\bx$.
\end{proposition}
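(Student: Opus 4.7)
The plan is to establish Proposition \ref{propdeftjmath} in three conceptual steps: bijectivity of $L_m^{V\widetilde{D}}$, preservation of the symplectic forms, and the explicit integral representation \eqref{ntempE}. For bijectivity, surjectivity is essentially a reformulation of Theorem \ref{propgoursat}: given $\Phi\in\sS(V)$, the Goursat theorem produces a KG solution $\phi$ on $D$ which is the restriction of some $f\in C_0^\infty(\bM)$ and satisfies $u\lim_{\to V}\phi=\Phi$; by continuity $uf\spa\restriction_V=\Phi\in\sS(V)$, so $\phi$ belongs to $\widetilde{\sS}_m(D)$ by its very definition, and $L_m^{V\widetilde{D}}\phi=\Phi$. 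Injectivity follows from the uniqueness clause of the same theorem applied to $\Phi\equiv 0$.

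For the symplectic preservation one would adapt the Stokes--Poincar\'e argument used in the proof of Proposition \ref{propdefjmath}, applied to the closed $3$-form $\eta$ built from the bilinear KG current and integrated over a region of $\overline{D}$ whose boundary decomposes as $V\cup(\Sigma\cap\overline{D})$. For elements of $\sS_m(D)\subset\widetilde{\sS}_m(D)$ the conclusion is already Proposition \ref{propdefjmath}; for the genuinely new $\phi_i\in\widetilde{\sS}_m(D)$, whose Cauchy data on $\Sigma_D$ need not be compactly supported in $\Sigma_D$, one observes that each $\phi_i$ is the restriction of a smooth compactly supported ambient function $f_i\in C_0^\infty(\bM)$ and that $\overline{\Sigma_D}$ is bounded, so both $\sigma_D(\phi_1,\phi_2)$ and $\sigma_V(L_m^{V\widetilde{D}}\phi_1,L_m^{V\widetilde{D}}\phi_2)$ are absolutely convergent, the lower-dimensional edge $C$ contributes nothing to the Stokes formula, and the equality $\sigma_D(\phi_1,\phi_2)=\sigma_V(L_m^{V\widetilde{D}}\phi_1,L_m^{V\widetilde{D}}\phi_2)$ follows verbatim.

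The last and most delicate step is \eqref{ntempE}. The guiding heuristic is the reproducing identity for the causal propagator, $\sigma_D(\Delta_m(x,\cdot),\phi)=\pm\phi(x)$, which combined with the symplectic preservation formally gives $\phi(x)=\pm\sigma_V(L_m^{V\widetilde{D}}\Delta_m(x,\cdot),\Phi)$. Expanding $\sigma_V$ and integrating by parts in $u$ collapses its two terms into a single one (the boundary contributions vanish because $\Phi$ is compactly supported away from $u=1$ and $\Phi(\omega,u)=O(u)$ near the tip $u=0$), yielding the integrand $-2\Delta_m\,u\partial_u\Phi$ appearing in \eqref{ntempE}. Since $\Delta_m(x,\cdot)$ is only distributional on $V$, the identity has to be regularized, which is precisely the role of the shift $(t,\bx)\mapsto(t-i\epsilon_n,\bx)$: analytic continuation into the lower half complex time plane picks out the positive-frequency (Wightman) boundary value of $\Delta_m$ and turns the integrand into a smooth function on $V$ for every $\epsilon_n>0$. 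To make the identification with $(L_m^{V\widetilde{D}})^{-1}\Phi$ rigorous one would then match the right-hand side of \eqref{ntempE} against the Fourier--Plancherel representation of Lemma \ref{lemmafourier}: after substituting the momentum expansion of $\Delta_m$ and using $L_m^{V\widetilde{D}}\phi=\Phi$, the $u$-integral on $V$ reduces to a rescaled inverse spatial Fourier transform producing exactly the amplitudes $\widehat{\phi}(\bk)+\overline{\widehat{\phi}(-\bk)}$ of \eqref{nFMM''}, which also accounts for the claimed almost-everywhere convergence in $\bx$. The main technical obstacle is precisely this last step: controlling the $\epsilon_n\downarrow 0$ limit and the order of integration rigorously, since without the damping the integrand is only oscillatorily (distributionally) convergent.
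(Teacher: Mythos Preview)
Your plan is correct and follows the same route as the paper: surjectivity from the Goursat theorem, the Stokes--Poincar\'e argument of Proposition~\ref{propdefjmath} for the symplectic identity, and for \eqref{ntempE} the Fourier--Plancherel representation of Lemma~\ref{lemmafourier} combined with the momentum expansion of $\Delta_m$, the $\epsilon$-regularization serving to turn the Plancherel integral into an ordinary one and to extract an a.e.-convergent subsequence. Two small points the paper makes explicit that your sketch glosses over: the identification of the $V$-integral $\int_{\bS^2\times[0,1]} e^{-i(\bk\cdot u\omega-Eu)}\,u\,\partial_u\Phi\,d\omega\,du$ with $\widehat{\phi}(\bk)$ is not a ``rescaled inverse spatial Fourier transform'' but another instance of the same Stokes--Poincar\'e argument applied to the plane-wave current $\eta_{\bk,\phi}$ of \eqref{3forma}; and the case $m=0$ is treated separately, since then $E(\bk)^{-1/2}\widehat{\phi}(\bk)$ is no longer bounded and the paper instead matches the right-hand side of \eqref{ntempE} directly against Friedlander's explicit Goursat solution via the closed form \eqref{delta0} of $\Delta_0$.
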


\begin{proof} Barring surjectivity, which is assured by Theorem \ref{propgoursat},  the proof of the rest of the first statement 
in the thesis is exactly the same as for Proposition \ref{propdefjmath}. Let us pass to the second statement for the case $m>0$. In view of the Lemma \ref{lemmafourier},
 (\ref{nFMM''}) 
 makes sense, at every fixed $t\in \bR$,  interpreted as Fourier-Plancherel
transforms.
By standard results of $L^2$-convergence, there must be a sequence of reals $\epsilon_n>0$ with $\epsilon_n \to 0^+$ as $n\to +\infty$, 
such that,  almost everywhere in $\bx$:
\beq \phi(t,{\bx})  = \lim_{n\to +\infty} \frac{1}{(2\pi)^{3/2}}\int_{\bR^3} \frac{d{\bk}}{\sqrt{2E({\bk})}}e^{i{\bk}\cdot {\bx} -i(t-i\epsilon_n) E(\bk)} \left( 
 \widehat{\phi}({\bk}) +\overline{ \widehat{\phi}(-{\bk})}
\right)\:, \label{epsilonn}\eeq
   where the integral in the right-hand side in interpreted as a standard Fourier integral (this is because, 
   if $m>0$, $E(\pm\bk)^{-1/2} |\widehat{\phi}(\bk)| \leq $ constant, as follows from (\ref{nFMM})). 
The right-hand of (\ref{nFMM})  is the integral over $\Sigma_D$ of the three-form
\beq  
\eta_{\bk,\phi}\doteq -i\frac{1}{3!} \sqrt{|g|} g^{ea} 
\left( \frac{e^{-ik_m x^m}} {(2\pi)^{3/2} \sqrt{2E}}  \partial_e \phi - \phi \partial_e    \frac{e^{-ik_m x^m}} {(2\pi)^{3/2} \sqrt{2E}}  \right) 
\epsilon_{abcd} dx^b \wedge dx^c \wedge dx^d \label{3forma}\ .
\eeq  
Dealing with as in the proof of Proposition \ref{propdefjmath}, employing Stokes-Poincar\'e's theorem and integrating by parts, we can rearrange the 
right-hand side of  (\ref{nFMM}), obtaining:
\beq 
 \widehat{\phi}({\bk}) = \frac{-2i}{(2\pi)^{3/2} \sqrt{2E(\bk)}} \int_{\bS^2\times [0,1]} \sp\sp\sp d\omega du \:
 ue^{-i({\bk} \cdot u \omega - E u)} \partial_u L_m^{V\widetilde{D}}(\phi) (\omega,u) 
\:. \label{nmode2}
 \eeq
 Inserting it in (\ref{epsilonn}), after some manipulations one gets 
$$
\phi(t,\bx) =  
\lim_{n\to +\infty} -2 \int_{\bS^2} d\omega \int_0^1du \Delta_m((t-i\epsilon_n,\bx), (u, u\omega)) 
u \partial_u L_m^{V\widetilde{D}} \phi \:, 
$$ 
 for every $t$, a.e. in $\bx$ where
  $\Delta_m(t,\bx,t',\bx')$ is a smooth function when analytically extended to complex values of $t$.
  Defining $\Phi \doteq L_m^{V\widetilde{D}} \phi$ (noticing that, due to the definition of $\widetilde{\sS}_m(D)$, $\Phi$ varies everywhere in $\sS(V)$ if $\phi$ ranges in
   $\widetilde{\sS}_m(D)$), we have $\phi = (L_m^{V\widetilde{D}})^{-1} \Phi$, so that the found identity is (\ref{ntempE}).
 In the case $m=0$  (\ref{ntempE}) is still true. Indeed,  using  the explicit expression 
 \beq \Delta_0((t-i\epsilon_n,\bx), (t', \bx')) = \frac{i}{4\pi^2}\left[  \frac{1}{\sigma((t+i\epsilon_n,\bx), (t', \bx'))} -
    \frac{1}{\sigma((t-i\epsilon_n,\bx), (t', \bx'))}\right]\:,\label{delta0}\eeq
after some straightforward manipulations, the right-hand side of (\ref{ntempE}) turns out  to coincide with
 the explicit solution of the Goursat problem for the massless Klein-Gordon equation with characteristic datum $\Phi/u$
 as represented in  (5.4.17) in \cite{friedlander}, independently from the choice of the sequence $\{\epsilon_n\}_{n\in \bN}$.  
 \end{proof}
 
 \subsection{Action of conformal Killing fields on massless KG solutions}\label{secmassless}
As is well-known (e.g. see \cite{CFT}), if  $\phi$ satisfies Klein-Gordon equation with $m=0$, then $\beta^Y_\tau \phi$ in (\ref{conf}) satisfies the same equation, provided that $Y$ 
is a conformal Killing vector and 
$\beta^Y_\tau \phi$ is well defined (the one-parameter group generated by $Y$ is only local in the general case). Let us specialize to the case of $Y=X$ given in (\ref{HL}).
In that case, the flow of $X$ leaves separately $D$ and $\partial D$ and $V$ invariant, in particular, giving rise to a well-defined action $\beta^X_\tau : C^\infty(D) \to C^\infty(D)$
for every $\tau\in \bR$. There is a similar action of $\beta^X_\tau$ on the functions defined on $V$. We want to study the interplay of these actions restricting
 to $\widetilde{\sS}_0(D)$
and $\sS(V)$ respectively. To this end we notice that from the definition (\ref{HL}) of $X$, if $(\omega^X_\tau(\omega,u),
u^X_\tau(\omega,u)) \doteq \Upsilon^X(\tau,(\omega,u))$, one has
 \beq
 X\spa\restriction_V(\omega,u) = u(u-1) \partial_u \quad \mbox{so that}\quad u^X_\tau(\omega, u) = \frac{u}{u+ e^\tau (1-u)}\:,\:\: 
\omega^X_\tau(\omega, u) = \omega\:. \label{XV}
 \eeq
 
Referring to (\ref{delta}), one easily proves from the first identity in (\ref{XV}) that, whenever a map
$f: \bS^2 \times (0,1) \to \bR$ is differentiable, then  $u \gamma^X (f) = -u X(f) - u(u-1) f$.
 If we define $\Phi \doteq uf$, the found identity 
implies that the infinitesimal generator $\gamma^X$ satisfies  $u \gamma^X(f) =  -X(\Phi)$. Passing form the infinitesimal  to the finite action one finds that
 transforming $f \to \beta_\tau^{X}(f)$  (interpreted as in (\ref{conf}) with $Y=X$) 
is equivalent to transform the associated $\Phi \doteq u f$ as $\Phi \to \beta_\tau^X(\Phi)$ where, with a harmless
misuse of notation, we define:
\beq
\left( \beta^{X}_\tau(\Phi)\right) (\omega, u) \doteq     \Phi\left(\omega, \frac{u}{u+e^{-\tau}(1-u)}  \right)\:.
\label{stau}
 \eeq 
 
 From now on, $\beta^{X}_\tau(\Phi)$ with $\Phi$ defined on $V$ will always mean (\ref{stau}) in spite of the definition (\ref{conf}) with $Y=X$,
   which we will only adopt for smooth functions defined 
 in the bulk $D$ or in the whole $\bM$. We have the following technical result.\\
 
 \begin{proposition} \label{symp} $\beta^{X}_\tau : \sS(V) \to \sS(V)$  is well-defined for every $\tau \in \bR$, giving rise to a one-parameter group of  symplectic isomorphisms
 of $(\sS(V), \sigma_V)$.
 \end{proposition}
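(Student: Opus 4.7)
The plan is to unpack the definition of $\beta^X_\tau$ given in (\ref{stau}) and verify three things in order: (i) $\beta^X_\tau$ actually sends $\sS(V)$ into $\sS(V)$, (ii) $\{\beta^X_\tau\}_{\tau\in\bR}$ is a one-parameter group, and (iii) each $\beta^X_\tau$ preserves $\sigma_V$. The crux is (i); once this is settled, (ii) and (iii) reduce to elementary algebraic and change-of-variable calculations.

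For (i), I first rewrite the scalar map $\phi_\tau : u \mapsto u/(u+e^{-\tau}(1-u))$ as $\phi_\tau(u) = u\, h_\tau(u)$ with $h_\tau(u) = 1/(u(1-e^{-\tau}) + e^{-\tau})$. Since the denominator is linear in $u$ with strictly positive endpoint values $e^{-\tau}$ and $1$, $h_\tau$ is smooth and strictly positive on $[0,1]$, and $\phi_\tau$ is a $C^\infty$ diffeomorphism of $[0,1]$ onto itself fixing $0$ and $1$; its unique pole, at $u_* = 1/(1-e^\tau)$, lies outside $[0,1]$. Given $\Phi = uf\restriction_V$ with $f \in C^\infty(\bM)$ and $\supp\Phi$ compact, my candidate for a bulk extension of $\beta^X_\tau(\Phi)/u$ is
\[
\tilde f(t,\bx) \doteq \chi(t)\, h_\tau(t)\, f\bigl(t\,h_\tau(t),\; h_\tau(t)\,\bx\bigr),
\]
where $\chi \in C_0^\infty(\bR)$ is equal to $1$ on a neighborhood of $[0,1]$ and supported in an interval around $[0,1]$ that does not meet $u_*$. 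Then $\tilde f \in C^\infty(\bM)$ by construction, and evaluating on $V$, where $(t,\bx)=(u,u\omega)$, yields $u\tilde f(u,u\omega) = \chi(u)\phi_\tau(u) f(\phi_\tau(u),\phi_\tau(u)\omega) = \Phi(\omega,\phi_\tau(u)) = \beta^X_\tau(\Phi)(\omega,u)$ since $\chi=1$ on $[0,1]$. Compactness of $\supp(\beta^X_\tau\Phi)$ in $V$ is immediate from $\phi_\tau$ being a diffeomorphism of $[0,1]$ fixing the endpoints together with compactness of $\supp\Phi$.

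For (ii), unrolling (\ref{stau}) gives $(\beta^X_{\tau_1}\beta^X_{\tau_2}\Phi)(\omega,u) = \Phi(\omega, \phi_{\tau_2}(\phi_{\tau_1}(u)))$, so the group property reduces to $\phi_{\tau_2}\circ\phi_{\tau_1} = \phi_{\tau_1+\tau_2}$, which either follows by direct algebraic manipulation or from the observation in (\ref{XV}) that $\phi_\tau$ agrees with the inverse flow of $X\restriction_V = u(u-1)\partial_u$. For (iii), in $\sigma_V(\beta^X_\tau\Phi_1,\beta^X_\tau\Phi_2)$ I substitute $u'=\phi_\tau(u)$, so that $du' = \phi_\tau'(u)\,du$ and $\partial_u(\beta^X_\tau\Phi_i)(\omega,u) = \phi_\tau'(u)(\partial_{u'}\Phi_i)(\omega,u')$; the factor $\phi_\tau'(u)$ combines with $du$ to yield $du'$, and since $\phi_\tau$ preserves the endpoints $0$ and $1$ the integration range is unchanged, so one recovers exactly $\sigma_V(\Phi_1,\Phi_2)$.

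I expect the main obstacle to be step (i), specifically producing the bulk extension $\tilde f$, because the condition ``$\Phi = uf\restriction_V$ with $f\in C^\infty(\bM)$" is an assertion about \emph{bulk} smoothness (encoding in particular the regularity of $\Phi$ at the tip $o$ of $V$), not mere smoothness on $V$; a pointwise rearrangement on $V$ is therefore not enough, and one has to produce an extension to $\bM$ that is genuinely smooth across the tip, which is handled above by the cut-off that steers clear of the pole of $h_\tau$ lying outside $[0,1]$.
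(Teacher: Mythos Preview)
Your proof is correct. The handling of (ii) and (iii) is essentially identical to the paper's, but your treatment of (i) takes a genuinely different route.

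For (i), the paper argues as follows: it replaces the conformal Killing field $X$ by $X'\doteq\chi X$ with $\chi\in C_0^\infty(\bM)$ equal to $1$ on $\overline{D}$; since $X'$ has compact support its flow is complete, so the induced action $\beta^{X'}_\tau$ maps $C_0^\infty(\bM)$ to itself, and on $\overline{D}$ it coincides with $\beta^X_\tau$. Then, writing $\Phi=uf\!\restriction_V$ with (after a harmless cutoff) $f\in C_0^\infty(\bM)$, one gets $\beta^X_\tau\Phi = u(\beta^{X'}_\tau f)\!\restriction_V$, exhibiting $g\doteq\beta^{X'}_\tau f\in C_0^\infty(\bM)$ as the required bulk extension. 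You instead bypass the bulk flow entirely: factoring $\phi_\tau(u)=u\,h_\tau(u)$ with $h_\tau$ smooth on a neighbourhood of $[0,1]$, you write down the explicit extension $\tilde f(t,\bx)=\chi(t)h_\tau(t)f(th_\tau(t),h_\tau(t)\bx)$ and verify by hand that it restricts correctly on $V$. Your construction is more elementary and self-contained; the paper's is more conceptual and has the advantage that the very same bulk action $\beta^{X'}_\tau$ is reused immediately afterwards in the proof of Theorem~\ref{teomz}, where one needs to transport massless KG solutions inside $D$, not merely boundary data on $V$.
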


 \begin{proof}
 Define the vector field $X'\doteq \chi X$ on $\bM$, where $\chi \in C_0^\infty(\bM)$
is such that $\chi=1$ on $\overline{D}$. As $X'$ has compact support, its integral lines are complete and the associated 
 one-parameter group of diffeomorphisms is global. Consequently the Jacobian $J_\tau^{X'}$ 
 does not vanish anywhere and thus it is strictly positive (as it is for $\tau=0$).
  Defining $\beta_\tau^{X'}(f)$ as in (\ref{conf}), if $f\in C_0^\infty(\bM)$ we get a well-behaved action for every value of
 $\tau\in \bR$ and $\beta_\tau^{X'}(f) \in C_0^\infty(\bM)$ for every $\tau\in \bR$. Furthermore, in $\overline{D}$, the action of $X'$ is the same as that of $X$.
If $\Phi \in \sS(V)$, $\Phi = u f\spa \restriction_V$ for some $f \in  C_0^\infty(\bM)$. Therefore, from the discussion before Proposition \ref{symp},
$\beta^X_\tau \Phi = u (\beta^{X'}_\tau f) \spa \restriction_V $ so that $\beta^X_\tau \Phi = u g\spa \restriction_V$ for $g= \beta^{X'}_\tau f \in C_0^\infty(\bM)$.
Finally, notice that computing the support in $V$,  $\supp(\beta^X_\tau \Phi)$ cannot reach the points at $u=1$ 
since the flow of $-X\spa \restriction_V$ transforms the support of  $\beta^X_\tau \Phi$ to the support of $\Phi$, but the points at $u=1$ 
are fixed for the flow of $-X$ and $\supp(\Phi)$ does not include those points by definition and thus they cannot belong to $\supp(\beta^X_\tau \Phi)$.\\ 
To prove that $\beta^{X}$ preserves the symplectic form $\sigma_V$, thinking of $V$ as the product $\bS^2 \times [0,1)$,
we notice that  the symplectic form $\sigma_V$  (\ref{sigmaV}) is trivially invariant under the (pull back) action of 
 diffeomorphisms of $[0,1]$ onto $[0,1]$ preserving the endpoints.
For each $\tau \in \bR$, the map $[0,1] \ni u \mapsto  u(u+e^{\tau}(1-u))^{-1}$ is just such a diffeomorphism, so that the bijective 
(as $\beta^{X}_\tau \beta^{X}_{-\tau} = \beta^{X}_{-\tau} \beta^{X}_\tau = id$) linear map 
$\beta^{X}_\tau : \sS(V) \to \sS(V)$ in (\ref{stau}) preserves the symplectic form.  
\end{proof}

\noindent We are now in place to state the theorem concerning the interplay of the action of $\beta^X$ on $D$
 and that on $V$ for massless KG solutions. Below, the index $0$ means ${m=0}$.\\

\begin{theorem}\label{teomz} $\beta^{X}_\tau : \widetilde{\sS}_0(D) \to \widetilde{\sS}_0(D)$  
is well-defined for every $\tau \in \bR$ giving rise to a one-parameter group of  symplectic isomorphisms
 of $(\widetilde{\sS}_0(V), \sigma_V)$. Moreover it holds
 \beq
 \beta^X_\tau \circ L_0^{V\widetilde{D}} =   L_0^{V\widetilde{D}} \circ  \beta^X_\tau\ , \quad \mbox{for all $\tau \in \bR$.} \label{commut}
\eeq
\end{theorem}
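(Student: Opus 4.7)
The plan is to first establish that $\beta^X_\tau$ sends $\widetilde{\sS}_0(D)$ into itself, then read off \eqref{commut} almost for free from the construction, and finally deduce the one-parameter group of symplectic isomorphisms property by combining \eqref{commut} with Propositions \ref{propdeftjmath} and \ref{symp}.

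For the well-definedness step I would repeat the extension trick used in the proof of Proposition \ref{symp}. Given $\phi \in \widetilde{\sS}_0(D)$, write $\phi = f\spa\restriction_D$ with $f \in C_0^\infty(\bM)$ satisfying $uf\spa\restriction_V \in \sS(V)$, and set $X' \doteq \chi X$ with $\chi \in C_0^\infty(\bM)$ equal to $1$ on an open neighborhood of $\overline{D}$. Since $X'$ has global flow, $\beta^{X'}_\tau f \in C_0^\infty(\bM)$ is well defined for every $\tau \in \bR$, and I would set $\beta^X_\tau \phi \doteq (\beta^{X'}_\tau f)\spa\restriction_D$. Because $X = X'$ on a neighborhood of $\overline{D}$ and the flow of $X$ is already global on $\overline{D}$, this restriction depends only on $\phi$ and not on the choice of $f$ or $\chi$, and in fact equals $J^X_\tau(\Upsilon^X(-\tau,\cdot))^{-1/4}\,\phi \circ \Upsilon^X(-\tau,\cdot)$ on $D$. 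The conformal invariance of the massless Klein-Gordon equation under $\beta^Y_\tau$ for conformal Killing $Y$ --- which is exactly what the exponent $-1/4$ in \eqref{conf} is engineered to produce in four dimensions --- then guarantees $\nabla^a \nabla_a(\beta^X_\tau \phi) = 0$ on $D$.

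The remaining condition $u(\beta^X_\tau \phi)\spa\restriction_V \in \sS(V)$ is precisely where the computation preceding \eqref{stau} intervenes: for $\Phi \doteq uf\spa\restriction_V$ one has the identity
\begin{equation*}
u\,(\beta^{X'}_\tau f)\spa\restriction_V \;=\; \beta^X_\tau \Phi
\end{equation*}
on $V$, with the right-hand side being the purely geometric action \eqref{stau}. By Proposition \ref{symp} this lies in $\sS(V)$, so $\beta^X_\tau \phi \in \widetilde{\sS}_0(D)$; and the very same identity reads $L_0^{V\widetilde{D}}(\beta^X_\tau \phi) = \beta^X_\tau L_0^{V\widetilde{D}}\phi$, which is exactly \eqref{commut}. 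The group law $\beta^X_{\tau_1+\tau_2} = \beta^X_{\tau_1}\beta^X_{\tau_2}$ and the invertibility $(\beta^X_\tau)^{-1} = \beta^X_{-\tau}$ would then follow from the composition law of $\Upsilon^X$ on $\overline{D}$ together with the multiplicative cocycle identity for $J^X_\tau$.

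Finally, symplecticity on $(\widetilde{\sS}_0(D), \sigma_D)$ comes essentially for free by conjugating via $L_0^{V\widetilde{D}}$: for any $\phi_1,\phi_2 \in \widetilde{\sS}_0(D)$,
\begin{equation*}
\sigma_D(\beta^X_\tau \phi_1, \beta^X_\tau \phi_2) = \sigma_V\bigl(\beta^X_\tau L_0^{V\widetilde{D}}\phi_1,\, \beta^X_\tau L_0^{V\widetilde{D}}\phi_2\bigr) = \sigma_V\bigl(L_0^{V\widetilde{D}}\phi_1,\, L_0^{V\widetilde{D}}\phi_2\bigr) = \sigma_D(\phi_1,\phi_2),
\end{equation*}
using Proposition \ref{propdeftjmath} at the first and last equalities and Proposition \ref{symp} at the middle one. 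The step I expect to require the most care is checking that $\beta^X_\tau$ really preserves the \emph{extended} space $\widetilde{\sS}_0(D)$ rather than only the smaller $\sS_0(D)$ --- that is, that the mild irregularity across the equator $C$ allowed in $\widetilde{\sS}_0(D)$ (cf.\ the remarks following Proposition \ref{propdeftjmath}) is not aggravated by the action. This is ultimately handled by the extension trick itself, since $\beta^{X'}_\tau$ manifestly preserves $C_0^\infty(\bM)$ while $X$ is tangent to $V \cup W$ and fixes $C$ pointwise.
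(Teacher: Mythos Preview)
Your proposal is correct and follows essentially the same route as the paper: the extension trick with $X'=\chi X$ to keep everything in $C_0^\infty(\bM)$, conformal invariance of the massless equation to preserve the PDE, the boundary identity $u(\beta^{X'}_\tau f)\spa\restriction_V=\beta^X_\tau\Phi$ together with Proposition~\ref{symp} to land back in $\widetilde{\sS}_0(D)$ and to read off \eqref{commut}, and finally the symplecticity via Propositions~\ref{propdeftjmath} and~\ref{symp}. Your write-up is in fact slightly more explicit than the paper's in spelling out the symplecticity chain and the group-law verification.
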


 \begin{proof} Since the flow of $X$ preserves $D$  and it transforms massless solutions of Klein-Gordon equation to massless solutions of Klein-Gordon equation,
 $\beta^{X}_\tau$ transforms smooth solutions of massless Klein-Gordon equation defined in $D$ to smooth solutions of massless Klein-Gordon equation defined in $D$. With the same procedure 
 as that exploited in the proof of Proposition \ref{symp}, using the vector field $X'$ instead of $X$,
 one sees that if $f'\in C^\infty_0(\bM)$,
 $\beta^{X'}_\tau(f')\in C^\infty_0(\bM)$ for every fixed $\tau$ and  $(\beta^{X'}_\tau(f'))\spa\restriction_D = \beta^{X}_\tau(f'\spa\restriction_D)$.
Those results imply that, if  $\phi \in \widetilde{\sS}_0(D)$, then $\beta^X_\tau\phi$ is a solution of Klein-Gordon equation in $D$ and it is the restriction to $D$ of some smooth 
compactly supported function defined on $\bM$. To prove that $\beta^X_\tau\phi\in \widetilde{\sS}_0(D)$ it is enough establishing  that the $u$-rescaled restriction
of $\beta_\tau^X \phi$ to $V$ belongs to $\sS(V)$.  By construction (since $X$ is everywhere defined on $\overline{D}$ and all the functions are jointly smooth) 
$u\varinjlim_ {V}\beta^X_\tau \phi =  \beta^X_\tau L_0^{V\widetilde{D}} \phi$,
so that $\beta^X_\tau \phi  \in \widetilde{\sS}_0(D)$, because $\beta^X_\tau L_0^{V\widetilde{D}} \phi \in \sS(V)$ in view of Proposition \ref{symp}.
 The found identity can alternately be written $L_0^{V\widetilde{D}} \beta^X_\tau \phi = \beta^X_\tau L_0^{V\widetilde{D}} \phi$ proving (\ref{commut}). The fact that
 $\beta_\tau^X$ preserves the symplectic form $\sigma_D$ follows from (\ref{commut}), Proposition \ref{symp}  and the fact that $L_0^{V\widetilde{D}} : 
 \widetilde{\sS}_0(D) \to \sS(V)$
 preserves the corresponding  symplectic forms as established  in Theorem \ref{propdeftjmath}.  
 \end{proof} 

\section{Algebras and  states for the bulk and the boundary}\label{sec4}
This section is devoted to study the interplay of the various algebras of observables 
associated  with the quantum Klein-Gordon field in $D$ and restricted to $V$, 
and some relevant states. We shall prove that, remarkably, the restriction $\omega_m$ of the standard Minkowski vacuum to
the bulk algebra  $\cW_m(D)$ corresponds to a unique state $\lambda$ on the boundary algebra $\cW(V)$,
independently from the mass $m>0$ of the field.

\subsection{Weyl algebras on $D$ and $V$ and their relations}
As $(\sS_m(\bM), \sigma_{\bM})$ is a real symplectic space with 
non-degenerate symplectic form, there is a unique (up to $*$-algebra isomorphisms) $C^*$-algebra,  
$\cW_m(\bM)$, with (nonvanishing) generators  $W_{m,\bM}(\phi)$ satisfying  Weyl relations \cite{BGP,BR1}, 
for every choice of $\phi_i,\phi \in \sS_{m}(\bM)$:
$$
W_{m,\bM}(\phi_1) W_{m,\bM}(\phi_2) = e^{\frac{1}{2}i\sigma_{\bM}(\phi_1,\phi_2)} W_{m,\bM}(\phi_1+\phi_2)\:, \quad W_{m,\bM}
(\phi)^* = W_{m,\bM}(-\phi)\:.
$$
This is the {\em Weyl algebra} of Klein-Gordon field on $\bM$. 
As is well known, it can be equivalently realized in terms of generators smeared with real functions $f \in C_0^\infty(\bM)$
when defining:
\beq 
W_{m,\bM}([f]) \doteq W_{m,\bM}(\Delta_{m,\bM} f) \quad \mbox{for all $f\in C_0^\infty(\bM)$,}\eeq

where $\Delta_{m,\bM}$ is the 
causal propagator introduced previously and whose support properties immediately imply that
locality holds,
\beq
\left[ W_{m,\bM}([f]), W_{m,\bM}([h])\right] =0\ , \label{locality} 
\eeq
whenever $\supp(f)$ and $\supp(h)$ are causally separated. 

Similar Weyl algebras can be constructed for the symplectic spaces $(\sS_m(D), \sigma_D)$,  $(\widetilde{\sS}_m(D), \sigma_D)$ and $(\sS(V), \sigma_V)$. The corresponding Weyl algebras will be denoted by $\cW_m(D)$, $\widetilde{\cW}_m(D)$, $\cW(V)$ with generators, respectively, $W_m(\phi)$, $\widetilde{W}_m(\phi)$ and $W(\Phi)$. 

Using well-known theorems for lifting morphisms properties of symplectic spaces to analog properties of the corresponding Weyl algebras \cite{BR2}, we have the straightforward result:

\begin{proposition}\label{propJ} For  $m\geq 0$ the following holds. There is an isometric $^*$-homomorphism $\ell_m^{\bM D} : \cW_m(D) \to \cW_m(\bM)$ uniquely individuated by the requirements
\beq
\ell_m^{\bM D} (W_m(\phi)) = W_{m,\bM}(L_m^{\bM D}\phi) \quad \mbox{for all $\phi \in \sS_m(D)$.}
\eeq
There is a unique isometric $^*$-homomorphism $\ell_m^{\widetilde{D}D} : \cW_m(D) \to \widetilde{\cW}_m(D)$ such that
\beq
\ell_m^{\widetilde{D}D} (W_m(\phi)) = \widetilde{W}_{m}(L_m^{\widetilde{D}D}\phi) \quad \mbox{for all $\phi \in \sS_m(D)$.} \label{JDTD}
\eeq
There is a unique isometric $^*$-homomorphism $\ell_m^{VD} : \cW_m(D) \to \cW(V)$  such that 
\beq
\ell_m^{VD} (W_m(\phi)) = W(L_m^{VD}\phi) \quad \mbox{for all $\phi \in \sS_m(D)$.} \label{alpha}
\eeq
There is a $^*$-isomorphism $\ell_m^{V\widetilde{D}} : \widetilde{\cW}_m(D) \to \cW(V)$ uniquely individuated by
\beq
\ell_m^{V\widetilde{D}} (\widetilde{W}_m(\phi)) = W(L_m^{V\widetilde{D}}\phi) \quad \mbox{for all $\phi \in \widetilde{\sS}_m(D)$.} \label{alpha1}
\eeq
Finally \beq \ell_m^{VD} =   \ell_m^{V\widetilde{D}}   \ell_m^{\widetilde{D}D}\label{coer}\:.\eeq
\end{proposition}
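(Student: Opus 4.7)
The plan is to apply the standard CCR/Weyl functoriality theorem (see \cite{BR2}): any linear map $T:(\mS_1,\sigma_1)\to(\mS_2,\sigma_2)$ between real symplectic spaces that preserves the symplectic forms lifts uniquely to a $*$-homomorphism $\ell_T$ between the associated Weyl $C^*$-algebras, characterized on generators by $\ell_T(W_1(\phi))=W_2(T\phi)$; when the target's symplectic form is non-degenerate (hence the Weyl algebra is simple), every such $*$-homomorphism is automatically isometric; when $T$ is a bijection, $\ell_T$ is a $*$-isomorphism. All four maps in the statement are obtained by feeding into this machinery the symplectic maps built in Section \ref{sec3}.

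Concretely, I would first recall that all four symplectic forms at stake ($\sigma_{\bM}$, $\sigma_D$, $\sigma_V$, and $\sigma_D$ on the extended space $\widetilde{\sS}_m(D)$) are non-degenerate, so the corresponding Weyl algebras $\cW_m(\bM)$, $\cW_m(D)$, $\widetilde{\cW}_m(D)$, $\cW(V)$ are simple and all $*$-homomorphisms between them are isometric. Then I would plug in the specific symplectic maps: $L_m^{\bM D}$ is an injective symplectic homomorphism by Remark \ref{gammam}, producing $\ell_m^{\bM D}$; $L_m^{\widetilde D D}$ is the natural injective inclusion (trivially symplectic since $\sigma_D$ on $\widetilde{\sS}_m(D)$ restricts to $\sigma_D$ on $\sS_m(D)$), producing $\ell_m^{\widetilde D D}$; $L_m^{VD}$ is an injective symplectic homomorphism by Proposition \ref{propdefjmath}, producing $\ell_m^{VD}$; finally $L_m^{V\widetilde D}$ is a symplectic \emph{isomorphism} by Proposition \ref{propdeftjmath}, and hence its lift $\ell_m^{V\widetilde D}$ is a $*$-isomorphism.

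For uniqueness of each lift I would invoke the fact that the Weyl generators $\{W_m(\phi)\}_\phi$ span a dense $*$-subalgebra of $\cW_m(D)$ (and similarly for the other algebras), so a $*$-homomorphism is determined by its values on these generators together with continuity. For the composition identity \eqref{coer}, I would observe that by construction $L_m^{VD}=L_m^{V\widetilde D}\circ L_m^{\widetilde D D}$, so applying $\ell_m^{V\widetilde D}\circ\ell_m^{\widetilde D D}$ to a generator $W_m(\phi)$ yields
\[
\ell_m^{V\widetilde D}\bigl(\widetilde W_m(L_m^{\widetilde D D}\phi)\bigr)=W\bigl(L_m^{V\widetilde D} L_m^{\widetilde D D}\phi\bigr)=W(L_m^{VD}\phi)=\ell_m^{VD}(W_m(\phi)),
\]
and the equality on generators extends by uniqueness (or by the density/continuity argument above) to the full algebra.

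There is really no serious obstacle here: the entire proposition is a translation of the symplectic-geometric content of Section \ref{sec3} into the $C^*$-algebraic language, and all the analytical work (construction and surjectivity of $L_m^{V\widetilde D}$ via the Goursat problem, injectivity of $L_m^{\bM D}$ and $L_m^{VD}$, symplecticity of the restriction to $V$) has already been carried out in Propositions \ref{propdefjmath} and \ref{propdeftjmath}. The mildest subtlety is ensuring non-degeneracy of $\sigma_D$ on the enlarged space $\widetilde{\sS}_m(D)$, which is however immediate from the same Stokes-type identity used in Proposition \ref{propdefjmath} combined with the fact that $L_m^{V\widetilde D}$ intertwines $\sigma_D$ and the non-degenerate $\sigma_V$.
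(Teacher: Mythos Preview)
Your proposal is correct and matches the paper's approach exactly: the paper dispatches this proposition in a single sentence, citing the standard lifting theorems for Weyl algebras in \cite{BR2}, and you have spelled out precisely that argument. One small slip: in your first paragraph you say isometricity follows when the \emph{target} symplectic form is non-degenerate, but it is simplicity of the \emph{domain} Weyl algebra (hence non-degeneracy of the source form) that forces the kernel to be trivial and the $*$-homomorphism isometric; since you immediately note that all four forms are non-degenerate, this does not affect the argument.
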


\begin{remark} Similarly to the case of symplectic spaces, the Weyl algebra $\widetilde{\cW}_m(D)$ cannot be naturally identified with a sub $C^*$-algebra of $\cW_m(\bM)$. However, this 
drawback  will be fixed at the level of von Neumann algebras as we are going to prove.
\end{remark}

\subsection{States on $\cW_m(D)$, $\cW(S)$, $\widetilde{\cW}_m(D)$ and their relations} Henceforth we adopt the definition of {\em quasifree state} as stated in \cite{Wald,KW} (with the conventions of the latter concerning the signs of the
 symplectic forms) and assume the reader 
is acquainted with the basic theory of those states and their GNS Fock representations.
 In the following  $\omega_m: \cW_m(D) \to \bC$ denotes the restriction to $\cW_m(D)$ of the standard Poincar\'e-invariant 
 quasifree vacuum state on $\cW_m(\bM)$ \cite{Haag},  obtained  by standard decomposition of the $\phi \in \sS_m(\bM)$ 
 into positive and negative frequency parts. Taking advantage from
 the embedding  $L_m^{VD} : \cW_m(D) \to \cW(V)$ (\ref{jmath}) and (\ref{alpha}),
we intend to prove that $\omega_m = \lambda \circ L_m^{VD}$, for a
quasifree 
state, {\em independently from the value of $m$},
$\lambda : \cW(V) \to \bC$ 
completely individuated by the requirements, if  $\Phi,\Phi' \in \sS(V)$,
\beq 
\lambda(W(\Phi)) \doteq e^{-\frac{\mu_\lambda(\Phi,\Phi)}{2}}\:,\label{lambda}
\eeq
\beq
 \mu_\lambda (\Phi, \Phi') \doteq  
  \Rea\left( \lim_{\epsilon \to 0_+} -
\int_{\bS^2} \!d\omega\!
\int_0^1\! du\! \int_0^1\! du'  \:\: \frac{\Phi(\omega,u) \Phi'(\omega,u')}{\pi (u-u'-i\epsilon)^2}\right) \:.\label{lambdaagg}
\eeq

\begin{proposition}\label{prop1.2} There exists a unique quasifree state $\lambda$ on $\cW(\sS(V))$ satisfying (\ref{lambda})-(\ref{lambdaagg}). 
 Its one-particle space structure $(\sK_\lambda, \sH_\lambda)$, which uniquely individuates the state 
is the following:
\begin{itemize}
\item[$(1)$] $\sK_\lambda : \sS(V) \to \sH_\lambda$ associates every $\Phi\in \sS(V)$ with the corresponding $\widehat{\Phi}$:
\beq\widehat{ \Phi}(\omega,k) = \frac{1}{\sqrt{2\pi}}\int_\bR du e^{iku} \Phi(\omega,u) \quad
 \mbox{for every $(\omega,k) \in \bS^2\times \bR_+$,} \eeq
where, in the integrand, we have extended $\Phi$ to the null function for  $u \not \in [0,1)$;
\item[$(2)$] The one-particle space $\sH_\lambda$ is the closure 
 of $\sK_\lambda (\sS) + i\sK_\lambda (\sS)$ in  $L^2(\bS^2 \times \bR_+, 2kd\omega dk)$.
 \end{itemize}
 \end{proposition}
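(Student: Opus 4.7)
The plan is to reduce the whole proposition to a single Fourier--Plancherel computation that simultaneously gives a meaning to the distributional kernel in (\ref{lambdaagg}), produces the one-particle Kay--Wald structure $(\sK_\lambda,\sH_\lambda)$, and verifies the analytic conditions that guarantee existence and uniqueness of a quasifree state with two-point function $\mu_\lambda+\tfrac{i}{2}\sigma_V$. Recall from \cite{KW} that a quasifree state on $\cW(V)$ is uniquely determined by, and exists whenever there exists, a real-linear map $\sK:\sS(V)\to\sH$ into a complex Hilbert space such that $\sK(\sS(V))+i\sK(\sS(V))$ is dense in $\sH$, $\textrm{Re}\langle\sK\Phi,\sK\Phi'\rangle=\mu(\Phi,\Phi')$, and $\textrm{Im}\langle\sK\Phi,\sK\Phi'\rangle=\tfrac{1}{2}\sigma_V(\Phi,\Phi')$. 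Thus, once the computation identifies the right-hand side of (\ref{lambdaagg}) with $\textrm{Re}\langle\sK_\lambda\Phi,\sK_\lambda\Phi'\rangle_{\sH_\lambda}$ and confirms the corresponding imaginary-part identity, existence, uniqueness and items $(1)$--$(2)$ all follow at once.

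The computational core is the distributional identity
\[
\lim_{\epsilon\to 0^+}\frac{-1}{\pi(u-i\epsilon)^2}\;=\;\cF^{-1}\!\bigl(\,k\mapsto \sqrt{2/\pi}\,k\,H(k)\,\bigr)(u),
\]
where $H$ is the Heaviside function and $\cF$ the unitary Fourier transform with the convention used in statement $(1)$. It follows by differentiating the Sokhotski--Plemelj relation $(u-i0^+)^{-1}=\textrm{P.V.}\,u^{-1}+i\pi\delta(u)$ and computing standard Fourier transforms. Inserting it into (\ref{lambdaagg}), Fubini in $(\omega,u,u')$ followed by Plancherel in $(u,u')$ yield
\[
\mu_\lambda(\Phi,\Phi')\;=\;\textrm{Re}\int_{\bS^2}\!d\omega\int_0^{+\infty}\!2k\,\overline{\widehat\Phi(\omega,k)}\,\widehat{\Phi'}(\omega,k)\,dk\;=\;\textrm{Re}\,\langle \sK_\lambda\Phi,\sK_\lambda\Phi'\rangle_{\sH_\lambda},
\]
which is exactly the content of $(1)$--$(2)$. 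That $\sK_\lambda\Phi\in L^2(\bS^2\times\bR_+,2k\,d\omega dk)$ for every $\Phi\in\sS(V)$ follows from Paley--Wiener (since $\Phi$ is smooth and compactly supported in $u\in[0,1)$, $\widehat\Phi(\omega,\cdot)$ extends to an entire function on $\bC$ that decays rapidly along the real axis, uniformly in $\omega$), while integrability against the weight $2k$ near $k=0$ is ensured by the bound $|\Phi(\omega,u)|\leq C_\Phi |u|$. Density of $\sK_\lambda\sS(V)+i\sK_\lambda\sS(V)$ in $\sH_\lambda$ is built into the definition.

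The identification of the imaginary part with $\tfrac{1}{2}\sigma_V$ is then a short computation. Using $\overline{\widehat\Phi(\omega,k)}=\widehat\Phi(\omega,-k)$ for real $\Phi$ and the oddness in $k$ of the resulting integrand, the $k$-integral extends to all of $\bR$; the intertwining $\widehat{\partial_u\Phi}(k)=-ik\widehat\Phi(k)$ together with Parseval, followed by an integration by parts in $u$ (legal because $\Phi(\omega,0)=0$ and $\Phi$ vanishes in a neighborhood of $u=1$), reproduces the symplectic form (\ref{sigmaV}). Once these two identities are in place, the Kay--Wald dominance $|\sigma_V(\Phi,\Phi')|^2\leq 4\mu_\lambda(\Phi,\Phi)\mu_\lambda(\Phi',\Phi')$ comes for free from Cauchy--Schwarz in $\sH_\lambda$, positivity of $\mu_\lambda$ is manifest, and both existence and uniqueness of $\lambda$ follow. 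The main technical delicacy I expect is the rigorous justification of the interchange of the $\epsilon\to 0^+$ limit with the iterated integrals in $u,u'$ and with Plancherel; this should succumb to dominated convergence based on the uniform-in-$\omega$ Schwartz decay of $\widehat\Phi,\widehat{\Phi'}$ supplied by Paley--Wiener together with the control near $k=0$ coming from $\Phi(\omega,0)=0$.
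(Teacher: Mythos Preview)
Your proposal is correct and follows essentially the same route as the paper: the paper isolates the core computation as Lemma~\ref{lemmalemme}, which establishes exactly your identity $\mu_\lambda(\Phi,\Phi')=\mathrm{Re}\int_{\bS^2\times\bR_+}2k\,\overline{\widehat\Phi}\,\widehat{\Phi'}\,d\omega\,dk$ (carrying out the limit exchange by integrating by parts to $(u-u'-i\epsilon)^{-1}$ and invoking the convolution and Plancherel theorems), then deduces the imaginary-part identity $-2\,\mathrm{Im}\,\lambda(\Phi,\Phi')=\sigma_V(\Phi,\Phi')$ and the Kay--Wald dominance via Cauchy--Schwarz, exactly as you outline. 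One small point you should make explicit: ``positivity of $\mu_\lambda$ is manifest'' gives only semidefiniteness; the paper closes the gap by noting that $\mu_\lambda(\Phi,\Phi)=0$ forces $\widehat\Phi=0$ on $\bR_+$, hence on all of $\bR$ by the reality constraint $\widehat\Phi(\omega,-k)=\overline{\widehat\Phi(\omega,k)}$, and thus $\Phi=0$ by injectivity of the Fourier transform.
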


\begin{proof}
Following Section  3.2 in \cite{KW}, a quasifree state is defined by the requirements (\ref{lambda})-(\ref{lambdaagg}) as stated in the thesis,  if $(i)$
 $\mu_\lambda$ is a  (real) positive-definite  scalar product on $\sS(V)$ and $(ii)$
$|\sigma_V(\Phi,\Phi')|^2 \leq 4\mu_\lambda (\Phi, \Phi)\mu_\lambda (\Phi', \Phi')$ for all $\Phi,\Phi' \in \sS(V)$.
Then the  one-particle structure space  $(\sK_\lambda,\sH_\lambda)$ is individuated, up to Hilbert-space isomorphisms, by
$\langle \sK_\lambda \Phi, \sK_\lambda \Phi' \rangle_{\sH_\lambda} = 
\mu_\lambda(\Phi,\Phi') - i\sigma_V(\Phi,\Phi')/2$ and  $\sH_\lambda$ is  
the completion of $\sK_\lambda (\sS(V)) + i\sK_\lambda (\sS(V))$ with respect to the so-defined  Hermitean scalar product.
We have to check the validity of (i) and (ii). 
 To this end, taking in general $\Phi,\Phi' \in \sS(V) + i \sS(V)$ we  study the well-posedness of 
 \beq
\lambda (\overline{\Phi}, \Phi') \doteq  \lim_{\epsilon \to 0_+} -\frac{1}{\pi}  
\int_{\bS^2\times [0,1]\times [0,1]} \sp\sp\sp  \sp\sp\sp d\omega du du' 
\frac{\overline{\Phi(\omega,u)} \Phi'(\omega,u')}{(u-u'-i\epsilon)^2}\:, \quad
\mbox{for $\Phi,\Phi' \in \sS(V)+ i \sS(V)$.} \label{lambda2}
\eeq
 $\Phi$ and $\Phi'$ are  $L^2(\bS^2\times \bR, d\omega du)$ functions such that, by construction,
   vanish uniformly in the angle $\omega$ as $u \to 0_+$, moreover
 the $u$-derivatives of $\Phi$ and $\Phi'$ belong to $L^2(\bS^2\times \bR, d\omega du)$ (in particular, as one sees by direct inspection, 
the $u$-derivatives turn out to bounded around $u=0$ because $u^{-1}\Phi$ and $u^{-1}\Phi'$ are restrictions to $V$ of smooth functions defined around $V$). Therefore 
 the following lemma (proved in the appendix), leads to the wanted well-posedness.

\begin{lemma} \label{lemmalemme}  Consider  $\Phi,\Phi' \in L^2(\bS^2\times \bR, d\omega du)
\cap C^1(\bS^2\times (0,+\infty); \bC)$ such that they  vanish for $u \in \bR \setminus [0,1)$ and  $\omega \in \bS^2$ and 
 one (or both) of the following facts holds:
\begin{itemize}
\item[$(a)$] $\partial_u \Phi \in L^2(\bS^2\times \bR, d\omega du)$ and  $\Phi(\omega,u) \to 0$ as $u \to 0_+$ uniformly in $\omega \in \bS^2$ ;
 \item[$(b)$] $\partial_u \Phi' \in L^2(\bS^2\times \bR,d\omega du)$ and  $\Phi'(\omega,u) \to 0$ as $u \to 0_+$ uniformly $\omega \in \bS^2$ ; 
 \end{itemize}
  then:
\beq  \lim_{\epsilon \to 0_+}  -\frac{1}{\pi} 
\int_{\bS^2\times [0,1]\times [0,1]} \sp\sp\sp d\omega  du du'\frac{\overline{\Phi(\omega,u)} \Phi''(\omega,u')}{(u-u'-i\epsilon)^2}
= \int_{\bS^2 \times \bR^+} d\omega dk 2k  \overline{\widehat{\Phi(\omega,k)}} \widehat{\Phi'(\omega,k)} \label{mod}\:.\eeq
\end{lemma}

\noindent Coming back to the validity of requirements $(i)$ and $(ii)$ we notice that,
in view of Lemma \ref{lemmalemme}, $\lambda(\overline{\Phi},\Phi')$ is well-defined and $\lambda(\overline{\Phi},\Phi) \geq 0$.
 Let us now restrict to the real case: $\Phi \in \sS(V)$ so that, in particular,
 $\mu_\lambda(\Phi,\Phi) = \Rea \lambda(\Phi,\Phi)  = \Rea \lambda(\overline{\Phi},\Phi) = \lambda(\Phi,\Phi)$.
 To prove (i) we only have  to prove the strict positivity of $\mu_\lambda$.
If $\mu_\lambda(\Phi,\Phi)=0$, (\ref{mod}) entails that $\widehat{\Phi}(\omega,k) = 0$, a.e, for $k>0$, $\omega \in \bS^2$.
On the other hand, as  $\widehat{\Phi}(\omega,-k) = \overline{\widehat{\Phi}(\omega,k)}$ because $\Phi$ is real, it must  hold
$\widehat{\Phi}(\omega,k) = 0$, a.e, for $k\in \bR$, $\omega \in \bS^2$. Since the Fourier-Plancherel transform is injective, it implies $\Phi =0$ a.e. and thus
$\Phi=0$ everywhere as $\Phi\in \sS(V)$ is continuous by definition. 
We pass to (ii) noticing that, from (\ref{lambdaagg}) by trivial manipulations of the right-hand side employing integration by parts and 
Sochockij's formula: 
\beq
-2 \Imm \lambda (\Phi, \Phi') = \sigma_V(\Phi,\Phi') \quad \mbox{for all $\Phi,\Phi' \in \sS(V)$.} \label{Im}
\eeq
If $(i)$ holds, as the Hermitean quadratic form $\lambda(\overline{\Phi},\Phi')$ is semi-positive defined, the  
Cauchy-Schwartz inequality holds and so  $|\Imm \lambda(\overline{\Phi},\Phi')|^2 \leq 
|\lambda(\overline{\Phi},\Phi')|^2 \leq \lambda(\overline{\Phi},\Phi)  \lambda(\overline{\Phi'},\Phi')$. Restricting again to the real case
$\Phi,\Phi' \in \sS(V)$ and exploiting  $\mu(\Phi,\Phi) = \lambda(\Phi,\Phi)$
and (\ref{Im}), $(ii)$ arises  immediately. To conclude, we observe that
 in view of the definition of $\sH_\lambda$,  $\mu_\lambda$, (\ref{lambda2}), (\ref{mod}) and (\ref{Im}), one has:
$\overline{\sK_\lambda \sS(V)+ i \sK_\lambda \sS(V)}= \sH_\lambda$ and  $\langle \sK_\lambda \Phi,\sK_\lambda \Phi'  \rangle_{\sH_\lambda}
= \mu_\lambda(\Phi,\Phi') - \frac{i}{2}\sigma_V(\Phi,\Phi')$ if $\Phi,\Phi'\in \sS(V)$. 
 We have then verified the hypotheses of Proposition 3.1 in \cite{KW} for which $(\sK_\lambda,\sH_\lambda)$ 
 defines (up to unitary equivalence) the one-particle space structure of $\lambda$.
\end{proof}

\noindent We are now  in place to establish the main theorem of this section.
The quasifree state $\omega_m$ on $\cW_m(D)$ is completely individuated by:
\beq
\omega_m(W_m(\phi)) \doteq e^{- \frac{1}{2}\mu_m(\phi,\phi)}\:,
\eeq
 \beq 
\mu_m(\phi_1,\phi_2) \doteq \Rea\left(\int_{\bR^3} d{\bk} \:\: \overline{\widehat{\phi_1}({\bk})}\widehat{\phi_2}({\bk})\right)\   ,  \qquad 
\forall \phi_1,\phi_2
\in \sS_m(D) \:,
\eeq
and this is equivalent to say that one-particle space structure of $\omega_m$ is $(\sK_m, \sH_m)$ where, for 
 $\widehat{\phi}$ as defined  in (\ref{nFMM}):
\beq \sK_m\phi \doteq  \widehat{\phi}\ ,\quad \phi \in \sS_m(D)\:,\quad
\:\: \sH_m \doteq  \overline{\sK_m(\sS_m(D))+ i \sK_m(\sS_m(D))} \subset L^2(\bR^3,d\bk).
\label{vacuum}
\eeq

\begin{theorem}\label{teobast} Let  $\omega_{m} : \cW_m(D) \to \bC$ be 
the restriction to $\cW_m(D) $ of the standard  Poincar\'e invariant vacuum of the real Klein-Gordon field 
with mass $m \geq  0$ in $\bM$. It holds:
\beq \omega_{m} =  \lambda \circ \ell_m^{VD} \label{omegam} \:.\eeq
\end{theorem}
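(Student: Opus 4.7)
Both $\omega_m$ and $\lambda\circ\ell_m^{VD}$ are quasifree states on $\cW_m(D)$ (the former by construction, the latter because $\lambda$ is quasifree by Proposition~\ref{prop1.2} and $\ell_m^{VD}$ is a $*$-homomorphism sending Weyl generators to Weyl generators), so the claim reduces by polarization to the single identity
\[
\mu_m(\phi,\phi)=\mu_\lambda(L_m^{VD}\phi,L_m^{VD}\phi),\qquad\phi\in\sS_m(D),
\]
equivalently to showing that $\sK_m\phi\mapsto\sK_\lambda(L_m^{VD}\phi)$ extends to a Hilbert-space isometry $\sH_m\to\sH_\lambda$ between the one-particle spaces.

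The main idea is to exploit positive-frequency analyticity. Using Lemma~\ref{lemmafourier} one splits the unique extension $L_m^{\bM D}\phi\in\sS_m(\bM)$ as $\phi=\phi^+ + \overline{\phi^+}$, with
\[
\phi^+(t,\bx)=\frac{1}{(2\pi)^{3/2}}\int\frac{d\bk}{\sqrt{2E(\bk)}}\,\widehat\phi(\bk)\,e^{i\bk\cdot\bx-iE(\bk)t},
\]
and correspondingly $\Phi:=L_m^{VD}\phi=\Phi^+ + \overline{\Phi^+}$ on $V$ with $\Phi^\pm(\omega,u):=u\,\phi^\pm(u,u\omega)$. Since the modes of $\phi^+$ carry only factors $e^{-iE(\bk)t}$ with $E(\bk)>0$, $\phi^+$ extends holomorphically to $\{\Imm t<0\}$; restricted to $V$ (where $t=u$) this yields a holomorphic extension of $\Phi^+$ to $\{\Imm u<0\}$. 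A tempered-distributional Paley--Wiener argument, in the Fourier convention $\widehat\Phi(\omega,k)=(2\pi)^{-1/2}\int e^{iku}\Phi\,du$ of Proposition~\ref{prop1.2}, then forces $\supp\widehat{\Phi^+}(\omega,\cdot)\subseteq[0,\infty)$ and symmetrically $\supp\widehat{\overline{\Phi^+}}(\omega,\cdot)\subseteq(-\infty,0]$. Consequently, for $k>0$, $\widehat\Phi(\omega,k)=\widehat{\Phi^+}(\omega,k)$, and Lemma~\ref{lemmalemme} gives
\[
\mu_\lambda(\Phi,\Phi)=\int_{\bS^2\times(0,\infty)}2k\,|\widehat{\Phi^+}(\omega,k)|^2\,d\omega\,dk.
\]

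To close the proof I would compute $\widehat{\Phi^+}(\omega,k)=-i\partial_k\widehat{F}(\omega,k)$ with $F(\omega,u):=\phi^+(u,u\omega)$, substitute the spectral representation of $\phi^+$, and carry out the $u$-integral (regularized by $u\to u-i\epsilon$ using the $\epsilon_n$-sequence of Proposition~\ref{propdeftjmath}) to produce a $\delta$-function enforcing the mass-shell relation $k=E(\bk)-\bk\cdot\omega$. Decomposing $\bk=k_\parallel\omega+\bk_\perp$ in the light-front variables adapted to $\omega$, one finds $k_\parallel^*=(k^2-m^2-|\bk_\perp|^2)/(2k)$, $E^*=(k^2+m^2+|\bk_\perp|^2)/(2k)$, and Jacobian $d\bk=(E^*/k)\,dk\,d^2\bk_\perp$ on the forward region $k>0$. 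An integration by parts in $k$ absorbing the $-i\partial_k$, followed by the inverse change of variables $(\omega,k,\bk_\perp)\mapsto\bk$, collapses $\int 2k\,|\widehat{\Phi^+}|^2\,d\omega\,dk$ into $\int_{\bR^3}|\widehat\phi(\bk)|^2\,d\bk=\mu_m(\phi,\phi)$, completing the proof.

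The hard part is the rigorous control of the last paragraph: the $u$-integral defining $\widehat{F}$ is only conditionally convergent and requires the $i\epsilon$-regularization, the Paley--Wiener step needs $\Phi^+$ to be handled as a tempered distribution (only the sum $\Phi^+ + \overline{\Phi^+}$ has compact $u$-support), and the interplay between the derivative of $\delta$ and the light-front Jacobian must be interpreted distributionally. Conceptually, what is being verified is the well-known mass-independence of the Wightman two-point function restricted to a null surface (here the past light cone $V$ of $o^+$), which is precisely the feature responsible for the $m$-independence of the boundary state $\lambda$.
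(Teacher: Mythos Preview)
Your route is genuinely different from the paper's. The paper reduces the claim to Lemma~\ref{lemmabastardo} and proves that lemma by going in the opposite direction: it first expresses $\widehat\phi(\bk)$ as an integral over $V$ via Stokes' theorem (eq.~(\ref{mode2})), substitutes this into $\int|\widehat\phi|^2\,d\bk$, and evaluates the resulting $\bk$-integral explicitly through Bessel functions. A delicate splitting of the $(\omega,\omega')$-integration into a near-diagonal region $B_\delta$ and an off-diagonal region $A_\delta$, together with a scaling argument, shows that only the diagonal $\omega=\omega'$ survives, and there the Bessel kernel collapses to $(u-u'-i\epsilon)^{-2}$.

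Your Paley--Wiener step has a concrete gap. The decomposition $\Phi=\Phi^++\overline{\Phi^+}$ holds on $V$, but to invoke Paley--Wiener you must extend $\Phi^\pm$ to all $u\in\bR$ via $\tilde\Phi^\pm(\omega,u):=u\,\phi^\pm(u,u\omega)$. For $u\geq 0$ one has $\tilde\Phi^++\overline{\tilde\Phi^+}=u\,L_m^{\bM D}\phi(u,u\omega)$, which vanishes for $u>1$ by finite propagation and so agrees with the zero-extension of $\Phi$. But for $u<0$ the point $(u,u\omega)$ lies on the \emph{past} light cone $\partial J^-(o)$, and since $\partial J^-(o)\subset J^-(o)\subset J^-(K)$ for the Cauchy-data support $K\subset\Sigma_D$, the solution $L_m^{\bM D}\phi$ is generically nonzero there. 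Hence $\tilde\Phi^++\overline{\tilde\Phi^+}\neq 0$ for $u<0$, while $\Phi$ is extended by zero; consequently $\widehat\Phi(\omega,k)\neq\widehat{\tilde\Phi^+}(\omega,k)$ for $k>0$, and the identification your argument relies on fails. A second, independent problem: the final ``change of variables'' $(\omega,k,\bk_\perp)\mapsto\bk$ maps a five-dimensional space to a three-dimensional one, so it cannot be a bijection. On a null \emph{plane} the direction is fixed and $(k,\bk_\perp)\leftrightarrow\bk$ is genuine, but on the cone the extra $\omega$-integration produces off-diagonal $\omega\neq\omega'$ terms in $|\widehat{\Phi^+}|^2$; your outline gives no mechanism for their cancellation, which is exactly what the paper's $A_\delta/B_\delta$ analysis supplies.
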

\begin{proof}
As all the involved states are  completely individuated by the real scalar products 
 $\mu_\lambda$ and $\mu_m$ and $\ell_m^{VD}$ satisfies (\ref{alpha}),  the thesis is a consequence of the
 following lemma whose very technical proof is given in the appendix.
\end{proof}
 
 \begin{lemma} \label{lemmabastardo}
If $m \geq 0$, for all $\phi_1,\phi_2 \in \sS_m(D)$ one has
\beq
 \int_{\bR^3} d{\bk} \:\: \overline{\widehat{\phi_1}({\bk})}\widehat{\phi_2}({\bk})  =  
 \lim_{\epsilon \to 0_+}-\frac{1}{\pi}  
\int_{\bS^2\times [0,1]\times [0,1]} \sp\sp\sp\sp\sp\sp\sp\sp\sp   d\omega  du du' \frac{L_m^{VD}(\phi_1)(\omega,u) L_m^{VD}(\phi_2)(\omega,u')}{(u-u'-i\epsilon)^2}\:, \label{id}
\eeq
where $\widehat{\phi}$ is  given 
in (\ref{nFMM}) for every $\phi \in \sS_m(D)$.
\end{lemma}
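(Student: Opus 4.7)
The plan is to apply Lemma \ref{lemmafourier} to re-express $\widehat{\phi_i}(\bk)$ via the boundary data $\Phi_i \doteq L_m^{VD}\phi_i \in \sS(V)$, substitute into the LHS of (\ref{id}), exchange integrations, and then evaluate the resulting kernel using a mass-independent change of variables on the mass shell.

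First, inserting formula (\ref{nmode2}) for both $\widehat{\phi_1}$ and $\widehat{\phi_2}$, and then exchanging the order of integration (with a suitable regularization, exploiting that by repeated integration by parts in $u$ the integrand is rapidly decreasing in $|\bk|$ thanks to the smoothness of $\Phi_i$ and the vanishing of their endpoint values), the LHS of (\ref{id}) becomes
\[
4\!\int_{(\bS^2\times[0,1])^2}\! uu'\,\partial_u\Phi_1(\omega,u)\,\partial_{u'}\Phi_2(\omega',u')\,W_m^+\bigl((u,u\omega),(u',u'\omega')\bigr)\,d\omega\,du\,d\omega'\,du',
\]
where $W_m^+$ denotes the positive-frequency Wightman two-point function of mass $m$ in $\bM$.

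The crux is then to evaluate $W_m^+$ on $V\times V$ as a distribution against the test data $\partial_u\Phi_1\otimes\partial_{u'}\Phi_2$. I would pass to light-cone coordinates on momentum space adapted to the reference direction $\omega$: writing $\bk = k_\parallel\omega + \bk_\perp$ with $\bk_\perp\perp\omega$ and $k^+ \doteq E(\bk) - k_\parallel$, the Lorentz-invariant measure on the mass shell takes the manifestly mass-independent form $d\bk/(2E(\bk)) = dk^+\,d^2\bk_\perp/(2k^+)$. Completing the square in $\bk_\perp$ in the phase of $W_m^+|_{V\times V}$ and performing the Fresnel $\bk_\perp$-integral, the mass-dependence collapses to a single surviving phase factor $\exp\bigl[iu'(1-\omega\cdot\omega')m^2/(2k^+)\bigr]$, while the remaining kernel carries a $1/(1-\omega\cdot\omega')$ singularity whose distributional pairing with $\partial_{u'}\Phi_2(\omega',\cdot)$ concentrates at $\omega=\omega'$, where precisely the mass-dependent phase becomes trivial. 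The surviving one-dimensional $k^+$-integral should then reproduce the Fourier representation of the kernel $-1/[\pi(u-u'-i\epsilon)^2]$ of the RHS of (\ref{id}).

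The hard part will be the careful distributional analysis of the transverse Fresnel integral and its interplay with the angular integration: the exchange of $\bk$-integration with the $(\omega',u')$-integration, the order of the various $i\epsilon$-limits, and the precise concentration at $\omega=\omega'$ require oscillatory-integral estimates that exploit the regularity of $\Phi_i$ near the tip $u=0$ and the compactness of their support away from $u=1$. This is the source of the technical complication of the argument.
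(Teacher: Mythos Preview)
Your starting point matches the paper's: both plug the boundary representation \eqref{nmode2} into the LHS and, after an $\epsilon$-regularization, identify the $\bk$-integral as (essentially) the massive Wightman two-point function restricted to $V\times V$. From there, however, the two routes diverge and your sketch has a gap at the decisive step.

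The paper does \emph{not} pass to light-cone momentum variables. It evaluates the $\bk$-integral in polar coordinates via a tabulated Bessel identity, obtaining the explicit kernel
\[
I(u,u',\omega,\omega',m,\epsilon)\;\propto\;\frac{m\,K_1\!\left(m\sqrt{2uu'(1-\cos\widehat{\omega\omega'})+2i\epsilon(u-u'-i\epsilon/2)}\right)}{\sqrt{2uu'(1-\cos\widehat{\omega\omega'})+2i\epsilon(u-u'-i\epsilon/2)}}\,.
\]
The crucial step --- reducing the double angular integral to the diagonal $\omega=\omega'$ --- is then achieved by splitting $\bS^2\times\bS^2$ into $A_\delta=\{1-\cos\widehat{\omega\omega'}>\delta\}$ and its complement $B_\delta$. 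On $A_\delta$ the paper uses a \emph{scaling argument}: the kernel depends on $u,u'$ only through $uu'$, so rescaling $u\to\lambda u$, $u'\to u'/\lambda$ leaves it invariant while sending $\partial_{u'}\Phi_2(\omega',\lambda u')\to 0$ pointwise (compact support), and dominated convergence kills this piece. On $B_\delta$ the small-argument expansion $mK_1(mz)/z = 1/z^2 - \tfrac{m^2}{4}\ln z^2 + O(1)$ isolates the mass-independent leading singularity; the subleading terms are integrable and vanish as $\delta\to 0$. A final round of integration by parts in $u$ and a careful $\epsilon\to 0^+$ analysis of the resulting nascent-delta in the angle yields the diagonal kernel $-\pi^{-1}(u-u'-i\epsilon)^{-2}$.

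Your proposed mechanism for the angular collapse is the weak point. After your Fresnel step the kernel indeed acquires a factor $\sim 1/(1-\omega\cdot\omega')$, but on $\bS^2$ this is only \emph{logarithmically} non-integrable (since $d\omega'\sim d(1-c)\,d\varphi$ near the diagonal), so it does not by itself ``concentrate'' the pairing at $\omega=\omega'$ in the way a $\delta$-function would. The actual collapse is a joint effect of the $i\epsilon$-structure and an integration by parts in $u$ (producing an $\epsilon$ in the numerator that \emph{does} act as a nascent angular delta), together with the vanishing of the genuinely off-diagonal piece via the scaling trick above --- none of which your sketch provides. Relatedly, your justification for interchanging the $\bk$- and $(\omega,u)$-integrations is not solid: integration by parts in $u$ brings down powers of $(E(\bk)-\bk\cdot\omega)^{-1}$, and this quantity is \emph{not} bounded below uniformly in $\bk$ (it behaves like $m^2/(2|\bk|)$ along $\bk\parallel\omega$), so you do not get rapid decay. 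The paper handles this simply by inserting an $e^{-\epsilon E}$ damping, invoking Fubini--Tonelli, and only afterwards passing $\epsilon\to 0^+$.

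In short: your light-cone/Fresnel route could in principle be made to work, and it correctly predicts that the mass-dependent phase is trivial on the diagonal, but the proof of diagonal concentration is the whole difficulty and your proposal does not supply one. The paper's scaling argument on $A_\delta$ is the missing idea.
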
 

\begin{remark} The statement of Theorem \ref{teobast}  resembles the uniqueness theorem proved in \cite{KW} for the restriction of the two-point function of a quasifree Hadamard states 
to bifurcate Killing horizons. However there is here an important difference that, in fact,  required a different proof: The conical light surface $V$ is not the union of the orbits 
of a causal Killing field.

\noindent The structure (\ref{lambda})-(\ref{lambdaagg}) of the state $\lambda$ arises in several different contexts, even in quantum field theories in curved spacetime and refers to 
conformal field theory and lightfront holography, where one deals with the ``restriction'' of states 
on null surfaces or with the procedure to induce states on the bulk algebra form states defined on the boundary
\cite{KW,H00,Sch03,Sch08,DMP1,DMP2,DMP3,BMRW,DPP}. In the second case, the states induced in the bulk by those individuated on the boundary with a form similar to  (\ref{lambda})-(\ref{lambdaagg}) turn out to be of Hadamard class \cite{H00,Mor2,DMP3,DMP4,P,DPP} and are invariant under the action of natural symmetries 
induced by the spacetime background \cite{DMP1,Mor1,DMP2,DPP}. In the presently discussed case these features are automatically satisfied as the state induced on the algebra of the bulk is  the Poincar\'e invariant vacuum. 
\end{remark}

\noindent To conclude we  pass from $\cW_m(D)$ to the larger algebra $\widetilde{\cW}_m(D)$
proving that it admits a natural extension of 
$\omega_m$  that coincides with $\lambda$
through the *-isomorphism $\ell_m^{\widetilde{D}M}$.

\begin{theorem} For $m> 0$, there exists a quasifree state $\widetilde{\omega}_m$ on $\widetilde{\cW}_m(D)$  individuated by the property,
\beq
\widetilde{\omega}_m(\widetilde{W}_m(\phi)) = e^{- \frac{1}{2}\widetilde{\mu}_m(\phi,\phi)}\quad\mbox{with}\quad
\widetilde{\mu}_m(\phi_1,\phi_2) \doteq \Rea\left(\int_{\bR^3} d{\bk} \:\: \overline{\widehat{\phi_1}({\bk})}
\widehat{\phi_2}({\bk})\right) \:,\label{tomegam}
\eeq 
for all $\phi_1,\phi_2\in\widetilde{\sS}_m(D)$ and where $\widehat{\phi}$ is given 
in (\ref{nFMM}). Moreover
$\widetilde{\omega}_m$ coincides with $\lambda$ through the *-isomorphism 
$\ell_m^{\widetilde{D}M} : \widetilde{\cW}_m(D) \to 
\cW(V)$:
\beq
\widetilde{\omega}_m \doteq \lambda \circ \ell_m^{\widetilde{D}M}\:. \label{omegatilde}
\eeq
\end{theorem}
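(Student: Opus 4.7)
The plan is to define $\widetilde{\omega}_m$ directly through the pullback formula (\ref{omegatilde}) and then verify that this state has the functional form (\ref{tomegam}) on the Weyl generators; existence and the coincidence with $\lambda$ will then be two aspects of the same construction. The first step exploits Proposition \ref{propJ}: $\ell_m^{V\widetilde{D}}$ is a $*$-isomorphism of $C^*$-algebras, hence setting $\widetilde{\omega}_m \doteq \lambda \circ \ell_m^{V\widetilde{D}}$ produces automatically a state on $\widetilde{\cW}_m(D)$. Moreover, since $\lambda$ is quasifree (Proposition \ref{prop1.2}) and $L_m^{V\widetilde{D}}:\widetilde{\sS}_m(D)\to \sS(V)$ is a linear symplectic isomorphism (Proposition \ref{propdeftjmath}), the pullback $\widetilde{\omega}_m$ is quasifree as well, and its one-particle structure is $(\sK_\lambda\circ L_m^{V\widetilde{D}}, \sH_\lambda)$.

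The second step is to identify the two-point function. Directly from the definition one has
\[
\widetilde{\omega}_m(\widetilde{W}_m(\phi)) \;=\; \lambda\bigl(W(L_m^{V\widetilde{D}}\phi)\bigr) \;=\; \exp\!\Bigl(-\tfrac{1}{2}\,\mu_\lambda\bigl(L_m^{V\widetilde{D}}\phi, L_m^{V\widetilde{D}}\phi\bigr)\Bigr),
\]
so it remains to show
\[
\mu_\lambda\bigl(L_m^{V\widetilde{D}}\phi_1, L_m^{V\widetilde{D}}\phi_2\bigr) \;=\; \widetilde{\mu}_m(\phi_1,\phi_2)\qquad \text{for all } \phi_1,\phi_2\in\widetilde{\sS}_m(D),
\]
with $\widetilde{\mu}_m$ expressed in momentum space through $\widehat{\phi}$ as in (\ref{nFMM}). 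This is precisely the statement of Lemma \ref{lemmabastardo}, but now extended from $\sS_m(D)$ to $\widetilde{\sS}_m(D)$.

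The third (and main) step is therefore the extension of Lemma \ref{lemmabastardo}. The key observation is that Lemma \ref{lemmafourier} already furnishes the Fourier representation (\ref{nFMM''})--(\ref{nFMM}) for the whole class $\widetilde{\sS}_m(D)$, with the integrals interpreted in the Fourier-Plancherel sense. The argument of Lemma \ref{lemmabastardo} proceeds via exactly this representation: using Stokes--Poincar\'e's theorem on the three-form $\eta_{\bk,\phi}$ defined in (\ref{3forma}) to rewrite $\widehat{\phi}(\bk)$ as the boundary integral in (\ref{nmode2}) (over $\bS^2\times [0,1]$ against $u\partial_u L_m^{V\widetilde{D}}\phi$), then applying Lemma \ref{lemmalemme} to identify the resulting momentum integral on $V$ with the $(k,\omega)$-space norm on $\bS^2\times \bR_+$. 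Both of these ingredients are available for $\phi\in\widetilde{\sS}_m(D)$: Stokes' theorem is derived in the course of proving Proposition \ref{propdeftjmath}, and the regularity hypotheses of Lemma \ref{lemmalemme} (compact support in $V$, uniform vanishing as $u\to 0_+$, square-integrable $u$-derivatives) are built into the definition of $\sS(V)$ and are thus automatic for $L_m^{V\widetilde{D}}\phi$.

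The anticipated obstacle is purely analytic: for $\phi\in\widetilde{\sS}_m(D)\setminus \sS_m(D)$, the Fourier transform $\widehat{\phi}$ given by (\ref{nFMM}) is only an $L^2(\bR^3,d\bk)$ function rather than a rapidly decaying one, so one must justify the interchange of $\epsilon\to 0$ limits, space integrals, and $\bk$-integrals in going from the boundary pairing on $V$ to the momentum-space pairing. The hypothesis $m>0$ plays the crucial simplifying role here, since $(2E(\bk))^{-1}$ is then bounded and the measure $d\bk/(2E(\bk))$ is comparable to $d\bk$; combined with the $L^2$-boundedness of the Hilbert transform implicit in (\ref{mod}), this ensures that all manipulations are in fact dominated convergence arguments on $L^2$-integrands. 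Once these estimates are in hand, the identity (\ref{id}) extends verbatim to $\widetilde{\sS}_m(D)$, yielding $\mu_\lambda(L_m^{V\widetilde{D}}\phi_1,L_m^{V\widetilde{D}}\phi_2)=\widetilde{\mu}_m(\phi_1,\phi_2)$; this simultaneously exhibits $\widetilde{\omega}_m$ as the quasifree state with two-point function (\ref{tomegam}) and establishes (\ref{omegatilde}).
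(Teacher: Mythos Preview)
Your proposal is correct and rests on the same key fact as the paper's proof: Lemma \ref{lemmabastardo} extends verbatim from $\sS_m(D)$ to $\widetilde{\sS}_m(D)$ (the paper states this explicitly, noting that ``just those hypotheses were used in the proof''). The only difference is the logical order. The paper first constructs $\widetilde{\omega}_m$ directly from $\widetilde{\mu}_m$ by checking the Kay--Wald conditions (positivity of $\widetilde{\mu}_m$, the symplectic bound, and the one-particle structure $(\widetilde{\sK}_m,\widetilde{\sH}_m)$ with $\widetilde{\sK}_m\phi=\widehat{\phi}$), and only then invokes the extended Lemma \ref{lemmabastardo} to identify $\widetilde{\omega}_m$ with $\lambda\circ\ell_m^{V\widetilde{D}}$. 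You reverse this: you obtain the state for free as a pullback through the $*$-isomorphism $\ell_m^{V\widetilde{D}}$, and then use the extended lemma to recognise its two-point function as $\widetilde{\mu}_m$. Your route is slightly more economical (no separate verification of the Kay--Wald conditions is needed), while the paper's route has the advantage of exhibiting $(\widetilde{\sK}_m,\widetilde{\sH}_m)$ explicitly, which is used later in Lemma \ref{lemmagoursat} and Proposition \ref{propZ}.

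One small caveat: your sketch of how Lemma \ref{lemmabastardo} is actually proved (``Stokes--Poincar\'e for $\eta_{\bk,\phi}$, then Lemma \ref{lemmalemme}'') is schematic rather than literal. The appendix proof does start from the boundary representation (\ref{mode2}) via Stokes, but then proceeds through an explicit $\bk$-integration yielding Bessel-$K_1$ kernels and a fairly delicate $\epsilon\to 0$ analysis, not through Lemma \ref{lemmalemme} directly. This does not affect the validity of your argument for the theorem---you only need that the proof goes through for $\widetilde{\sS}_m(D)$, and the paper confirms it does---but be aware that the analytic work is heavier than your summary suggests.
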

\begin{proof}
It is immediate to realize that, using again Section  3.2 in \cite{KW}, $(i)$
 $\widetilde{\mu}_m$ is a  (real) positive-defined  scalar product on $\widetilde{\sS}_m(D)$ and $(ii)$
$|\sigma_D(\phi,\phi')|^2 \leq 4\widetilde{\mu}_m (\phi, \phi)\widetilde{\mu}_m(\phi', \phi')$ for all $\phi,\phi' \in \widetilde{\sS}_m(D)$.
 Moreover if $\widetilde{\sK}_m\phi \doteq \widehat{\phi}$ given in (\ref{nFMM})
for $\phi \in \widetilde{\sS}_m(D)$ and we define the Hilbert space $\widetilde{\sH}_m \doteq \overline{\widetilde{\sK}_m(\widetilde{\sS}_m(D)) + i \widetilde{\sK}_m(\widetilde{\sS}_m(D))}$
in $L^2(\bR^3,d\bk)$,
it holds
  $\langle\widetilde{\sK}_m \phi,\widetilde{\sK}_m \phi'  \rangle_{\widetilde{\sH}_m}
= \widetilde{\mu}_m(\phi,\phi') - \frac{i}{2}\sigma_D(\phi,\phi')$ if $\phi,\phi'\in \widetilde{\sS}_m(D)$. 
 These are the hypotheses of Proposition 3.1 in \cite{KW} which entail that $(\widetilde{\sK}_m,\widetilde{\sH}_m)$ 
is (up to unitary equivalence) the one-particle space structure of $\widetilde{\omega}_m$.
Now the proof of (\ref{omegatilde}) is the same as that of (\ref{omegam}) since  Lemma 
\ref{lemmabastardo} is valid, with the same proof, also referring to $\phi_1,\phi_2 \in \widetilde{\sS}_m(D)$  replacing 
$L_m^{VD}$ by $L_m^{V\widetilde{D}}$. 
\end{proof}

\section{The modular groups for a double cone and its boundary}\label{sec5}
In view of Reeh-Schlieder's \cite{Haag}
and locality properties  (\ref{locality})  promoted to the von Neumann algebra level, the cyclic GNS vector $\Psi_\omega$ of the state $\omega_m$ is separating for the von Neumann algebra $\pi_\omega(\cW_m(D))''$
generated by the GNS representation of $\cW_m(D)$ in the GNS Hilbert space. In other words, $(\pi_m(\cW_m(D))'', \Psi_m)$ is in {\em standard form} and
thus it admits a unique modular group  \cite{hug, BR2,Haag}. We want to study it exploiting
 the link between modular theory and KMS theory \cite{hug,BR2,Haag}:
 If a state $\omega$ on the unital $C^*$-algebra  $\mA$ is KMS with inverse temperature $\beta \neq 0$,  then the pair $(\pi_\omega(\mA)'', \Psi_\omega)$
is in ``standard form'' and, if $U_\tau$ is the unitary implementing $\alpha_\tau$ in the GNS representation of $\omega$ leaving $\Psi_\omega$ fixed, the modular group is
$\sigma_\tau (A) = U_{-\beta \tau} A U^*_{-\beta \tau}$ for all $\tau \in \bR$ and  $A \in \pi_\omega(\mA)''$.

\subsection{The modular group for the boundary $V$}
We start our analysis building up the modular group associated with the von Neumann algebra of the 
GNS representation of   $\lambda$
defined on the algebra of the boundary $\cW(V)$.

 \begin{theorem} \label{T3} Consider the quasifree state $\lambda$ on $\cW(V)$ defined in (\ref{lambda}) and its
 GNS triple $(\cH_\lambda,\pi_\lambda, \Psi_\lambda)$.
 Let $\{\alpha^{X}_\tau\}_{\tau\in\bR}$ be the one-parameter group of $*$-automorphisms 
of $\cW(V)$ completely 
  individuated by the requirement
 \beq
 \alpha^{X}_\tau\left( W(\Phi)\right) =  W\left(\beta^{X}_\tau(\Phi)\right)\quad \mbox{for all $\Phi \in \sS(V)$,} \label{alphatau}
 \eeq
 with $\beta^X$ as in (\ref{stau}). Then the following hold:
 \begin{itemize}
\item[$(a)$]  $\lambda$  is invariant and $KMS$ with inverse temperature $\beta=2\pi$ with respect to $\alpha^{X}$.
\item[$(b)$]  The modular group of $(\pi_\lambda(\cW(V))'', \Psi_\lambda)$ is unitarily implemented by $U^{(\lambda)}_\tau\Psi_\lambda = \Psi_\lambda$ if $\tau \in \bR$
and $U^{(\lambda)}$ is the second-quantization of the one-parameter group of unitaries  $V_\tau^{(\lambda)} : \sH_\lambda \to \sH_\lambda$ for which,
\beq
 V^{(\lambda)}_\tau \sK_\lambda = \sK_\lambda \beta^{X}_\tau \quad\mbox{for all $\tau \in \bR$.} \label{ONE}
 \eeq
 \end{itemize}
 \end{theorem}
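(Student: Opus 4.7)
The plan is to introduce the lightray coordinate $y=\log(u/(1-u))$, a smooth bijection $\bR\to(0,1)$, under which the flow $\beta^{X}_{\tau}$ of (\ref{stau}) becomes the translation $y\mapsto y+\tau$ and the two-point function of $\lambda$ collapses to the standard thermal kernel at inverse temperature $2\pi$. Once this is in place, invariance and the KMS condition follow by inspection, and (b) is a direct consequence of the Takesaki correspondence between KMS states and modular theory, combined with a one-particle implementation argument.

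\textbf{Well-definedness and invariance (part (a), first half).} Proposition \ref{symp} shows that $\beta^{X}_{\tau}$ is a symplectic automorphism of $(\sS(V),\sigma_V)$ for every $\tau\in\bR$, with the group law inherited from the flow of $X$. The universal property of Weyl algebras over symplectic spaces (cf.~\cite{BR2}) then lifts this to a one-parameter group of $^{\ast}$-automorphisms $\alpha^{X}_{\tau}$ of $\cW(V)$ satisfying (\ref{alphatau}). Since $\lambda$ is quasifree, its invariance under $\alpha^{X}$ reduces to the invariance of the two-point form $\mu_\lambda-(i/2)\sigma_V$ under $\beta^{X}_{\tau}$. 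Invariance of $\sigma_V$ is part of Proposition \ref{symp}. For $\mu_\lambda$, the substitution $u=e^{y}/(1+e^{y})$ yields
\begin{align*}
u_1-u_2 &= \frac{2\,e^{(y_1+y_2)/2}\sinh((y_1-y_2)/2)}{(1+e^{y_1})(1+e^{y_2})}\,,\\
\frac{du_1\,du_2}{(u_1-u_2-i\epsilon)^2} &= \frac{dy_1\,dy_2}{4\sinh^{2}\!\bigl((y_1-y_2)/2-i\epsilon'\bigr)}\,,
\end{align*}
after a careful check that the $i\epsilon$-prescription is preserved (the map $y\mapsto u(y)$ is orientation preserving on $\bR$). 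Writing $\Psi_i(\omega,y)\doteq\Phi_i(\omega,u(y))$, the flow becomes $\Psi\mapsto\Psi(\cdot,\cdot+\tau)$ and the kernel above is manifestly translation invariant, proving $\lambda\circ\alpha^{X}_{\tau}=\lambda$.

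\textbf{KMS property (rest of part (a)).} In the $y$-variable, the kernel $\sinh^{-2}(z/2)$ with $z=y_1-y_2-i0^{+}$ is the standard thermal two-point function on $\bR$ at inverse temperature $2\pi$: it is meromorphic with minimal period $2\pi i$ and its only singularities in the closed strip $0\le\Im z\le 2\pi$ sit on the boundary. Consequently, for each $\Phi_1,\Phi_2\in\sS(V)$, the map $\tau\mapsto w_\lambda(\Phi_1,\beta^{X}_{\tau}\Phi_2)\doteq\mu_\lambda(\Phi_1,\beta^{X}_{\tau}\Phi_2)-\tfrac{i}{2}\sigma_V(\Phi_1,\beta^{X}_{\tau}\Phi_2)$ extends holomorphically to $0<\Im\tau<2\pi$, is bounded on the closed strip, and satisfies the KMS reflection $w_\lambda(\Phi_1,\beta^{X}_{\tau+2\pi i}\Phi_2)=w_\lambda(\Phi_2,\beta^{X}_{-\tau}\Phi_1)$ on the upper edge. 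Being quasifree, $\lambda$ then satisfies the full KMS condition on $\cW(V)$ at $\beta=2\pi$ (cf.~\cite{BR2}, Prop.~5.2.27).

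\textbf{Part (b) and main obstacle.} Because $\beta^{X}_{\tau}$ preserves the Hermitian form $\mu_\lambda-(i/2)\sigma_V$, the isometry $\sK_\lambda\Phi\mapsto\sK_\lambda\beta^{X}_{\tau}\Phi$ extends uniquely to a strongly continuous unitary one-parameter group $V^{(\lambda)}_{\tau}$ on $\sH_\lambda$; its standard second quantization $U^{(\lambda)}_{\tau}$ on the bosonic Fock space fixes the vacuum $\Psi_\lambda$ and implements $\alpha^{X}_{\tau}$ through $\pi_\lambda$. The Takesaki correspondence between KMS and modular theory (cf.~\cite{BR2}, Thm.~5.3.10), together with (a), then identifies the modular one-parameter group of $(\pi_\lambda(\cW(V))'',\Psi_\lambda)$ with (a reparametrization of) $\{\alpha^{X}_{\tau}\}$, implemented by the unitary group $\{U^{(\lambda)}_{\tau}\}$, yielding the stated formula. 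The main technical obstacle I anticipate is the $i\epsilon$ bookkeeping in the change of variables defining $\mu_\lambda$: one must justify exchanging the limit $\epsilon\to 0^{+}$ with the diffeomorphism $u\leftrightarrow y$, exploiting the regularity of $\Phi\in\sS(V)$ (smoothness away from the tip, vanishing like $u$ at $u=0$, and compact support away from $u=1$), and possibly passing through the Fourier characterization of $\mu_\lambda$ provided by Lemma \ref{lemmalemme}, in which translation covariance is automatic.
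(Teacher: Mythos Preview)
Your strategy is essentially the paper's own: pass to the coordinate $\ell=\log(u/(1-u))$ in which $\beta^X_\tau$ acts by translation, recognise the thermal kernel at $\beta=2\pi$, and invoke the KMS--modular correspondence. Two technical points where your sketch needs sharpening are exactly those the paper singles out. First, the $i\epsilon$ prescription does not transform as cleanly as your display suggests: a direct change of variable in $(u_1-u_2-i\epsilon)^{-2}$ yields $\bigl(\sinh\tfrac{\ell_1-\ell_2}{2}-i\epsilon\,\cosh\tfrac{\ell_1}{2}\cosh\tfrac{\ell_2}{2}\bigr)^{-2}$, not $\bigl(\sinh\tfrac{\ell_1-\ell_2}{2}-i\epsilon'\bigr)^{-2}$, and replacing the $\cosh$-weighted regulariser by a bare $i\epsilon$ is a genuine distributional statement which the paper isolates as Lemma~\ref{lemmaaggiunto}. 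Second, that lemma applies only to compactly supported test functions, whereas elements of $\sS(V)$ merely decay like $e^{\ell}$ as $\ell\to-\infty$; the paper therefore first works on $C_0^\infty(\bS^2\times\bR)$ and then closes up via two density lemmas (Lemmas~\ref{lemmaA} and~\ref{lemmaA2}). Finally, rather than verifying KMS by analytic continuation of the position-space kernel as you propose, the paper passes to the Fourier-conjugate variable $h$, where the one-particle implementer is multiplication by $e^{ih\tau}$ and an explicit antilinear $j$ satisfying the KMS conditions (1)--(3) can be written down by hand; this is equivalent to your route but sidesteps controlling analytic continuation against non-compactly-supported data.
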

 
\noindent {\em Sketch of the proof (details in the Appendix)}.  Our strategy consists in constructing a quasifree state $\lambda'$ on $\cW(V)$ which admits $(\sK'_\lambda, \sH'_\lambda)$  
as one-particle space structure with 
\begin{align}
 &\sH'_\lambda \doteq   L^2(\bS^2 \times \bR; m(h)  d\omega dh)\quad\mbox{where}\quad
 m(h)  \doteq  \frac{h e^{2\pi h}}{e^{2\pi h}-e^{-2\pi h}}\:, \label{mu}\\
& \left(\sK'_\lambda(\Phi)\right)(\omega,h) \doteq \int_\bR d\ell
\frac{e^{i\ell h}}{\sqrt{2\pi}}\Phi(\omega,\ell) \doteq  \widetilde{\Phi}(\omega,h) \:,\label{Klambda}
\end{align}
 where $\Phi$  has been represented in coordinates $(\omega,\ell) \in \bS^2 \times \bR$, with  $u\doteq  1/(1+ e^{-\ell})$, over $V$.
Then we prove that  $\alpha^{X}$ is implemented in the bosonic Fock space $\cH'_\lambda$ by second-quantization of the strongly continuous one-parameter group of unitary operators
in the one-particle space $\sH'_\lambda$ defined by
\beq
V^{(\lambda')}_\tau \doteq e^{i \tau \hat{h}} \quad \mbox{with $(\hat{h}\psi)(\omega,h) \doteq h\psi(\omega,h)$ for  $(\omega,h) \in \bS^2 \times \bR$} \label{Vs}
\eeq
($\Dom(\hat{h})$ is the usual one for multiplicative operators). We will establish that $\lambda'$ is KMS with respect to $\alpha^X$ with $\beta = 2\pi$, 
so that the second-quantization of $V^{(\lambda')}$ (which satisfies  $V^{(\lambda')}_\tau \sK'_\lambda = \sK'_\lambda \beta^{X}_\tau$)
defines the modular group of $(\pi_{\lambda'}(\cW(V))'',\Psi_{\lambda'})$ in view of the link between KMS condition and modular group outlined at the beginning of Section \ref{sec5}. 
 Finally we will prove  that  $(\sK'_\lambda, \sH'_\lambda)$  is unitarily equivalent to that of $\lambda$, so that $\lambda=\lambda'$ 
in view of Proposition 3.1 in \cite{KW}.  In this picture, $V_\tau^{(\lambda)}$ is the unitary
 corresponding to $V_\tau^{(\lambda')}$ in the structure $(\sK_\lambda, \sH_\lambda)$. $\Box$

\subsection{The modular group of the double cone and its relation with that of $V$}
From Proposition \ref{propJ}, we know that, for every $m\geq 0$,  $\cW_m(D)$ identifies to a sub algebra of $\cW(V)$ in view of the  existence of the isometric $*$-homomorphism
$\ell_m^{VD}$  and that  $\widetilde{\cW}_m(D)$ identifies to the whole algebra of $\cW(V)$
due to the $*$-isomorphism $\ell_m^{V\widetilde{D}}$, but
it does not hold that $\ell_m^{VD}(\cW_m(D)) = \cW(V)$ and $\ell_m^{\widetilde{D}D}(\cW_m(D)) = \widetilde{\cW}_m(D)$.
However,  the three algebras $\cW_m(D), \widetilde{\cW}_m(D),  \cW(V)$ do coincide when promoted to von Neumann algebras in the GNS representations of 
$\omega_m$, $\widetilde{\omega}_m$ and $\lambda$ as we are going to discuss.  We start with a fundamental technical lemma proved in the appendix.

\begin{lemma}\label{lemmagoursat} For $m\geq 0$, if $\Phi \in \sS(V)$ and defining the corresponding
 $\phi \doteq  (L_m^{V\widetilde{D}})^{-1}(\Phi)$, so that $\phi \in \widetilde{S}(D)$,
there is a sequence $\{\phi_n\}_{n\in \bN} \subset \sS_m(D)$ such that both the limits hold
$\sK_\lambda L_m^{VD}(\phi_n) \to \sK_\lambda \Phi$ in $\sH_\lambda$ and 
$\widetilde{\sK}_m L_m^{\widetilde{D}D}(\phi_n) \to \widetilde{\sK}_m \phi$ in $\widetilde{\sH}_m$
as $n\to +\infty$, so that, in particular:
 \beq
 \widetilde{\sH}_m=\sH_m \quad \mbox{and}\quad \sK_m = \widetilde{\sK}_m L_m^{\widetilde{D}D}\:.\label{HHJJ}
 \eeq
 \end{lemma}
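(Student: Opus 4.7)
The plan is to exploit the isometry
\[
\widetilde{\mu}_m(\psi,\psi) = \mu_\lambda\bigl(L_m^{V\widetilde{D}}\psi,\, L_m^{V\widetilde{D}}\psi\bigr)\,,\qquad \psi\in \widetilde{\sS}_m(D)\,,
\]
which is the extension of Lemma~\ref{lemmabastardo} established in the proof of the last theorem of Section~\ref{sec4}. Applied to $\psi = \phi - L_m^{\widetilde{D}D}\phi_n\in \widetilde{\sS}_m(D)$, this reduces the two convergences requested by the lemma to a single one: it suffices to find $\phi_n\in\sS_m(D)$ with $\widetilde{\sK}_m L_m^{\widetilde{D}D}\phi_n\to \widetilde{\sK}_m\phi$ in $\widetilde{\sH}_m\subseteq L^2(\bR^3, d{\bk})$, since the convergence $\sK_\lambda L_m^{VD}\phi_n\to\sK_\lambda\Phi$ in $\sH_\lambda$ then follows for free from the identities $L_m^{VD}=L_m^{V\widetilde{D}}L_m^{\widetilde{D}D}$ and the isometry above.

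\textbf{Construction.} Let $\Sigma$ be the Cauchy surface $t=1$ used in Lemma~\ref{lemmafourier}, set $\Sigma_D=\Sigma\cap D$ (the open ball $\|{\bx}\|<1$), and $C=\partial\Sigma_D$. Denote by $(\phi_0,\dot\phi_0)\doteq(\phi,\partial_t\phi)\spa\restriction_{\Sigma_D}$ the Cauchy data of $\phi\in\widetilde{\sS}_m(D)$. A crucial observation is that the defining condition $u\phi\spa\restriction_V\in\sS(V)$ forces $\Phi$, and hence $\phi$, to vanish as one approaches $C$ along $V$; by continuity one gets $\phi_0\spa\restriction_C=0$. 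Consequently the extension by zero of $\phi_0$ to all of $\Sigma$ lies in $H^{1/2}(\Sigma)$, while the extension of $\dot\phi_0$, which generically does not vanish on $C$, still lies in $L^2(\Sigma)\subset H^{-1/2}(\Sigma)$. Choose $\chi_n\in C_0^\infty(\Sigma_D)$ with $0\le\chi_n\le 1$ and $\chi_n\equiv 1$ on compact sets exhausting $\Sigma_D$, and let $\phi_n\in\sS_m(D)$ be the KG solution in $D$ with compactly supported data $(\chi_n\phi_0,\chi_n\dot\phi_0)$.

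\textbf{Convergence.} By formula~(\ref{nFMM}),
\[
\widehat{\phi-\phi_n}({\bk})=\frac{e^{iE({\bk})}}{(2\pi)^{3/2}}\int_{\Sigma_D}\!\Bigl(\sqrt{2E({\bk})}\,(1-\chi_n)\phi_0({\bx}) + i\sqrt{2/E({\bk})}\,(1-\chi_n)\dot\phi_0({\bx})\Bigr)e^{-i{\bk}\cdot{\bx}}\,d{\bx}\,.
\]
Since $\widetilde{\mu}_m(\psi,\psi) = \|\widehat\psi\|^2_{L^2(\bR^3,d{\bk})}$ and the weights $\sqrt{2E({\bk})}^{\pm 1}$ convert the $L^2(\bR^3,d{\bk})$ norm of the right-hand side into a combination of the $H^{1/2}(\Sigma)$-norm of $(1-\chi_n)\phi_0$ and the $H^{-1/2}(\Sigma)$-norm of $(1-\chi_n)\dot\phi_0$, one is left to show that these two quantities tend to zero as $n\to \infty$. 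The former convergence uses $\phi_0\in H^{1/2}_0(\Sigma_D)$ (no boundary trace) together with the density of $C_0^\infty(\Sigma_D)$ in $H^{1/2}_0(\Sigma_D)$; the latter follows from $\dot\phi_0\in L^2(\Sigma_D)$ via dominated convergence in $L^2$, which implies convergence in the weaker $H^{-1/2}(\Sigma)$ topology.

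\textbf{Consequences and main obstacle.} The equality $\sK_m=\widetilde{\sK}_m L_m^{\widetilde{D}D}$ is immediate from the uniform formula~(\ref{nFMM}) for $\widehat\phi$, which depends only on the Cauchy data and not on whether $\phi$ is viewed in $\sS_m(D)$ or in $\widetilde{\sS}_m(D)$. The inclusion $\sH_m\subseteq\widetilde{\sH}_m$ is then trivial, and the reverse one follows from the approximating sequence: every $\widetilde{\sK}_m\phi$ is an $L^2(\bR^3,d{\bk})$-limit of elements $\sK_m\phi_n\in\sH_m$, and $\sH_m$ is closed. The technical heart of the argument is the identification $\phi_0\spa\restriction_C=0$: without it, the extension by zero of $\phi_0$ would not belong to $H^{1/2}(\Sigma)$ and the cutoff approximation would fail to be Cauchy in $\widetilde{\mu}_m$. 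This boundary vanishing is the bridge linking the defining condition $\Phi\in\sS(V)$ at the characteristic surface with the Sobolev regularity of the bulk Cauchy data needed for the Fourier-Plancherel estimates.
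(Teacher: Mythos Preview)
Your approach matches the paper's: approximate the Cauchy data of $\phi$ on $\Sigma_D$ by compactly supported smooth data and use the extended Lemma~\ref{lemmabastardo} to identify $\|\widetilde{\sK}_m(\phi-L_m^{\widetilde{D}D}\phi_n)\|_{\widetilde{\sH}_m}$ with $\|\sK_\lambda(\Phi-L_m^{VD}\phi_n)\|_{\sH_\lambda}$, so that the two convergences collapse to one. The paper works in $H^1_0(\Sigma_D)\times L^2(\Sigma_D)$ and invokes abstract density rather than explicit cutoffs, while you use the sharper pair $H^{1/2}\times H^{-1/2}$ with a concrete cutoff sequence; both correctly single out $\phi_0\spa\restriction_C=0$ as the decisive regularity input.

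One point needs attention. For $m=0$ the weight is $E({\bk})^{-1}=|{\bk}|^{-1}$, unbounded at the origin, so your claim that $L^2$ convergence of $(1-\chi_n)\dot\phi_0$ implies convergence in the relevant weighted norm is not automatic: the pertinent norm is $\int|{\bk}|^{-1}|\cdot|^2\,d{\bk}$, and the embedding $L^2\hookrightarrow H^{-1/2}$ you invoke fails for this homogeneous weighting. The paper treats $m=0$ separately by arranging the $\dot\phi_0$-approximants to be uniformly bounded, so that their Fourier transforms are uniformly bounded near ${\bk}=0$ and dominated convergence (using that $|{\bk}|^{-1}$ is locally integrable in $\bR^3$) handles the low-frequency part. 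Your cutoff construction already gives $|\chi_n\dot\phi_0|\le|\dot\phi_0|$, so the same fix applies immediately---but it should be stated. A secondary remark: invoking ``density of $C_0^\infty$ in $H^{1/2}_0$'' does not by itself show that the \emph{particular} sequence $\chi_n\phi_0$ converges; either argue this directly (it does, since $\phi_0$ vanishes to first order on $C$, which controls $|\nabla\chi_n|\,|\phi_0|$ in $L^2$), or follow the paper and take an abstract approximating sequence from the start.
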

 
A consequence of the lemma concerns the existence of a unitary $Z_m$ that implements the (nonsurjective) embedding $\ell_m^{VD}: \cW_m(D) \to \cW(V)$ 
at the level of the GNS representations. 

\begin{proposition}\label{propZ}  For $m\geq 0$, consider the states 
$\lambda$ on $\cW(V)$,  $\omega_m$ on $\cW_m(D)$ and  $\widetilde{\omega}_m$ on $\widetilde{\cW}_m(D)$ with
 GNS representations $(\cH_m,  \pi_m, \Psi_m)$,  $(\cH_{\lambda},  \pi_{\lambda}, \Psi_{\lambda})$ and
 $(\widetilde{\cH}_{\lambda},  \widetilde{\pi}_{\lambda}, \widetilde{\Psi}_{\lambda})$
  respectively. The following properties hold:
  \begin{itemize}
 \item[$(a)$] There is a unique unitary operator $Z_m: \cH_m  \to \cH_{\lambda}$ such that
\begin{align}
&Z_m \Psi_m = \Psi_\lambda\:, \label{P1}\\
&Z_m \pi_m(A) Z_m^{-1} = \pi_\lambda(\ell_m^{VD}(A))\quad \mbox{for all $A \in \cW_m(D)$}\:.\label{P2}
\end{align}
 \item[$(b)$] Up to unitary equivalence $(\cH_{m},  \pi_{m}, \Psi_{m}) =
   (\widetilde{\cH}_{m},  \widetilde{\pi}_{m} \circ \ell_m^{\widetilde{D}D}, \widetilde{\Psi}_{m})$.\\
\item[$(c)$] Taking (b) and (\ref{HHJJ}) into account, the unitary $Z_m$ also verifies:
\begin{itemize}
\item[$(i)$] $Z_m  (\sH_m) = Z_m  (\widetilde{\sH}_m) = \sH_\lambda$

\item[$(ii)$] $Z_m \sK_m = \sK_\lambda L_m^{VD}$ and $Z_m \widetilde{\sK}_m = \sK_\lambda L_m^{V\widetilde{D}}$

\item[$(iii)$]  $Z_m \widetilde{\pi}_m(a) Z_m^{-1} = \pi_\lambda(\ell_m^{V\widetilde{D}}(a))$ for all $a \in \widetilde{\cW}_m(D)$.
\end{itemize}
\end{itemize}
\end{proposition}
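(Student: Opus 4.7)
The strategy is to reduce the three assertions to suitable applications of the GNS uniqueness theorem, using the density of one-particle vectors delivered by Lemma~\ref{lemmagoursat}.

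For (a), Theorem~\ref{teobast} gives $\omega_m = \lambda \circ \ell_m^{VD}$, so $(\cH_\lambda, \pi_\lambda \circ \ell_m^{VD}, \Psi_\lambda)$ is a representation of $\cW_m(D)$ whose evaluation on $\Psi_\lambda$ reproduces $\omega_m$. By GNS uniqueness, to obtain a unique unitary $Z_m:\cH_m\to\cH_\lambda$ satisfying (\ref{P1})--(\ref{P2}) it suffices to verify that $\Psi_\lambda$ is cyclic for $\pi_\lambda(\ell_m^{VD}(\cW_m(D)))$; uniqueness of $Z_m$ then follows from cyclicity of $\Psi_m$ for $\pi_m(\cW_m(D))$. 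Because $\lambda$ is quasifree and $\cH_\lambda$ is the symmetric Fock space over $\sH_\lambda$, this cyclicity is equivalent to density in $\sH_\lambda$ of $\sK_\lambda L_m^{VD}(\sS_m(D)) + i\sK_\lambda L_m^{VD}(\sS_m(D))$. Lemma~\ref{lemmagoursat} gives that $\sK_\lambda L_m^{VD}(\sS_m(D))$ is dense in $\sK_\lambda(\sS(V))$, while Proposition~\ref{prop1.2} gives that the complex span of $\sK_\lambda(\sS(V))$ is dense in $\sH_\lambda$; concatenating these yields the required density.

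For (b), the theorem closing Section~\ref{sec4} gives $\widetilde{\omega}_m = \lambda \circ \ell_m^{V\widetilde{D}}$, and combined with the factorization (\ref{coer}) this yields $\omega_m = \widetilde{\omega}_m \circ \ell_m^{\widetilde{D}D}$. By GNS uniqueness, the triple $(\widetilde{\cH}_m, \widetilde{\pi}_m \circ \ell_m^{\widetilde{D}D}, \widetilde{\Psi}_m)$ is unitarily equivalent to $(\cH_m, \pi_m, \Psi_m)$ provided $\widetilde{\Psi}_m$ is cyclic for $\widetilde{\pi}_m(\ell_m^{\widetilde{D}D}(\cW_m(D)))$. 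This cyclicity is guaranteed by the second convergence statement in Lemma~\ref{lemmagoursat}: $\widetilde{\sK}_m L_m^{\widetilde{D}D}(\sS_m(D))$ is dense in $\widetilde{\sK}_m(\widetilde{\sS}_m(D))$, whose complex span is dense in $\widetilde{\sH}_m$ by construction of $\widetilde{\omega}_m$.

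For (c), item (i) and the first identity in (ii) follow directly from the GNS-Fock construction of $Z_m$, which acts at the one-particle level as $Z_m \sK_m(\phi) = \sK_\lambda L_m^{VD}(\phi)$; together with (\ref{HHJJ}) this forces $Z_m(\sH_m) = Z_m(\widetilde{\sH}_m) = \sH_\lambda$. For the second identity in (ii), the relation $\sK_m = \widetilde{\sK}_m L_m^{\widetilde{D}D}$ combined with (\ref{coer}) proves $Z_m \widetilde{\sK}_m = \sK_\lambda L_m^{V\widetilde{D}}$ on the subspace $L_m^{\widetilde{D}D}(\sS_m(D)) \subset \widetilde{\sS}_m(D)$; the extension to all of $\widetilde{\sS}_m(D)$ follows from the density in Lemma~\ref{lemmagoursat} together with the isometric character of $\sK_\lambda L_m^{V\widetilde{D}}$ with respect to the $\widetilde{\sH}_m$-seminorm on its source. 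Finally (iii) is obtained by writing Weyl operators as exponentials of the Segal field and observing that second quantization lifts the one-particle intertwining in (ii) to the Weyl level. The main obstacle throughout is the surjectivity of $Z_m$ in (a), equivalent to cyclicity of $\Psi_\lambda$; this is precisely what Lemma~\ref{lemmagoursat} was engineered to supply, after which the remaining steps reduce to routine manipulations of quasifree Fock representations.
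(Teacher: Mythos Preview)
Your proposal is correct and follows essentially the same approach as the paper's proof: both hinge on Lemma~\ref{lemmagoursat} for the crucial density of $\sK_\lambda L_m^{VD}(\sS_m(D))$ in $\sH_\lambda$, and both exploit the Fock structure of the GNS representations of quasifree states to pass from one-particle identities to the full algebra. The only cosmetic difference is that for (a) you invoke abstract GNS uniqueness (after establishing cyclicity of $\Psi_\lambda$), whereas the paper builds $Z_m$ explicitly as the second quantization of the one-particle isometry $z_m:\sK_m\phi\mapsto\sK_\lambda L_m^{VD}\phi$ and deduces cyclicity afterwards; and for (c)(ii)--(iii) the paper constructs a second unitary $Z'_m$ directly from the $^*$-isomorphism $\ell_m^{V\widetilde{D}}$ and identifies it with $Z_m$ via the uniqueness already established in (a), rather than extending the one-particle relation by the density argument you sketch.
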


\begin{proof} 
$(a)$ Define the $\bR$-linear map 
$z_m : \sK_m\phi  \mapsto \sK_\lambda L_m^{VD}(\phi)$ for every $\phi \in \sS_m(D)$.
First notice that $z_m$ is isometric. This follows decomposing the scalar products into real and imaginary part and then 
 using Lemma \ref{lemmalemme} for the real parts and the fact that the imaginary parts equal the corresponding symplectic forms which, in turn, are preserved 
 by the map $L_m^{VD}$ as discussed in the proof of Proposition \ref{propdefjmath}.  
As a consequence of Lemma \ref{lemmagoursat},  the space spanned by $\sK_\lambda L_m^{VD}(\phi) + i \sK_\lambda L_m^{VD}(\phi)$ for $\phi \in \sS_m(D)$ is dense in $\sH_\lambda$.
By Lemma A.2 in \cite{KW}, we conclude that $z_m$ extends by linearity and continuity to a unitary map from
 $\sH_m$ to $\sH_\lambda$. In turn, by standard 
procedures of second quantization in bosonic Fock spaces, this map uniquely extends by linearity and continuity to a unitary map from the Fock
space $\cH_m$ (with one-particle space structure $(\sK_m, \sH_m)$)
onto the Fock space $\cH_{\lambda}$  (with one-particle space structure $(\sK_{\lambda}, \sH_{\lambda})$) verifying (\ref{P1}) and (\ref{P2}). 
Since $\Psi_m$ is cyclic for $\pi_m$ and $Z_m$ is unitary, we conclude from (\ref{P1})
and (\ref{P2}) that $\Psi_\lambda$ has to be cyclic for $ \pi_\lambda(\ell_m^{VD}(\cW_m(D)))$. 
This result implies the uniqueness  of any unitary $Z_m: \cH_m \to \cH_{\lambda}$
satisfying (\ref{P1}) and (\ref{P2}). The proof of the statement $(b)$ immediately follows from (\ref{HHJJ}) and the definition of $\ell_m^{\widetilde{D}D}$ (and the known Fock space structure 
of the GNS representation for quasifree states). Concerning $(c)$, we observe that the $(i)$ and first identity in $(ii)$ are true from the construction of $Z_m$ above, also taking 
(\ref{HHJJ}) into account.
Afterwards, a unitary 
$Z'_m: \widetilde{\cH}_m (= \cH_m) \to \cH_{\lambda}$ verifying $(iii)$ and $Z'_m \widetilde{\Psi}_m (= Z'_m \Psi_m) = \Psi_\lambda$ can be constructed as 
before, thus verifying the second in $(ii)$. In view of (\ref{coer}), $Z'_m$ fulfills the same requirement satisfied 
(\ref{P1}) and (\ref{P2}) so it must coincide to $Z_m$ by the uniqueness property of the latter. \end{proof}

\noindent The unitary $Z_m$ gives a natural way to embed the representation $\pi_{m}$ into $\pi_{\lambda}$ in accordance with 
the (non surjective) embedding of algebras $\ell_m^{VD} : \cW_m(D) \to \cW(V)$. Passing to the corresponding von Neumann algebras this embedding becomes an identification that allows us to characterize  the modular group of $(\pi_m(\cW_m(D))'' , \Psi_m)$ (and that of  $(\widetilde{\pi}_m(\widetilde{\cW}_m(D))'' , \widetilde{\Psi}_m)$)
 in terms of that of $(\pi_{\lambda}(\cW(V))'' , \Psi_{\lambda})$.

\begin{theorem}\label{teommg} Referring to the three von Neumann algebras $(\pi_m(\cW_m(D))'' , \Psi_m)$, $(\widetilde{\pi}_m(\widetilde{\cW}_m(D))'' , \Psi_m)$ 
and  $(\pi_{\lambda}(\cW(V))'' , \Psi_{\lambda})$ in standard form and to the unitary operator $Z_m: \cH_m \to \cH_{\lambda}$ 
introduced in Proposition \ref{propZ}, it holds:
\beq   \:\pi_m(\cW_m(D))'' \:\:=\:\:  \pi_m(\widetilde{\cW}_m(D))''  \:\:=\:\: Z_m^{-1} \: \pi_{\lambda}(\cW(V))'' \: Z_m\:. \label{VN}\eeq
The modular group $\{\sigma^{(m,D)}_\tau \}_{\tau \in \bR}$ of $(\pi_m(\cW_m(D))'' , \Psi_m)$ and $(\widetilde{\pi}_m(\widetilde{\cW}_m(D))'' , \Psi_m)$
is 
\beq 
 \sigma^{(m,D)}_\tau(A) =   U^{(m)}_{-2\pi \tau} A U^{(m)*}_{-2\pi \tau} \quad \mbox{for every
 $A\in \pi_m(\cW_m(D))''$ and $\tau \in \bR$,}
\eeq 
where $U^{(m)}_{\tau} \doteq Z_m^{-1} U^{(\lambda)}_\tau Z_m$ and $U_\tau^{(\lambda)}$ is that in (b) of Theorem \ref{T3}.
\end{theorem}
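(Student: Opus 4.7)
\textbf{Proof plan for Theorem \ref{teommg}.} The strategy is to view the bulk von Neumann algebras as isomorphic to the boundary one via the unitary $Z_m$, and then transport the modular group from $V$ (where its action was computed in Theorem \ref{T3}) back to $D$. Concretely, I would first establish the chain of equalities (\ref{VN}) at the von Neumann level, and then invoke the well-known link between KMS condition and Tomita–Takesaki theory recalled at the beginning of Section \ref{sec5}.

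For (\ref{VN}), the rightmost equality, namely $\widetilde{\pi}_m(\widetilde{\cW}_m(D))'' = Z_m^{-1}\pi_\lambda(\cW(V))''Z_m$, is immediate from Proposition \ref{propZ}(c)(iii): since $\ell_m^{V\widetilde{D}}$ is a $*$-isomorphism of $\widetilde{\cW}_m(D)$ onto $\cW(V)$, conjugation by the unitary $Z_m$ maps $\widetilde{\pi}_m(\widetilde{\cW}_m(D))$ onto $\pi_\lambda(\cW(V))$, and then one takes double commutants (using that commutants are preserved by unitary conjugation) and identifies $\widetilde{\pi}_m$ with $\pi_m$ on $\cW_m(D)$ via statement (b) of the same proposition. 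The nontrivial equality is $\pi_m(\cW_m(D))'' = \widetilde{\pi}_m(\widetilde{\cW}_m(D))''$, which through $Z_m$ is equivalent to showing
\[
\pi_\lambda\bigl(\ell_m^{VD}(\cW_m(D))\bigr)'' = \pi_\lambda(\cW(V))''\:.
\]
The inclusion $\subseteq$ is obvious from $\ell_m^{VD}(\cW_m(D))\subseteq\cW(V)$; for the opposite inclusion I would use Lemma \ref{lemmagoursat}: given any generator $W(\Phi)$ with $\Phi\in\sS(V)$, there is a sequence $\phi_n\in\sS_m(D)$ with $\sK_\lambda L_m^{VD}(\phi_n)\to\sK_\lambda\Phi$ in $\sH_\lambda$. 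The standard Fock-space estimate for quasifree GNS representations,
\[
\|\pi_\lambda(W(\Psi_1))\Psi - \pi_\lambda(W(\Psi_2))\Psi\|
\leq C\bigl(\|\sK_\lambda(\Psi_1-\Psi_2)\|_{\sH_\lambda}\bigr)
\]
on a dense set of Fock vectors $\Psi$, then promotes this one-particle convergence to strong convergence $\pi_\lambda(W(L_m^{VD}\phi_n))\to\pi_\lambda(W(\Phi))$ on all of $\cH_\lambda$. Hence each Weyl generator of $\pi_\lambda(\cW(V))$ lies in the strong closure of $\pi_\lambda(\ell_m^{VD}(\cW_m(D)))$, and von Neumann's bicommutant theorem gives the desired equality.

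With (\ref{VN}) in hand, the modular group computation is mechanical. By Theorem \ref{T3}(a)--(b), $\lambda$ is KMS at inverse temperature $\beta=2\pi$ for $\alpha^X$, so the modular group $\sigma^{(\lambda)}_\tau$ of $(\pi_\lambda(\cW(V))'',\Psi_\lambda)$ is $\sigma^{(\lambda)}_\tau(B)=U^{(\lambda)}_{-2\pi\tau}B\,U^{(\lambda)*}_{-2\pi\tau}$. Since $Z_m$ is unitary, $Z_m\Psi_m=\Psi_\lambda$ and $Z_m\pi_m(\cW_m(D))''Z_m^{-1}=\pi_\lambda(\cW(V))''$, the uniqueness of the modular group for standard pairs forces
\[
\sigma^{(m,D)}_\tau(A) = Z_m^{-1}\sigma^{(\lambda)}_\tau(Z_mAZ_m^{-1})Z_m = U^{(m)}_{-2\pi\tau}\,A\,U^{(m)*}_{-2\pi\tau}
\]
with $U^{(m)}_\tau\doteq Z_m^{-1}U^{(\lambda)}_\tau Z_m$, which is the claimed formula.

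The main obstacle is the density argument in the second paragraph: Lemma \ref{lemmagoursat} provides convergence only at the level of one-particle Hilbert spaces, so one must carefully convert it into strong operator convergence of Weyl generators on the full Fock space, using the strong continuity of the Weyl map under one-particle norm convergence for quasifree states. Everything else is a formal consequence of the unitary intertwining encoded in Proposition \ref{propZ} and of the KMS-modular correspondence.
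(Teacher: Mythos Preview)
Your proposal is correct and follows essentially the same route as the paper: the easy equality comes from Proposition \ref{propZ}(c)(iii), the nontrivial one is reduced via $Z_m$ to $\pi_\lambda(\ell_m^{VD}(\cW_m(D)))'' = \pi_\lambda(\cW(V))''$ and established by approximating each $\pi_\lambda(W(\Phi))$ strongly using the one-particle convergence from Lemma \ref{lemmagoursat}, and the modular group is then transported by $Z_m$ from Theorem \ref{T3}. The only cosmetic difference is that the paper cites Proposition 5.2.3 in \cite{BR2} for the step converting one-particle convergence into strong operator convergence of Weyl unitaries, which is precisely the ``standard Fock-space estimate'' you invoke.
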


\begin{proof} The identity $\pi_m(\widetilde{\cW}_m(D))'' = Z_m^{-1} \: \pi_{\lambda}(\cW(V))'' \: Z_m$ holds trivially from $(iii)$ of $(c)$ in Proposition \ref{propZ}
since $Z_m$ is unitary and $\ell_m^{V\widetilde{D}}$ is a $^*$-isomorphism.
We have to prove the remaining identity in  (\ref{VN}) only, since the 
last statement in the thesis is an immediate consequence of it and of the fact that $Z_m \Psi_m= \Psi_\lambda$ and of 
$(b)$ in Theorem \ref{T3}. Since, by basic properties of the commutant, $Z_m  \:\pi_m(\cW_m(D))'' \:Z_m^{-1}= (Z_m \pi_m(\cW_m(D))\:Z_m^{-1})''$, we have to 
prove that:
\beq \left( Z_m \pi_m(\cW_m(D))\:Z_m^{-1}\right)'' = \pi_{\lambda}(\cW(V))'' \:.\label{VN'}\eeq
We already know that
$\left( Z_m \pi_m(\cW_m(D))\:Z_m^{-1}\right)'' \subset \pi_{\lambda}(\cW(V))'' $, because $\ell_m^{VD}(\cW_m(D)) \subset \cW_\lambda(V)$.
To pass to (\ref{VN'}), using the fact that a von Neumann algebra is closed in the strong (operatorial) topology and that the space of finite liner combinations of elements 
$\pi_\lambda(W(\Phi))$, $\Phi \in \sS(V)$, is dense in $\pi_\lambda(W(\Phi))''$ in that topology, it is enough to prove that, for every $\Phi \in \sS(V)$, there is a sequence
$\{\phi_n\}_{n\in \bN} \subset \sS_m(D)$ such that $\pi_\lambda(W(L_m^{VD}(\phi_n))) \to \pi_\lambda(W(\Phi))$ in the strong operator topology because
$\pi_\lambda(W(L_m^{VD}(\phi_n))) = Z_m  \:\pi_m(W_m(\phi_n)) \:Z_m^{-1}$ by (\ref{P2}).
To conclude we observe that if, for a fixed $\Phi \in \sS(V)$, the sequence  $\{\phi_n\}_{n\in \bN} \subset \sS_m(D)$
is chosen as in Lemma \ref{lemmagoursat} so that $\sK_\lambda L_m^{VD}(\phi_n) \to \sK_\lambda \Phi$ in $\sH_\lambda$ as $n\to +\infty$, $(1)$ in
Proposition 5.2.3 in \cite{BR2} implies that  $\pi_\lambda(W(\ell_m^{VD}(\phi_n))) \to \pi_\lambda(W(\Phi))$, for $n\to + \infty$, 
in the strong operatorial topology as requested.
\end{proof}
\section{The spacetime action of the modular group for $m\geq 0$}\label{sec6}
Referring to the last statement of Theorem \ref{teommg}, in the following $\{V_\tau^{(m)}\}_{\tau \in \bR}$
denotes the one-parameter group of unitaries acting in the one-particle space $\widetilde{\sH}_m$
 whose second-quantization on the corresponding Fock space $\widetilde{\cH}_m$ (assuming the cyclic vector $\Psi_m$ to be invariant)  is the unitary $U^{(m)}$
 implementing   the modular group of
$(\widetilde{\pi}_m(\widetilde{\cW}_m(D))'', \Psi_m) $.
We aim to study the action of the modular group {\em on the KG solutions in the spacetime}, for $m \geq 0$.
In other words, we are interested in the existence of a one-parameter group
$\{ s^{(m)}_\tau\}_{\tau\in \bR}$
 of symplectic isomorphisms of $\widetilde{\sS}_m(D)$ such that:
\beq
V^{(m)}_\tau \widetilde{\sK}_m \phi = \widetilde{\sK}_m s^{(m)}_\tau(\phi) \quad \mbox{for every $\phi \in \widetilde{\sS}_m(D)$
 and $\tau\in \bR$.} \label{mgw}
\eeq
 If   $\{ s^{(m)}_\tau\}_{\tau\in \bR}$ exists, it is unique because $\widetilde{\sK}_m$ in (\ref{mgw})
is injective.
From (\ref{ONE}), the second identity  in (ii) in Proposition \ref{propZ} and the last statement of the thesis of Theorem \ref{teommg}, 
we have
$V^{(m)}_\tau \widetilde{\sK}_m = \widetilde{\sK}_m (L_m^{V\widetilde{D}})^{-1}\: \beta_\tau^X \: L_m^{V\widetilde{D}}$.
So that it must be:
\beq
s_\tau^{(m)} \doteq  (L_m^{V\widetilde{D}})^{-1}\: \beta_\tau^X \: L_m^{V\widetilde{D}} \label{stau2}
\eeq
Notice that the right-hand side is well defined, it is a symplectic isomorphisms of $\widetilde{\sS}_m(D)$ for every $\tau$, and
(\ref{mgw}) holds as a consequence of the given definition.\\
A first result is obtained taking Theorem \ref{teomz} into account for the case $m=0$: 
Since it holds $(L_0^{V\widetilde{D}})^{-1} \beta_\tau^X =\beta_\tau^X (L_0^{V\widetilde{D}})^{-1}\: $, inserting it in the right-hand side of (\ref{stau2}) with $m=0$,
we obtain:
\beq
s_\tau^{(0)} = \beta_\tau^X \label{stau20}\:.
\eeq
This is nothing but the known result obtained in \cite{HL} written in our language. 

\begin{remark} It should be clear that  we are working in the space $\widetilde{\sS}_m(D)$ rather than $\sS_m(D)$ because $L_m^{VD} : \sS_m(D) \to \sS(V)$ is not surjective and so $(L_m^{VD})^{-1}$,
that would appear in the corresponding of (\ref{stau2}), would not exist. This is the ultimate reason to enlarge  $\sS_m(D)$  to the space $\widetilde{\sS}_m(D)$.
This enlargement makes sense since it does not affect the relevant  von Neumann algebra and the modular group as already  established in Theorem \ref{teommg}.
\end{remark}

\subsection{The relative action of the modular group and its infinitesimal generator}
We pass to study the {\em relative action} of the modular operator on KG solutions, that is:
\beq 
s^{(m)}_\tau \phi - s^{(0)}_\tau \phi\quad \mbox{for $\phi \in \widetilde{\sS}_m(D)$,}\label{Delta}
\eeq
 giving its explicit expression.
Three preliminary remarks are necessary. $(1)$ $ s^{(0)}_\tau \phi$ is well defined 
for every  $\phi \in \widetilde{\sS}_m(D)$ even if $m>0$ since it coincides to $\beta^{X}_\tau \phi$ as remarked above. 
However, differently from $s^{(m)}_\tau \phi$, one may have
 $ s^{(0)}_\tau \phi \not \in \widetilde{\sS}_m(D)$ if $m>0$ and  $\phi \in \widetilde{\sS}_m(D)$.
 $(2)$ Every map $L_m^{V\widetilde{D}}: \widetilde{\sS}_m(D) \to \sS(V)$, for all values of $m$, can be seen as the restriction 
to the corresponding space $\widetilde{\sS}_m(D)$ of a common map $L : C_0^{\infty}(\bM) \to C^\infty(\bS^2 \times (0,1))$
 associating a smooth function $f$ to the function  $(Lf)(\omega,u) \doteq u f\spa\restriction_V(\omega,u)$. To do it, every element $\phi \in \widetilde{\sS}_m(D)$
 has to be smoothly extended to the whole $\bM$ (as is allowed by the very definition of $\widetilde{\sS}_m(D)$) and the function $L \phi$ does not depend on the extension. 
$(3)$ if $\phi \in \widetilde{\sS}_{m'}(D)$ then $L \phi \in \sS(V)$ so that $(L_m^{V\widetilde{D}})^{-1} (L\ph))$ is however well-defined regardless if $m\neq m'$.  More generally,
$(L_m^{V\widetilde{D}})^{-1} (L f))$ is well defined for $f\in C_0^\infty(\bM)$ if $L f \in \sS(V)$. 

\begin{remark}
From now on we simplify the notation as  $L_m \doteq L_{m}^{V\widetilde{D}}$.
\end{remark}

 \begin{theorem} \label{propQF} If $m\geq 0$, $\phi \in \widetilde{\sS}_m(D)$ and $(t,\bx) \in D$ then:
 \begin{align}
 &s^{(m)}_\tau \phi - s^{(0)}_\tau \phi =  \left[L_m^{-1}L - L_0^{-1}L,\beta^{X}_\tau\right] \phi \label{comm} \quad \mbox{where,
 if $f\in C_0^\infty(\bM)$ and $L f \in \sS(V)$}\\
&\left(\left( L_m^{-1} L- L_0^{-1}L\right)f \right)(t,\bx)   = \int_{\bS^2}  \int_0^{u^*(t,\bx,\omega)}  \sp\sp\sp  F_m(\sigma((t,\bx), (u, u\omega))) 
u \partial_u u f\!\!\restriction_V\!\!(\omega,u)du d\omega \label{comm2}
 \end{align}
\beq
\mbox{with   }\qquad u^*(t,\bx,\omega) \doteq \frac{t^2-\bx^2}{2 (t -\omega \cdot \bx)}\:,  \quad F_m(z) \doteq \frac{m^2}{8\pi}\sum_{k=0}^{+\infty} \left(\frac{m}{2}\right)^{2k} \sp \sp \frac{z^k}{k!(k+1)!}\:, z\in \bR  \eeq 
\end{theorem}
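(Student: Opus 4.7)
The plan is to prove \eqref{comm} by a short algebraic argument and then derive \eqref{comm2} by substituting the integral representation of $L_m^{-1}$ from Proposition \ref{propdeftjmath} and identifying the resulting smooth kernel.

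For \eqref{comm}, start from $s^{(m)}_\tau = L_m^{-1}\beta^X_\tau L_m$. For $\phi\in\widetilde{\sS}_m(D)$ one has $L_m\phi=L\phi$, and the geometric compatibility between the bulk action \eqref{conf} and the boundary action \eqref{stau} of $\beta^X_\tau$ (established in the paragraph preceding Proposition \ref{symp}) yields $L\circ\beta^X_\tau = \beta^X_\tau\circ L$ on any smooth function, so $s^{(m)}_\tau\phi = L_m^{-1}L\beta^X_\tau\phi$. Combined with $L_m^{-1}L\phi=\phi$ and with $s^{(0)}_\tau\phi = \beta^X_\tau\phi$ from \eqref{stau20}, this rewrites $s^{(m)}_\tau\phi - s^{(0)}_\tau\phi$ as $[L_m^{-1}L,\beta^X_\tau]\phi$. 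To upgrade this to the commutator of the \emph{difference} $L_m^{-1}L - L_0^{-1}L$ with $\beta^X_\tau$ it suffices to check $[L_0^{-1}L,\beta^X_\tau]\phi = 0$: both $L_0^{-1}L\beta^X_\tau\phi$ and $\beta^X_\tau L_0^{-1}L\phi$ are massless KG solutions in $\widetilde{\sS}_0(D)$ (the second by Theorem \ref{teomz}) sharing the characteristic datum $\beta^X_\tau L\phi$ on $V$, and therefore coincide by the uniqueness part of Theorem \ref{propgoursat}.

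For \eqref{comm2}, apply the integral representation \eqref{ntempE} to $L_m^{-1}$ and $L_0^{-1}$ acting on $Lf\in\sS(V)$ and subtract. Because the $\delta(\sigma)$-singularity of the Pauli--Jordan function is mass-independent, the difference $\Delta_m-\Delta_0$ extends to a jointly smooth function on $\bM\times\bM$, so the regularizing sequence $\epsilon_n\to 0^+$ can be dropped in the limit, giving
\begin{equation*}
((L_m^{-1}L - L_0^{-1}L)f)(t,\bx) = -2\int_{\bS^2}d\omega\int_0^1du\,(\Delta_m-\Delta_0)((t,\bx),(u,u\omega))\,u\partial_u(uf|_V).
\end{equation*}
In the timelike region the explicit four-dimensional Minkowskian form of this difference is a universal multiple of $\mathrm{sgn}(t_x-t_{x'})\Theta(-\sigma)\,mJ_1(m\sqrt{-\sigma})/\sqrt{-\sigma}$, and the power series $J_1(z)/z=\tfrac{1}{2}\sum_{k\geq 0}(-1)^k(z/2)^{2k}/[k!(k+1)!]$ combined with $(-\sigma)^k=(-1)^k\sigma^k$ converts it into a constant multiple of the series defining $F_m(\sigma)$; the prefactor $m^2/(8\pi)$ in the definition of $F_m$ is chosen precisely to absorb this constant.

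The Heaviside factor $\Theta(-\sigma((t,\bx),(u,u\omega)))$ localizes the $u$-integration. From $\sigma((t,\bx),(u,u\omega)) = -t^2+\|\bx\|^2+2u(t-\omega\cdot\bx)$, solving $\sigma=0$ yields $u=u^*(t,\bx,\omega)$; since $(t,\bx)\in D$ forces $t>\|\bx\|\geq\omega\cdot\bx$, the coefficient $t-\omega\cdot\bx$ is strictly positive and $\sigma<0$ iff $u\in(0,u^*)$. Moreover $u^*\leq t$ (equivalent to $\|t\omega-\bx\|^2\geq 0$), so $\mathrm{sgn}(t-u)=+1$ throughout the integration range, fixing the sign in front of the Bessel factor. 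The main obstacle is not conceptual but bookkeeping: one must reconcile the factor $-2$ in \eqref{ntempE}, the sign convention of $\Delta_m$ encoded by the $i\epsilon$-prescription \eqref{delta0}, and the normalization $m^2/(8\pi)$ in the definition of $F_m$, so that the overall kernel in \eqref{comm2} lands exactly on $F_m(\sigma)$ with coefficient $+1$.
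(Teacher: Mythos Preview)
Your argument is essentially the paper's, and it is correct apart from one slip. The difference $\Delta_m-\Delta_0$ is \emph{not} jointly smooth on $\bM\times\bM$: as you yourself write two lines later, it carries the factor $\Theta(-\sigma)$, and since $F_m(0)=m^2/(8\pi)\neq 0$ this produces a genuine jump across the light cone. What is true (and sufficient) is that $\Delta_m-\Delta_0$ is locally bounded, so dominated convergence lets you drop the $\epsilon_n$; the paper says the same thing in slightly different language, writing the regularized kernel as the smooth function $F_m(\sigma)$ times a factor converging weakly to $\theta(-\sigma)$. With that correction your derivation of both \eqref{comm} and the localization to $[0,u^*]$ matches the paper's proof.
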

 
\begin{remark}\label{remarkadd} For fixed $(t,\bx) \in D$ and $\omega \in \bS^2$,  the real  $u^* \doteq u^*(t,\bx,\omega)$ is the 
unique solution of $\sigma((t,\bx), (u^*, u^*\omega)) =0$, 
so that $0< u^*<1$. The function $F_m$ , due to the rapid convergence of the series, is analytic since it is the restriction to $\bR$ of an entire analytic function. 
$F_m$ vanishes,  as expected, for $m=0$.
\end{remark}
 
\begin{proof}[Proof of Theorem \ref{propQF}] By definition of $s^{(m)}_\tau$ and since $L_m^{-1} L_m = I_{\widetilde{\sS}_m(D)}$, we immediately obtain 
  $s^{(m)}_\tau \phi - s^{(0)}_\tau \phi = L_m^{-1} \beta^X_\tau L_m \phi - \beta^X_\tau L_m^{-1} L_m \phi$. That is
  $s^{(m)}_\tau \phi - s^{(0)}_\tau \phi = L_m^{-1} \beta^X_\tau L \phi - \beta^X_\tau L_m^{-1} L \phi$. Now notice that 
  $\beta_\tau^X L f = L \beta^X_\tau f$, essentially because $X$ is tangent to $V$ and $L$ is the restriction to $V$
  (however it can be checked by direct inspection). We have found that $s^{(m)}_\tau \phi - s^{(0)}_\tau \phi =
   L_m^{-1}  L \beta^X_\tau \phi - \beta^X_\tau L_m^{-1} L \phi$, where now, in general,
   $L_m^{-1}  L \beta^X_\tau \phi \neq L_m^{-1}  L_m \beta^X_\tau \phi  = \beta^X_\tau \phi$ 
   since $\beta^X_\tau \phi \not \in \widetilde{\sS}_m(D)$ if $m>0$. We can re-arrange the found result as
   $s^{(m)}_\tau \phi - s^{(0)}_\tau \phi =   (L_m^{-1}  L -   L_0^{-1}  L)\beta^X_\tau \phi 
 -   ( \beta^X_\tau L_m^{-1}  L -  L_0^{-1}  L \beta^X_\tau) \phi$.
The found identity gives rise to (\ref{comm}) if observing that $L_0^{-1}  L \beta^X_\tau=  L_0^{-1}  \beta^X_\tau L
 = \beta^X_\tau L_0^{-1} L$ where, 
   in the last passage, we have used Theorem \ref{teomz}.  Let us pass to (\ref{comm2}) exploiting
   (\ref{ntempE}). If $f\in C_0^\infty(\bM)$ and $L f \in \sS(V)$:
 \begin{align*}&\left(\left(L_m^{-1} L- L_0^{-1}L\right)f \right)(t,\bx)  = \\
&\qquad\qquad\qquad\qquad \lim_{n\to +\infty} \sp -2 \int_{\bS^2}\int_0^1 (\Delta_{m} - \Delta_{0})  ((t-i\epsilon_n,\bx), (u, u\omega)) u \partial_u (L f)(\omega,u)du d\omega \ .\end{align*}  
The integral defining the advanced and retarded fundamental solutions of Klein-Gordon equation \cite{RS2} produces, for $m>0$:
  \begin{align*}&\Delta_m((t-i\epsilon_n,\bx), (t', \bx')) =\\
  &\qquad\qquad\quad \frac{im}{4\pi^2}\left[  \frac{K_1 \left(m \sigma((t+i\epsilon_n,\bx), (t', \bx'))^{1/2} \right)}{\sigma((t+i\epsilon_n,\bx), (t', \bx'))^{1/2}} -
    \frac{K_1 \left(m \sigma((t-i\epsilon_n,\bx), (t', \bx'))^{1/2} \right)}{\sigma((t-i\epsilon_n,\bx), (t', \bx'))^{1/2}}\right]\ ,
    \end{align*}
 the analog for $m=0$ appears in (\ref{delta0}).
 Using the standard expansion of the Bessel functions $K_1$ and $I_1$  (see 8.446 and 2 of 8.447 in \cite{Grad}), one sees that, 
 at $(t,\bx) \in D$ fixed,  $-2(\Delta_{m} - \Delta_0)((t-i\epsilon_n,\bx), (u, u\omega)) $ is the smooth function 
 $F_m(\sigma((t,\bx), (u\omega,u)))$ multiplied with a function weakly
 tending to $\theta(-\sigma( (t,\bx), (u, u\omega)) )$ as $n\to +\infty$, where $\theta(x)=0$ for $x<0$ and $\theta(x)=1$ otherwise. This restricts
 the final integration to $\omega \in \bS^2$ and $u \in [0, u^*(t,\bx, \omega)]$ giving rise to (\ref{comm2}) almost everywhere in $\bx$. Since both 
 members of (\ref{comm2}) are continuous in $(t,\bx)$, the identity holds everywhere.
 \end{proof}

\noindent If $\phi \in \widetilde{\sS}_m(D)$, the function
  $(\tau, t,\bx) \mapsto (s_\tau^{(m)} \phi)(t, \bx)$ is jointly smooth for $\tau \in \bR$ and $(t,\bx) \in D$: The first and, for Theorem  \ref{propQF}, the third term of the sum in the right-hand side of
  $$s^{(m)}_\tau \phi = s^{(0)}_\tau \phi + (L_m^{-1}L - L_0^{-1}L)\beta^{X}_\tau \phi - 
\beta^{X}_\tau   (L_m^{-1}L - L_0^{-1}L) \phi$$
are smooth by construction, the second one is smooth as it can be trivially proved by recursively deriving the integral in (\ref{comm2})  using Leibniz rule
and the Theorem of derivation under the symbol of integration. So, it makes sense to compute the {\em infinitesimal generator} $\delta^{(m)}$ of $s^{(m)}_\tau$, taking the derivative of
$s^{(m)}_\tau \phi$ at $\tau=0$. If $\gamma^X$ is defined as in (\ref{delta}) with $Y$ replaced for $X$, (\ref{comm})  leads to:
\beq
\delta^{(m)} \phi =  
\gamma^X\phi +  \left[L_m^{-1}L - L_0^{-1}L,\gamma^X \right] \phi  \quad 
\mbox{for every $\phi \in \widetilde{\sS}_m(D)$, $m\geq 0$.}\label{deltam}
\eeq
The right-hand side of (\ref{deltam}) can explicitly be computed taking (\ref{comm2}) into account.\\
Notice that, for $m=0$, (\ref{deltam}) produces, as it has to do,
\beq
\delta^{(0)} \phi =  
\gamma^X\phi\:.
\eeq

\begin{theorem}\label{tquasiultimo} For every $\phi \in \widetilde{\sS}_m(D)$, $m\geq 0$, it holds:
\beq
(\delta^{(m)} \phi) (t,\bx) =  (\gamma^{X} \phi)(t,\bx) + \int_{\bS^2} \int_0^{u^*(t,\bx,\omega)}\sp\sp\sp \:\:\: I_\phi(t,\bx,u,\omega,\sigma)\ du d\omega \label{deltamm}
\eeq
where  $\sigma \doteq \sigma((t,\bx), (u\omega,u))$ for $(t,\bx) \in D$ and
\beq I_\phi(t,\bx,u,\omega,\sigma)\doteq u [2-(t+u)] \:[F_m(\sigma) + 
\sigma  F_m'(\sigma)]\: \partial_u u \phi\!\! \restriction_V\!\!(\omega,u)\:.\label{integrando}
\eeq 
\end{theorem}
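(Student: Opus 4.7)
The plan is to unfold the commutator in $(\ref{deltam})$, $\delta^{(m)}\phi = \gamma^X\phi + [K_m,\gamma^X]\phi$ with $K_m \doteq L_m^{-1}L - L_0^{-1}L$, using the explicit integral representation of $K_m$ from $(\ref{comm2})$, and massage it into the stated form $(\ref{deltamm})$. Since for $m=0$ one has $K_0 = 0$ and $\delta^{(0)}\phi = \gamma^X\phi$ is already known from $(\ref{stau20})$, all the work is concentrated in $[K_m,\gamma^X]\phi = K_m\gamma^X\phi - \gamma^X K_m\phi$, which I want to show coincides with $\int_{\bS^2}\int_0^{u^*(t,\bx,\omega)} I_\phi\, du\, d\omega$.

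First I would assemble three technical ingredients. (a) \emph{Boundary action of $\gamma^X$}: combining $X|_V = u(u-1)\partial_u$ from $(\ref{XV})$ with the identity $u\gamma^X(f) = -X(uf)$ recorded just before Proposition $\ref{symp}$, I get $L(\gamma^X\phi)(\omega,u) = -u(u-1)\partial_u L\phi(\omega,u)$, so $K_m\gamma^X\phi$ becomes immediately computable via $(\ref{comm2})$ as an integral depending only on $F_m(\sigma)$, $\partial_u L\phi$ and $\partial_u^2 L\phi$. (b) \emph{Conformal trick}: $(\ref{trick})$ applied with $Y=X$ and $F=F_m$ gives $\gamma^X_x F_m(\sigma) + \gamma^X_{x'} F_m(\sigma) = (t+u-2)[F_m(\sigma) + \sigma F_m'(\sigma)]$, and the underlying $(\ref{trick0})$ degenerates on the light cone $\sigma=0$ into $X_x\sigma + X_{x'}\sigma = 0$. (c) \emph{Boundary identity for $u^*$}: differentiating the implicit relation $\sigma(x, x'(u^*)) = 0$ along $X_x$ and using (b) on $\sigma=0$ together with $X_{x'}|_V = u(u-1)\partial_u$, one finds $X_x(u^*(x,\omega)) = u^*(u^*-1)$.

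With these in hand the main step is to differentiate under the integral in $\gamma^X K_m\phi = -X_x K_m\phi - (t-1)K_m\phi$. Leibniz applied to the $x$-dependent endpoint $u^*$ produces a bulk integral of $X_x F_m(\sigma)\cdot u\partial_u L\phi$ plus a boundary contribution at $u=u^*$ equal to $F_m(0)\, u^*\partial_u L\phi(\omega,u^*)\, X_x(u^*)$. Using (b), I split $X_x F_m(\sigma) = (t+u-2)\sigma F_m'(\sigma) - u(u-1)\partial_u F_m(\sigma)$ and transfer the $\partial_u F_m$ piece onto $u\partial_u L\phi$ by integration by parts in $u$; thanks to (c) the surface term generated at $u=u^*$ cancels the Leibniz boundary piece exactly, while at $u=0$ the factor $u^2(u-1)$ kills everything. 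After this cancellation the commutator is a pure bulk integral: the $\partial_u^2 L\phi$ contributions from $K_m\gamma^X\phi$ and $\gamma^X K_m\phi$ annihilate, and the $F_m$ and $\sigma F_m'$ coefficients recombine, after a short algebraic rearrangement, into $u[2-(t+u)][F_m(\sigma) + \sigma F_m'(\sigma)]\partial_u L\phi = I_\phi$, with the natural restriction $u\in[0, u^*(t,\bx,\omega)]$ inherited from $(\ref{comm2})$. The hard part will be rigorously justifying the differentiation under the integral with the $x$-dependent endpoint $u^*$; this is smooth for $(t,\bx)$ strictly inside $D$, precisely the regime of the theorem, so once one works there the boundary cancellations described above make the whole commutator collapse to the asserted $\gamma^X\phi + \int\int I_\phi\,du\,d\omega$.
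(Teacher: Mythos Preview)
Your proposal is correct and follows essentially the same route as the paper: compute $[K_m,\gamma^X]\phi$ from the integral representation $(\ref{comm2})$, use the conformal identity $(\ref{trick})$ to trade $\gamma^X_x$ acting on $F_m(\sigma)$ for $\gamma^X_{x'}$, and eliminate the boundary contribution at $u=u^*$ via the identity $X_x(u^*)=u^*(u^*-1)$ (equivalently $u^*(1-u^*)+X^a\partial_a u^*=0$), which the paper derives exactly as you do in (c) from $\sigma(x,(u^*,u^*\omega))=0$ and $(\ref{trick0})$. The only cosmetic difference is packaging: you present the vanishing as a cancellation between the Leibniz endpoint term and the integration-by-parts surface term, while the paper collects everything into a single residual boundary integral $(\ref{termagg})$ and then shows its bracket factor vanishes.
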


\begin{proof}
The commutator in the right-hand side of (\ref{deltam}) can be explicitly computed using \eqref{comm2} and the identity \eqref{trick} for $Y=X$, that permits to pass the symbol $\gamma^X$ from the variables $(t,\bx)$ to the variables $(u,u\omega)$ in $V$ erasing some terms. The computation is long but straightforward and produces just
 the result (\ref{deltamm}) up to a further added term to  the right-hand side:
\beq
-\int_{\bS^2} d\omega  F_m(0)  \left\{ u \left[  u(1-u) + X^a(t,\bx) \partial_a u^*(t,\bx, \omega) \right] \partial_u u \phi\spa\restriction_V (\omega,u) \right\}_{u= u^*(t,\bx, \omega)}\:.
\label{termagg}
\eeq
This term vanishes for the following reason. By the definition of $u^*(x, \omega)$ (see Remark \ref{remarkadd}), $\sigma(x, (u^*(x, \omega), u^*(x,\omega) \omega))=0$
for all $x\equiv (t,\bx) \in D$. Therefore we also have that, applying $X$ to the complete dependence on $x$,
 $X^a(x) \frac{\partial}{\partial x^a} \sigma|_{u=u^*} + X^a(x) \frac{\partial u^*}{\partial x^a} \frac{\partial}{\partial u} \sigma|_{u=u^*}=0$. Making use of (\ref{trick0}), 
it  implies $(u(1-u) \partial_u \sigma) + ((t-1) + (u-1)) \sigma |_{u=u^*} + X^a(x) \frac{\partial u^*}{\partial x^a} \frac{\partial}{\partial u} \sigma|_{u=u^*}=0$.
 Since $\sigma|_{u=u^*} =0$,  and $\sigma = -t^2+\bx^2 + 2u (t-\bx \cdot \omega)$, the found identity is nothing but:
 $\left[ u(1-u) + X^a(t,\bx) \partial_a u^*(t,\bx, \omega)\right]|_{u=u^*} (t-\bx \cdot \omega)=0$. However
 $t-\bx \cdot \omega > 0$ because $t>||\bx||$ in $D$
 and $||\omega||=1$. Therefore $\left[ u(1-u) + X^a(t,\bx) \partial_a u^*(t,\bx, \omega)\right]|_{u=u^*} =0$ and the integral in (\ref{termagg}) vanishes in any case. 
\end{proof}

\subsection{$\delta^{(m)} -\delta^{(0)}$ is a pseudo-differential operator of class $L^{0}_{1,1}$ }
We consider the difference of generators $\delta^{(m)} -\delta^{(0)}$ (that is $\delta^{(m)} -\gamma^{X}$)  as an operator  
$C_0^{\infty}(\bM)\to C^\infty(D)$ in view of the fact that 
the integral in the  right-hand side of (\ref{deltamm}) is well-defined also for $\phi$ replaced with a generic $f\in C_0^{\infty}(\bM)$, though its physical meaning is guaranteed only when it acts on 
$\phi \in \widetilde{\sS}_m(D)$.
Hence we consider, for $f\in C_0^\infty(\bM)$ and $(t,\bx)\in D$:
\beq
\left((\delta^{(m)} -\delta^{(0)})f\right)(t,\bx) \doteq
\int_{\bS^2}\sp\spa d\omega \int_0^{u^*(t,\bx,\omega)} du\ 
I_f(t,\bx,u,\omega,\sigma)\ ,\label{dd}
\eeq
with $I_f(t,\bx,u,\omega,\sigma)$ as in \eqref{integrando}.

We pass to prove that $\delta^{(m)} -\delta^{(0)}$  is a {\em pseudodifferential operator} of class $L^{0}_{1,1}(D)$,
 proving and making more precise Fredenhagen's conjecture mentioned in the introduction.

\remark\label{rempseudo} We address the reader to specialized texts (e.g. \cite{Tay},\cite{GriSjoes},\cite{duistermaat})  for more information 
on this subject while recalling  just two relevant facts here.

 $(1)$ A {\em pseudodifferential operator}
$A: C_0^\infty(\bR^n)\to C^\infty(X)$ -- where $X\subset\bR^n$ is open --
is individuated 
by its {\em amplitude},  a smooth function on $a: X\times\bR^n\to \bC$,
which satisfies
\beq
\left |\partial^\alpha_x \partial^\beta_\theta a(x,k)\right | \le C_{\alpha,\beta,K}(a) 
(1+\langle k, k \rangle^{1/2})^{m-\rho |\beta|+\delta |\alpha|}
\label{stimeA}
\eeq
for every compact set $K\subset X$, every pair of multindices $\alpha,\beta\in\bN^n$ 
and  corresponding constants $C_{\alpha,\beta,K}(a)\ge 0$. 
Above $|\alpha| \doteq \alpha_1 +\ldots +\alpha_n$ when 
$\alpha = (\alpha_1,\ldots,\alpha_n)$ and $\langle\cdot,\cdot \rangle$ denotes the standard scalar product in $\bR^n$.
Then, by definition, the  {\em pseudodifferential operator} associated to $a$ is
\beq 
(Au)(x) \doteq\int_{\bR^n} a(x,k) e^{i\langle \theta,x\rangle} \widehat{u}(k) dk \quad 
\mbox{if $u \in C_0^\infty(\bR^n)$,}\label{PSA}
\eeq
where $\widehat{u}$ indicates the Fourier transform of $u$.
Holding (\ref{stimeA}),
we  say that  {\em the amplitude $a$ belongs to the functional class $S^m_{\rho,\delta}(X\times\bR^n)$}
and, equivalently, that {\em the pseudodifferential operator $A$ belongs to the class $L^m_{\rho,\delta}(X)$}.

$(2)$ Compositions of pseudodifferential operators can be made in various fashions. However we only stick to the following
 simple result (see, for instance, Proposition $3.3$ in \cite{Tay}). Given $A_i\in L^{m_i}_{\rho_i ,\delta_i}(X)$ $i=1,2$, 
suppose that $0\le\delta_2<\rho\le 1$, where $\rho=\min\{\rho_1,\rho_2\}$, then 
$A_1 A_2$ is well defined and belongs to $L^{m_1+m_2}_{\rho,\delta}(X)$ where $\delta=\max\{\delta_1 ,\delta_2\}$.\\

To prove that $\delta^{(m)} -\delta^{(0)}$  is a {\em pseudodifferential operator} of class $L^{0}_{1,1}(D)$, it would be sufficient
 to recast \eqref{dd} into the form (\ref{PSA}) where, of course 
$a\in S^m_{\rho,\delta}(X\times\bR^n)$ with $m=0$ and $\rho=\delta=1$. We will indirectly obtain that
 result exploiting the composition rule mentioned in (2) in Remark \ref{rempseudo}.
We preventively need two lemmata concerning the restriction of a smooth function on $V$ and the 
form of $\delta^{(m)} -\delta^{(0)}$ as integral operator.\\
 Henceforth we equip $V$
with the topology induced by $\bM$ that, in turn, coincides with that of the identification $V \equiv \bS^2 \times [0,1)$,  and the associated Borel $\sigma$-algebra.
 
\begin{lemma} \label{lemmarest}  If $g : V \to \bR$ is measurable and bounded and $f \in C_0^\infty(\bM)$ then:
\beq
\int_{V} g f\spa \restriction_V du d\omega 	= \frac{1}{(2\pi)^4} \int_{\bR^4} dk \int_{V}  e^{i(k_0 u +  \bk \cdot u\omega)}
g(u,\omega) du d\omega \int_{\bR^4} e^{-i \langle k,y \rangle} f(y) dy \label{rest}
\eeq
where $k=(k_0,\bk)$, $y=(t',\bx')$ and $\langle k,y \rangle \doteq k_0t' + \bk \cdot \bx'$.
\end{lemma}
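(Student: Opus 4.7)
The identity is essentially Fourier inversion on $\bR^4$ applied to the values of $f$ on the submanifold $V$, with an exchange of order of integration. The plan is as follows.

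First, I observe that $f\in C_0^\infty(\bM)\subset\mathcal{S}(\bR^4)$, so the standard Fourier inversion formula (using the Euclidean pairing $\langle k,y\rangle = k_0 t' + \bk\cdot\bx'$) gives, for every $y\in\bR^4$,
\[
 f(y) \;=\; \frac{1}{(2\pi)^4}\int_{\bR^4} \widehat{f}(k)\, e^{i\langle k,y\rangle}\, dk, \qquad \widehat{f}(k) \doteq \int_{\bR^4} e^{-i\langle k,y\rangle} f(y)\, dy.
\]
Parametrizing $V$ by $(\omega,u)\in\bS^2\times[0,1)$ via $y=(u,u\omega)$, one has $\langle k,(u,u\omega)\rangle = k_0 u + \bk\cdot u\omega$, so
\[
 f\!\restriction_V\!(u,\omega) \;=\; \frac{1}{(2\pi)^4}\int_{\bR^4} \widehat{f}(k)\, e^{i(k_0 u + \bk\cdot u\omega)}\, dk.
\]

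Next, I multiply by $g(u,\omega)$ and integrate over $V$:
\[
 \int_V g\, f\!\restriction_V\, du\, d\omega
 \;=\; \frac{1}{(2\pi)^4}\int_V du\, d\omega\, g(u,\omega)\int_{\bR^4} \widehat{f}(k)\, e^{i(k_0 u + \bk\cdot u\omega)}\, dk.
\]
The interchange of the two integrals is justified by Fubini's theorem: $g$ is bounded by hypothesis, $V\simeq \bS^2\times[0,1)$ has finite measure $4\pi$, and $\widehat{f}\in\mathcal{S}(\bR^4)\subset L^1(\bR^4)$ since $f$ is $C_0^\infty$. Thus the modulus of the integrand is bounded by $\|g\|_\infty|\widehat{f}(k)|$, which is integrable on $V\times\bR^4$. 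Swapping gives
\[
 \int_V g\, f\!\restriction_V\, du\, d\omega
 \;=\; \frac{1}{(2\pi)^4}\int_{\bR^4} dk\,\widehat{f}(k) \int_V e^{i(k_0 u + \bk\cdot u\omega)} g(u,\omega)\, du\, d\omega.
\]

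Finally, substituting the definition of $\widehat{f}(k)=\int_{\bR^4} e^{-i\langle k,y\rangle} f(y)\,dy$ into this expression yields exactly the right-hand side of \eqref{rest}. There is essentially no obstacle: the statement is just Fourier inversion on $\bR^4$, pulled back to $V$ and reordered; the only point requiring a brief remark is the applicability of Fubini, which follows from the boundedness of $g$, the finiteness of the $(u,\omega)$-measure of $V$, and the Schwartz decay of $\widehat{f}$.
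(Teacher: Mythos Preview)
Your proof is correct and considerably more direct than the paper's. The paper first restricts to $g\in C_0^\infty(\bS^2\times(0,1))$ and interprets the restriction map $f\mapsto f\!\restriction_{V^*}$ as a Fourier integral operator, writing the formal triple integral over $k,(u,\omega),y$ as an oscillatory integral regularised by a cutoff $\chi(\epsilon k)$ and a limit $\epsilon\to 0^+$; it then rearranges to obtain the iterated integral \eqref{rest} and only afterwards passes from smooth $g$ to bounded measurable $g$ via Luzin's and Stone--Weierstrass' theorems together with dominated convergence. Your argument bypasses all of this: pointwise Fourier inversion for $f\in C_0^\infty(\bM)$ gives $f\!\restriction_V$ directly, and a single application of Fubini's theorem (using $\widehat{f}\in L^1(\bR^4)$, $|g|\leq\|g\|_\infty$, and the finite measure of $V\simeq\bS^2\times[0,1)$) yields \eqref{rest} at once for arbitrary bounded measurable $g$. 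The paper's FIO viewpoint is natural if one insists on treating restriction to a null submanifold as a distributional operation, but for this lemma as stated the elementary route you take is sufficient and cleaner.
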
  

\begin{remark}  The map $\kappa: \bS^2 \times (0,1) \ni (\omega,u) \mapsto (u, u\omega) \in \bM$ is a smooth embedding into $\bM$ of a cylinder  with image 
given by  the embedded submanifold $V^* \doteq V \setminus \{o\}$ of $\bM$. This embedding,
 in fact, gives rise to {\em a blow up} of the singularity at the tip $o$ of $V$.  Since $o$ has zero measure with respect to $du d\omega$, removing the tip of the cone
 does not affect the integration on $V$ and it permits to exploit  standard results about restrictions of smooth functions to embedded manifolds $V^*$.
 \end{remark}

\begin{proof}[Proof of Lemma \ref{lemmarest}] 
 The pull back action of $\kappa$ on  $f \in C_0^\infty(\bM)$, 
 interpreted as a restriction $f \mapsto f\spa\restriction_{V^*}$ can be represented as a   {\em Fourier integral operator}  \cite{duistermaat}.
 In practice, if   $g\in C_0^\infty(\bS^2 \times (0,1))$ we have 
 $$\int_{V^*} g f\spa \restriction_V du d\omega 	= \frac{1}{(2\pi)^4} \int dk d\omega du dx     e^{-i \langle k,x - (u, u\omega) )\rangle}
g(u,\omega) f(x)$$ where  the right-hand side has to be interpreted as  an {\em oscillatory integral}  \cite{duistermaat}:
Inserting a further smooth  compactly supported function $\chi(\epsilon k)$ in the integrand, with $\chi=1$
 in a neighborhood of $k=0$, and taking the limit $\epsilon \to 0^+$. Rearranging the integrals,
the limit produces the right-hand side of (\ref{rest}), using the fact that $ \int_{\bR^4} e^{-i \langle k,y \rangle} f(y) dy$ belongs to Schwartz space in the variable $k$,
that $ \int_{V^*}  e^{i(k_0 u +  \bk \cdot u\omega)}
g(u,\omega) du d\omega$ is bounded in the variable $k$, and exploiting Lebesgue dominated convergence theorem.
 So (\ref{rest}) holds true if  $g\in C_0^\infty(\bS^2 \times (0,1))$
and $f \in C_0^\infty(\bM)$. By a standard corollary of Luzin's theorem, if the function $g: \bS^2 \times (0,1) \to \bC$ is bounded by $M <+\infty$
 and is Borel-measurable, there is 
a sequence of compactly-supported continuous functions $g_n$, all bounded by $M$, with 
 $g_n \to g$ almost everywhere, for $n\to +\infty$, with respect the natural Borel measure on $\bS^2 \times (0,1)$. Stone-Weierstrass theorem implies that the $g_n$
 can be chosen in $C_0^\infty(\bS^2 \times (0,1))$. Using the sequence of these $g_n$, (\ref{rest}) extends to the case of $g$ bounded and measurable, exploiting 
  Lebesgue dominated convergence theorem twice,  the fact that $ \int_{\bR^4} e^{-i \langle k,y \rangle} f(y) dy$ belongs to Schwartz space in the variable $k$,
  the uniform bound $M$ for $g$ and the $g_n$ and the further bound of
 $\int_{V^*}  e^{i(k_0 u +  \bk \cdot u\omega)} g(u,\omega) du d\omega$  and all the $\int_{V^*}  e^{i(k_0 u +  \bk \cdot u\omega)}
g_n(u,\omega) du d\omega$  by $4\pi M$.   We have proved that (\ref{rest}) holds with $V$ replaced for $V^*$ in both sides. However, 
since $o$ has zero measure with respect to $du d\omega$, the integration over $V^*$ can be extended to $V$
without affecting the final result.
\end{proof}
 

 \begin{lemma} The operator $\delta^{(m)} -\delta^{(0)}: C_0^{\infty}(\bM)\to C^\infty(D)$ in \eqref{dd} can be written as
 \beq
\left((\delta^{(m)} -\delta^{(0)})f\right)(x) = \int_{\bR^4} dk \:\:
e^{i\langle k, x\rangle} b(x,k) \:\widehat{Z(f)}(k) \quad \mbox{for $f \in C_0^{\infty}(\bM)$}  \label{L1}
 \eeq
 with $x=(t,\bx)$, $k=(k_0,\bk)$ and (with notation as in Theorem \ref{propQF}):
 \begin{eqnarray}
 b(x,k)  &\doteq&  e^{-i \langle k,x \rangle} \int_{\bS^2\times [0,1]} d\omega 
 du e^{i(k_0 u + u\bk \cdot \omega)} 
 g(x,\omega,u)\:, \label{AAA}\\
g(x, \omega,u) &\doteq& (2\pi)^{-4}\theta(u^*(t,\bx,\omega) -u) u [2-(t+u)] [F_m(\sigma) + \sigma F_m'(\sigma)]\:,\label{L2}\\
Z &\doteq&  1+ t\partial_{t} + \bx \cdot \nabla_{\bx}\label{ZED}\:.
\end{eqnarray}
Furthermore  $b \in S^{-1}_{1,1}(D \times \bR^4 )$.
\end{lemma}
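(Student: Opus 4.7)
The plan splits into two stages: (a) casting the integral expression \eqref{dd} into the pseudodifferential form \eqref{L1}, and (b) verifying the symbol estimate defining $S^{-1}_{1,1}$.

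For (a), the key observation is that by the chain rule, for every $f\in C_0^\infty(\bM)$,
\[
\partial_u\bigl(uf|_V(\omega,u)\bigr) = f(u,u\omega) + u\,(\partial_t f + \omega\cdot\nabla_{\bx} f)(u,u\omega) = (Zf)|_V(\omega,u),
\]
since at a point of $V$ with spacetime coordinates $(t,\bx)=(u,u\omega)$ one has $u\partial_t + u\omega\cdot\nabla_{\bx} = t\partial_t + \bx\cdot\nabla_{\bx}$. Inserting this into $I_f$ from \eqref{integrando} and setting
\[
\tilde g(x,\omega,u)\doteq \theta(u^*(x,\omega)-u)\, u\,[2-(t+u)]\,[F_m(\sigma)+\sigma F_m'(\sigma)],
\]
turns \eqref{dd} into $\int_V \tilde g(x,\omega,u)\,(Zf)|_V(\omega,u)\, du\, d\omega$. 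For each fixed $x\in D$, $\tilde g(x,\cdot,\cdot)$ is bounded and Borel-measurable on $V$, so Lemma~\ref{lemmarest}, applied with $g$ replaced by $\tilde g(x,\cdot,\cdot)$ and $f$ by $Zf$ (keeping $x$ as a parameter), produces exactly the representation \eqref{L1} with amplitude $b$ as in \eqref{AAA}--\eqref{L2}.

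For (b), it is convenient to write
\[
b(x,k)=\int_{\bS^2}d\omega\int_0^{u^*(x,\omega)}du\, e^{i\psi(x,k,u,\omega)}\, h(x,\omega,u),
\]
with phase $\psi\doteq k_0(u-t)+\bk\cdot(u\omega-\bx)$ and amplitude $h\doteq (2\pi)^{-4}u[2-(t+u)][F_m(\sigma)+\sigma F_m'(\sigma)]$, which is jointly smooth in $(x,\omega,u)\in\overline{D}\times\bS^2\times[0,1]$ because $F_m$ is entire, $\sigma$ is polynomial, and $u^*$ is smooth and positive on $\overline{D}\times\bS^2$ (Remark~\ref{remarkadd}). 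The target bound $|\partial_x^\alpha\partial_k^\beta b(x,k)|\le C_{\alpha,\beta,K}(1+|k|)^{-1-|\beta|+|\alpha|}$ on any compact $K\subset D$ is obtained by a non-stationary/stationary phase analysis in $(u,\omega)$. One notes that $\partial_u\psi=k_0+\bk\cdot\omega$ and the tangential component of $u\bk$ on $T_\omega\bS^2$ vanish simultaneously only at $\omega=\pm\bk/|\bk|$ along the light cone $|k_0|=|\bk|$: outside any conical neighbourhood of that locus, iterated integration by parts in $u$ and on $\bS^2$ produces decay of arbitrary polynomial order in $|k|$, while at the two isolated non-degenerate critical points the stationary phase lemma on $\bS^2$ supplies the sharp decay $|\bk|^{-1}$, which is the source of the $-1$ in the symbol order. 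Each $\partial_k$ brings down a polynomial factor in $(u,u\omega,x)$ which is uniformly bounded on the compact integration domain, improving the estimate by one power of $|k|^{-1}$; each $\partial_x$ either hits the factor $e^{-i\langle k,x\rangle}$ in \eqref{AAA} and produces one power of $|k|$, or acts on the smooth quantities $h$ and $u^*$ giving bounded contributions.

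The step I expect to be the main obstacle is the careful bookkeeping of the boundary contributions at $u=u^*(x,\omega)$. Indeed $\sigma$ vanishes there while $F_m(\sigma)+\sigma F_m'(\sigma)\to F_m(0)=m^2/(8\pi)\neq 0$ for $m>0$, so integration by parts in $u$ generates a non-vanishing oscillatory surface integral over the $2$-sphere parametrised by $\omega\mapsto(u^*(x,\omega),u^*(x,\omega)\omega)\in V$, whose image is the intersection of the past null cone of $x$ with $V$; similarly, the distributional derivative $\partial_x\theta(u^*-u)$ produces surface terms when differentiating $b$ in $x$. The argument must then verify that the phase restricted to $u=u^*(x,\omega)$ is non-degenerate in $\omega$ except at $\omega=\pm\bk/|\bk|$, so that the same stationary/non-stationary dichotomy provides these boundary terms with the $S^{-1}_{1,1}$ bound as well. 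Once this is handled, the claimed symbol class membership of $b$ is established, completing the proof of the lemma.
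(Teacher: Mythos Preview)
Your derivation of \eqref{L1} is essentially the paper's: the identity $\partial_u(uf|_V)=(Zf)|_V$ followed by Lemma~\ref{lemmarest} is exactly what the authors do.

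For the symbol estimate your route diverges. You set up a stationary/non-stationary phase dichotomy in $(u,\omega)$; the paper does something more elementary. It obtains the base bound $|b(x,k)|\le C_K(1+|k|)^{-1}$ by a \emph{single} integration by parts (in $\cos\theta$ relative to the axis $\bk$ for $|\bk|^{-1}$, in $u$ for $|k_0|^{-1}$), never seeking more. It then differentiates under the integral: each $\partial_{x_i}$ hitting $\theta(u^*-u)$ produces a boundary contribution at $u=u^*(x,\omega)$, and the remaining $x$-derivatives, acting on the phase $e^{i\psi}|_{u=u^*}$ (which depends on $x$ both through $e^{-i\langle k,x\rangle}$ and through $u^*$), drop explicit polynomials in $k$. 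The resulting $\bS^2$-integrals are bounded \emph{crudely}---just by the sup of the integrand---with no further oscillatory analysis. This sidesteps entirely the obstacle you flag in your last paragraph.

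Two points where your sketch would need repair. First, your assertion that off the critical locus ``iterated integration by parts in $u$ and on $\bS^2$ produces decay of arbitrary polynomial order'' is false as stated: each integration by parts in $u$ leaves a nonzero surface term at $u=u^*$ (since $F_m(0)\neq 0$), and that surface integral over $\bS^2$ decays only like $|k|^{-1}$, not arbitrarily fast. The paper never makes or needs such a claim. Your proposed fix (non-degeneracy of the restricted phase) would work but is more labour than the paper's crude bound. Second, your justification of the $\partial_k$-improvement---``brings down a bounded factor, improving the estimate by one power of $|k|^{-1}$''---is incomplete: a bounded multiplicative factor alone does not improve decay. In the paper's scheme the mechanism is transparent: $\partial_k$ acts on the explicit polynomials in $k$ produced by the $x$-derivatives of the phase, lowering their degree by one.
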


\begin{proof} By direct inspection, one sees that
 $Z(f)\spa\restriction_{V^*} =  \partial_u u (f\spa\restriction_{V^*})$.
Next, for every $x \in D$, one has
\beq
\left((\delta^{(m)} -\delta^{(0)})f\right)(x) = \int_{\bR^4} dk \int_{\bR^4} dy \:\:,
e^{i\langle k, x-y\rangle} b(x,k) \:Z(f)(y) \quad \mbox{for $f \in C_0^{\infty}(\bM)$} \label{L1b}
\eeq
by (\ref{rest}) in the case of $g= g(x, \omega,u)$ as in (\ref{L2}), if taking (\ref{dd})
into account and where $b(x,k)$ is defined in \eqref{AAA}.
The right-hand side of (\ref{L1b}) can equivalently be written as (\ref{L1}).
Let us prove that $b \in S_{11}^{-1}(D)$. 
By definition $b \in C^\infty(D\times \bR^4)$, as one can easily check. Furthermore 
$$
|b(x,k)| = \left|\int_{\bS^2\times [0,1]} d\omega 
 du e^{i(k_0 u + u\bk \cdot \omega)} u g_0(x, \omega,u)\right|\ ,
$$
 where $|g_0(x, \omega,u)| \leq M_K< +\infty$ for  $(x,\omega,u) \in K\times \bS^2\times [0,1]$ for every compact $K\subset D$, and the derivatives 
 of $g_0$, which exists almost everywhere, are similarly bounded. Using polar coordinates with polar axis $z$ along $\bk$, to evaluate the integral above, integrating  
by parts one sees that $||\bk||\: |b(x,k)| \leq C'_K< +\infty$ if $k \in \bR^4$ and 
$x\in K$. Similarly, reducing to the previous case by 
integrating by parts again, one finds
 $|k_0| \: |b(x,k)| \leq C''_K< +\infty$ if  $k \in \bR^4$ and $x\in K$. 
Thus $$|b(x,k)| \leq C_K(1+\langle k, k\rangle^{1/2})^{-1}\:.$$
Let us improve this estimate passing to consider the derivatives.
By Leibniz and the differentiation rules under the integral sign, we get that
$\partial_{x_i}^{\alpha_i} b(x,k)$ with $\alpha_i \geq 1$ can be decomposed as a sum of terms as follows
\beq
\partial_{x_i}^{\alpha_i} b(x,k) = \int_{\bS^2} d\omega \left\{I_1(x,k,u^*,\omega) +I_2(x,k,u^*,\omega)+I_3(x,k,u^*,\omega)\right\}
\eeq
where, respectively,
\begin{align*}
I_1(x,k,u^*,\omega)&= \sum_{\gamma_i =0}^{\alpha_i -1} c_{\alpha_i ,\gamma_i} \ \partial_{x_i}^{\gamma_i}  I(x,k,u^*,\omega) \partial_{x_i}^{\alpha_i -\gamma_i}u^* (x,\omega)\ ,\\
I_2(x,k,u^*,\omega)&= \sum_{\lambda_i =1}^{\alpha_i -1}\sum_{\mu_i=0}^{\alpha_i-\lambda_i-1} \ d_{\lambda_i,\mu_i,\alpha_i}\ \partial_{x_i}^{\mu_i}\bigl[\left(\partial_{x_i}^{\lambda_i} I(x,u,k,\omega)\right)|_{u=u^*}\bigr]\ \partial_{x_i}^{\alpha_i-\lambda_i-\mu_i}u^*(x,\omega)\ ,\\
I_3(x,k,u^*,\omega)&= \int_0^{u^*(x,\omega)} \ du\ \partial_{x_i}^{\alpha_i} I(x,k,u,\omega)\ ,
\end{align*}
where, $c_{\alpha_i,\gamma_i},d_{\lambda_i,\mu_i,\alpha_i}$ are some numerical coefficients, and finally, $I(x,k,u,\omega)$ represents the integrand function in \eqref{AAA}.

By a direct inspection of the formulas it is clear that the dominant term w.r.t. the
 variable $k$, for $\langle k, k\rangle^{1/2}\ge 1$, comes from the term in $I_1$
 with the highest power of derivatives of the function $I(x,u^*,k,\omega)$. Indeed, the derivatives act on the exponential term because of the $u^*$-dependence on the space variables, dropping down terms depending solely on the variable $k$. By differentiating w.r.t. all variables $x_i$, $i=1,\cdots, 4$, one gets a polynomial in $k$ of order $|\alpha|-1$ with smooth coefficients. Differentiating w.r.t $k_i$, $i=1,\cdots,4$, $\beta_i$-times, respectively, gives a polynomial in $k$ of order $|\alpha|-|\beta|-1$, with smooth coefficients as well.
Hence we get, considering $K\subset D$ compact, that there exists $C_{\alpha,\beta,K}(a)\ge 0$ for which
\beq 
\left |\partial_k^\beta \partial_{x}^{\alpha} b(x,k)\right |\le C_{\alpha,\beta,K}(a) (1+\langle k, k\rangle^{1/2})^{-1 + |\alpha|-|\beta|}
\eeq
holds true, so that $b\in S^{-1}_{1,1}(D)$ by definition.
\end{proof}
\noindent We are in place to establish the main result of this section proving and going beyond Fredenhagen's conjecture.

\begin{theorem}
The operator $\delta^{(m)} -\delta^{(0)}$ is a pseudodifferential operator of class $L^0_{1,1}(D)$. 
\end{theorem}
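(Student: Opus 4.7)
The plan is to read the composition already prepared by the preceding lemmata: the lemma just before the theorem rewrites $\delta^{(m)}-\delta^{(0)}$ as the composition $B\circ Z$, where
\[
(Bu)(x)=\int_{\bR^4}e^{i\langle k,x\rangle}b(x,k)\widehat{u}(k)\,dk
\]
is the pseudodifferential operator attached to the amplitude $b$, and $Z=1+t\partial_t+\bx\cdot\nabla_\bx$ is the radial vector field plus identity. The lemma moreover already certifies that $b\in S^{-1}_{1,1}(D\times\bR^4)$, i.e.\ $B\in L^{-1}_{1,1}(D)$. So the only thing left is to identify $Z$ inside the pseudodifferential calculus and to apply the composition rule recalled in Remark \ref{rempseudo}(2).

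First I would observe that $Z$ is a linear partial differential operator of order $1$ with $C^\infty(D)$ coefficients, and therefore it is realized as a pseudodifferential operator through the trivial amplitude
\[
z(x,k)=1+i\langle (t,\bx),k\rangle\quad\text{(up to signs/conventions)},
\]
which satisfies the symbol estimates \eqref{stimeA} with $m=1$, $\rho=1$, $\delta=0$, on every compact $K\subset D$; hence $Z\in L^{1}_{1,0}(D)$. This is the standard embedding of classical differential operators into the Kohn--Nirenberg class, and there is no subtlety on $D$ because the coefficients $t,\bx$ and their derivatives are bounded on any compact subset of $D$.

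Next I would invoke the composition rule of Remark \ref{rempseudo}(2) with
\[
A_1=B\in L^{-1}_{1,1}(D),\qquad A_2=Z\in L^{1}_{1,0}(D).
\]
Setting $\rho=\min\{\rho_1,\rho_2\}=\min\{1,1\}=1$ and $\delta=\max\{\delta_1,\delta_2\}=\max\{1,0\}=1$, the hypothesis $0\le\delta_2=0<\rho=1$ is satisfied, so the composition $A_1A_2$ is well defined and belongs to $L^{m_1+m_2}_{\rho,\delta}(D)=L^{0}_{1,1}(D)$. Since $\delta^{(m)}-\delta^{(0)}=B\circ Z$ on $C_0^\infty(\bM)$ (restricted to $D$), this gives exactly the claim.

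I do not anticipate a serious obstacle: the whole analytic content has been absorbed into the previous lemma, which is where the real work (the symbol estimates on $b$, obtained by repeated integration by parts against the oscillatory factor $e^{i(k_0u+u\bk\cdot\omega)}$ and by controlling derivatives of $u^{*}(x,\omega)$) was done. The only point requiring minimal care is the check that the composition rule applies on the open set $D\subset\bR^4$ rather than on all of $\bR^n$; this is harmless because the statement of Remark \ref{rempseudo}(2) is local and the amplitude $b$ satisfies the $S^{-1}_{1,1}$ estimates uniformly on compacta of $D$, while the coefficients of $Z$ are smooth on $D$.
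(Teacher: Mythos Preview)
Your proof is correct and follows exactly the paper's own argument: identify $Z\in L^{1}_{1,0}(D)$, use the lemma's conclusion $b\in S^{-1}_{1,1}(D\times\bR^4)$ so that $B\in L^{-1}_{1,1}(D)$, and apply the composition rule of Remark~\ref{rempseudo}(2) with $\rho=\min\{1,1\}=1$, $\delta=\max\{1,0\}=1$, and $0\le\delta_2=0<\rho=1$ to conclude $\delta^{(m)}-\delta^{(0)}=B\circ Z\in L^{0}_{1,1}(D)$. There is nothing to add.
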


\begin{proof}
It holds   $Z \in L^{1}_{1,0}(D)$.  As shown above $b \in S^{-1}_{1,1}(D \times \bR^4 )$, 
hence by \eqref{L1b} and the recalled  composition property of pseudodifferential 
operators ((2) in Remark \ref{rempseudo}), we have that the operator in the right-hand side of \eqref{L1b} is a pseudodifferential operator of the class $L^0_{1,1}(D)$, since in our case $m_1=-1$ $m_2=1$ and $\rho_1=\rho_2=1$ and $\delta_1=1$ $\delta_2=0$. 
\end{proof}

\section{Conclusions and Outlook}\label{sec7}
We have shown how to build up, in the symplectic space of KG solutions, the analytical form of the one-parameter group of symplectic isomorphisms  corresponding 
to the modular automorphism group of von Neumann  algebra of a free massive scalar theory in the vacuum of Minkowski spacetime localized in double cones.
The same procedure has lead to an explicit formula for the  generator of that group as well. 
Many different kind of techniques contributed to achieve this goal: Mainly ideas coming from the correspondence between the Cauchy and Goursat 
problems for Klein-Gordon partial differential operators, some ideas coming from the algebraic approach to quantum field theory, even in curved spacetime,
 relating theories
on different regions of spacetime and ideas coming from the microlocal approach to distributions. 

A remarkable result concerns the fact that the theory in the bulk and that on the boundary are unitarily equivalent when promoted to von Neumann algebras theories so that, the modular groups are identified through the same unitary intertwiner.
In particular, referring to the theory defined on the boundary, the modular group preserves a conformal-geometric meaning regardless the presence of a mass for the theory in the bulk where, conversely, any geometric meaning of the modular group is absent (for nonvanishing mass). 
 
One of  the main results has however been the classification of the difference of the generators of the modular groups for the massive and 
massless cases as a pseudodifferential operator acting on the solution space of the Klein-Gordon equation. This achievement 
goes much beyond the conjecture stated in the literature as soon as  it makes the statement more precise even concerning the type of the operator, as it has been established to be of order $0$ and type $1,1$. This kind of pseudodifferential operators seems to appear rather frequently in the discussions around nonlinear hyperbolic equations, see e.g. \cite{Hoe}.
  
 Two  natural directions deserve to be further investigated. One regards the extension of some of our achievements to  curved  spacetimes.
Switching on the curvature, at least dealing with geodesically convex neighborhoods, the geometric picture is not very far from that in Minkowski spacetime. 
 At the level of Weyl algebras -- and it can be extended to include the algebras of Wick polynomials -- the main features  of the bulk-boundary interplay for geodesic double cones survive the appearance of the curvature,
 as recently established in \cite{DPP}. In that case, states strictly analogous to $\lambda$ can be defined on the boundary algebra 
 and the induced states in the bulk -- corresponding 
 to our state $\omega_m$ -- turn out to be of Hadamard class.  It would be interesting to focus on the features of modular groups of the respective von Neumann algebras.

 The other possible generalization could obviously consist of relaxing the requirement of having a free quantum field. In that case, however, the symplectic structure, which plays the crucial role in building up 
 the bulk-boundary correspondence would not exist, stopping the construction at the first step. It seems plausible that a suitable extension of the time-slice axiom to null Cauchy surfaces may help to cover the gap.
    
As recalled in the introduction, there are many
 other ideas related to the results presented in this paper. A particularly nice thing to do would be to try to push even further the 
 intuition that the modular group is a form of dynamics, by going to precisely define, in the spirit of the paper, the idea of a 
  ``local dynamics'' generated by these modular groups and study its properties. We have some interesting preliminary result, 
  at the moment only at the level of symplectic spaces and we hope to report soon on them. Another appealing idea would be 
  to try to study the properties of the group generated by all the (local) modular operators. 
  
\section*{Acknowledgements} The authors thank D. Buchholz, K. Fredenhagen, R. Giloni, N. Pinamonti and B. Schroer  for useful discussions.


\appendix
\section{Proof of some propositions}

\begin{proof}[Proof of Lemma \ref{lemmafourier}] 
Here $\Sigma_D$ is the 
restriction to $D$ of the Cauchy surface for $\bM$ at $t=1$ so that $C$ is the boundary  $\partial \Sigma_D$ of $\Sigma_D$ in $\Sigma \equiv \bR^3$. 
If $\phi \in \sS_m(D)$,  its Cauchy data on $\Sigma$ are in $C_0^\infty(\Sigma_D)$ so that they also belong to the Schwartz space on $\bR^3$, hence we can define
$\widehat{\phi}$ as in (\ref{nFMM}).
 Since $m>0$,  $\widehat{\phi}$ belongs again to the Schwartz space.
Thus (\ref{nFMM''}) and (\ref{nFMM'}) hold true interpreting the integrals in the standard way,
 since the smooth function defined in the right-hand side of  (\ref{nFMM''}) must coincide with $\phi$ (more precisely, $L^{m}_{\bM D} \phi$) everywhere on $\bM$,
in view  of the standard uniqueness theorem for the smooth Cauchy problem for Klein-Gordon equation (e.g. Theorem 10.1.2  in \cite{Wald1}); 
 indeed  both functions  satisfy the Klein-Gordon equation in $\bM$ 
 and match the same Cauchy data at $t=0$ on the whole $\Sigma$.   
If $\phi \in \widetilde{\sS}_m(D)$ the situation is more complicated. 
However, in view of the definition of $\widetilde{\sS}_m(D)$,
(a) $\phi(1,\cdot)$ and $\partial_t\phi(1, \cdot)$ are smooth in $\Sigma_D \cup \partial \Sigma_D$
 (which has finite Lebesgue measure)  vanishing outside it and (b) $\phi(1,{\bx})=0$ if ${\bx} \in \partial \Sigma_D$ and outside $\Sigma_D \cup \partial \Sigma_D$ on $\Sigma$. 
From (a)  $\partial_t\phi(1, \cdot)\in L^2(\bR^3,d{\bx})$ is bounded and has compact support
and thus its Fourier transform $\widehat{\partial_t\phi(1, \cdot)}(\bk)$ is bounded  and  
$\widehat{\partial_t\phi(1, \cdot)}(\bk), E(\bk)^{-1}\widehat{\partial_t\phi(1, \cdot)}(\bk), E(\bk)^{-1/2}\widehat{\partial_t\phi(1, \cdot)}(\bk)$ belong to 
$L^2(\bR^3, d\bk)$ (even for $m=0$).
From (a) and (b)  $\phi(1,\cdot) \in H^1_0(\Sigma_D) \subset H^1(\bR^3)$ by the Sobolev trace theorem 
(e.g. see Theorem 9.5.1 in \cite{Aubin}). Consequently 
$\widehat{\phi(1, \cdot)}(\bk), E(\bk)\widehat{\phi(1, \cdot)}(\bk)$ belongs to $L^2(\bR^3, d\bk)$.
We conclude from (\ref{nFMM}) that $\widehat{\phi}$ belongs to $L^2(\bR^3, d{\bk})$ when  $\phi \in \widetilde{\sS}_m(D)$, the right-hand sides of (\ref{nFMM''}) and (\ref{nFMM'}), 
{\em henceforth  indicated by  $\psi$ and $\partial_t\psi$  respectively}, 
are well defined {\em if we interprete the integrals in the sense of Fourier-Plancherel transform}.
By direct inspection one sees that the right-hand side of (\ref{nFMM''}),  $\psi$, is a weak solution of Klein-Gordon equation in $\bM$. 
Furthermore $\psi \in C([1,T]; H^{1}(\bR^3)) \cap C^1([1,T]; H^0(\bR^3))$, for every $T>1$, and it admits as initial conditions for $t=1$
just the functions $\phi(1,\cdot) \in H^1(\bR^3)$ and $\partial_t\phi(1,\cdot)\in H^0(\bR^3)$ (extended to the zero functions outside
 $\Sigma_D$).  We want to prove that $\psi = \phi$ in $D$. First consider the set $[1,+\infty) \times \bR^3$, that is the causal future of $\Sigma_D$.
We can smoothly extend $\phi \in \widetilde{\sS}_m(D)$ in the whole set $J^+(o)$ as the unique solution of the Goursat problem
(Theorem 5.4.2 in \cite{friedlander})
 with characteristic datum $u^{-1}L^m_{V\widetilde{D}}\phi$ 
(smoothly) extended to the zero function outside $\partial J^+(o)\setminus V$. Then, we can further extend $\phi$ to the the half space  
$[1,+\infty) \times \bR^3$, simply defining 
the extension as the null function outside $J^+(o)$. This extension, by construction, is a smooth solution of the Klein-Gordon equation in $(1,+\infty) \times \bR^3$ away from
$\partial J^+(o)$. However it is a weak solution of the Klein-Gordon equation  in a neighborhood of $\partial J^+(o)$ in $(1,+\infty) \times \bR^3$. For a test function 
$\kappa \in C_0^\infty(\bM)$, it  can be verified by direct inspection 
using the fact that the boundary term on  $\partial J^+(o)$ arising by employing the integration by parts and Stokes-Poincar\'e lemma is proportional to
$\int_{\partial J^+(o) \cap ((1,+\infty) \times \bR^3)}  d\omega du (\kappa \partial_u \phi - \phi \partial_u \kappa)  =0$, since the (extended) function $\phi$ vanishes on
 $\partial J^+(o) \cap ([1,+\infty) \times \bR^3) = \partial J^+(o)  \setminus V$
with its derivative $\partial_u$ (i.e. along the affine parameter $u$ describing the null geodesics forming $\partial J^+(o)$). Summing up, the extended function $\phi$
on the whole $[1,+\infty) \times \bR^3$ is a weak solution of Klein-Gordon equation in $(1,+\infty) \times \bR^3$  that matches the initial conditions 
 $\phi(1,\cdot) \in H^1(\bR^3)$ and $\partial_t\phi(1,\cdot)\in H^0(\bR^3)$ (extended to the zero functions outside
 $\Sigma_D$). One sees directly that the constructed extended $\phi$ is in the class
$C([1,T]; H^{1}(\bR^3)) \cap C^1([1,T]; H^0(\bR^3))$ for every $T>1$ (in particular using the fact that 
$\phi(t,\cdot) \in H^1_0(J^+(o) \cap \Sigma_t)$ for every $t\geq 1$, $\Sigma_t$ being the Cauchy surface of $\bM$ at fixed $t$). In view of Theorem 3.2 in \cite{sogge}, 
in $[1,+\infty) \times \bR^3$,
 the extended $\phi$ must coincide to the other weak solution $\psi$ which satisfies the same requirements 
and initial data. In particular, $\phi=\psi$  in $D \cap ([1,2] \times \bR^3)$ so that
 (\ref{nFMM''}) holds true therein.
 To conclude we prove  (\ref{nFMM''}), i.e.
 that $\phi=\psi$, even in $D \cap ([0,1] \times \bR^3)$. If $\psi$ were smooth in the open set $D \cap ([0,1] \times \bR^3)$, 
in view of the standard 
 uniqueness theorem for the Cauchy problem in  $D \cap ([0,1] \times \bR^3)$ (e.g. Theorem 10.1.2  in \cite{Wald1}), 
 we would have $\phi=\psi$  in $D \cap ([0,1] \times \bR^3)$. To prove that
 $\psi$ is smooth therein we notice that if it were not the case for a point $p \in  D \cap ([0,1] \times \bR^3)$, 
there would be a full  complete null geodesic $\gamma$ 
 through $p$ with $\gamma$ included in the singular support of the distribution $\psi$, in view of the {\em H\"ormander's theorem of propagation of singularities},
  since $\psi$ solves the Klein-Gordon equation.
 On the other hand $\gamma$ would enter the region $D \cap ([1,2] \times \bR^3)$, where we already know that $\psi$ is smooth and thus its singular support is empty, 
 giving rise to a contradiction. Thus 
 $\psi$ is smooth in $D \cap ([0,1] \times \bR^3)$ as wanted and
 (\ref{nFMM''}) holds in $D \cap ([0,1] \times \bR^3)$, too.
 \end{proof}

\begin{proof}[Proof of Lemma \ref{lemmalemme}] In the hypotheses of the lemma, assuming in particular that
the case (b) holds (the other is similar) we have that
both $\widehat{\Phi'(\omega,k)}$ and  $k\widehat{\Phi'(\omega,k)}$ define functions in $L^2(\bS^2\times \bR,d\omega dk)$ varying
$(\omega,k)\in \bS^2\times \bR$.
Now defining $\theta(k) = 1$ for $k\geq 1$ and $\theta(k)=0$ otherwise, we have:
$$\int_{\bS^2 \times \bR^+} d\omega dk 2k  \overline{\widehat{\Phi(\omega,k)}} \widehat{\Phi'(\omega,k)}=
\lim_{\epsilon \to 0_+} \int_{\bS^2 \times \bR} d\omega dk 2  \overline{\widehat{\Phi(\omega,k)}} \: \left(e^{-\epsilon k}\theta(k) k\widehat{\Phi'(\omega,k)}\right)\:.$$
Using the fact that $\bR \ni k \mapsto e^{-\epsilon k} \theta(k)$
is  $L^2(\bR, du)$ and is the Fourier-Plancherel transform of $\bR \ni u' \mapsto (2\pi i)^{-1}/(u'-i\epsilon)$ and making use of  the convolution theorem for $L^2(\bR,du)$ functions
\cite{friedlander2} as well as Plancherel's theorem,  the integral in the right-hand side can be re-written as:
$$\int_{\bS^2 \times \bR} d\omega dk 2  \overline{\widehat{\Phi(\omega,k)}} \: \left(e^{-\epsilon k}\theta(k) k\widehat{\Phi'(\omega,k)}\right) =\int_{\bS^2\times \bR} d\omega du \overline{\Phi(\omega,u)} \int_{\bR} du' 
\frac{\partial_u \Phi'(\omega,u')}{u-u'-i\epsilon}\:.$$
Integrating by parts in the last integral, taking into account that $\Phi'(\omega,u) \to 0$ as $u \to 0_+$, and explicitly writing the supports of the functions,
we end up with:
\beq \int_{\bS^2 \times \bR} d\omega dk 2  \overline{\widehat{\Phi(\omega,k)}} e^{-\epsilon k}\theta(k) k\widehat{\Phi'(\omega,k)}=
-\frac{1}{\pi} 
\int_{\bS^2\times [0,1]} \sp\sp\sp d\omega du \int_{[0,1]} du'\frac{\overline{\Phi(\omega,u)} \Phi'(\omega,u')}{(u-u'-i\epsilon)^2}\:. \label{LLast}\eeq
Eventually, we notice that
$\int_{\bS^2\times [0,1]} d\omega  du \int_{[0,1]} du' 
\left|\frac{\overline{\Phi(\omega,u)} \Phi'(\omega,u')}{(u-u'-i\epsilon)^2}\right| < + \infty$
because the function $\bS^2 \times [0,1]\times [0,1] \ni (\omega,u,u') \mapsto |\Phi'(\omega,u')|/  (|u-u'|^2+\epsilon^2)$ is bounded by construction 
 ($\Phi'(\omega,u)$ vanishes uniformly in the angles as $u\to 0$) and  
 $\bS^2 \times [0,1] \ni (\omega,u) \mapsto \Phi(\omega,u)$ 
is $L^1(\bS^2\times [0,1],d\omega du)$ since it is $L^2(\bS^2\times [0,1], d\omega du)$. This implies that 
we can apply Fubini-Tonelli's theorem in the last iterated integral in (\ref{LLast})  computing it in the measure 
product, proving (\ref{lambda2}) and concluding the proof. 
\end{proof}
 
\begin{proof}[Proof of Lemma \ref{lemmabastardo}] In this proof, for every $\phi \in \sS_m(D)$,
 the same symbol is used for its unique smooth 
extension $L^m_{\bM D}\phi \in \sS_m( \bM)$ defined in the whole 
$\bM$. 
The right-hand side in \eqref{nFMM}  is nothing but the integral over $\Sigma$ of the $3$-form 
$\eta_{{\bk}, \phi}$ that appeared in \eqref{3forma}. 
As $\eta_{{\bk}, \phi} $ is constructed out of solutions of Klein-Gordon equation, it satisfies the conservation relation $d \eta_{{\bk}, \phi}  =0$ (which 
corresponds to the conservation of 
the associated current  $J_{{\bk}, \phi}  \doteq *\eta_{{\bk}, \phi} $).
Now we specialize $\Sigma$ to the surface at  $t=1$,  and  consider $\phi\in \sS_m(D)$ so that its Cauchy data are supported
 in the portion  of $\Sigma$ included in $D$.
Stokes-Poincar\'e theorem  applied to  $\eta_{{\bk}, \phi}$ 
leads immediately to
$\widehat{\phi}({\bk}) = \int_\Sigma \eta_{{\bk}, \phi}  = \int_V \eta_{{\bk}, \phi} \:, $
where both $3$-surfaces have future-oriented normal vector.  
Exploiting the fact that, on $V$, $t=u= ||{\bx}||$ so that one can write ${\bx} = u\omega$
for some vector $\omega \in \bS^2$, the second integral gives
(integrating by parts noticing that $ L^m_{VD}(\phi)$  vanishes for $u=0$ and  $u\geq 1$) 
\beq 
 \widehat{\phi}({\bk}) = \frac{-2i}{(2\pi)^{3/2} \sqrt{2E}} \int_{\bS^2\times [0,1]} \sp\sp\sp d\omega du \:
 ue^{-i({\bk} \cdot u \omega - E u)} \partial_u  L^m_{VD}(\phi) (\omega,u) 
\:. \label{mode2}
 \eeq
 Now we can insert the right-hand side of (\ref{mode2}), computed for $\phi_1$ and $\phi_2$, in the left-hand side of 
(\ref{id}) and try to interchange  the order of integrations. This cannot be done directly, but an $\epsilon$ prescription needs. 
However, a straightforward application 
of Fubini-Tonelli theorem and Lebesgue dominated convergence theorem lead to, if $T \doteq \bS^2\times \bS^2 \times [0,1]\times [0,1]$
 \begin{align}
 \int_{\bR^3} d{\bk} \:\: \overline{\widehat{\phi_1}({\bk})}\widehat{\phi_2}({\bk}) =
 \lim_{\epsilon \to 0_+} 
 \frac{2}{(2\pi)^3} \int_{T} d\omega d\omega'  & du du'  \left(\partial_u \Phi_1(\omega,u)\right)  \left(\partial_{u'} \Phi_2(\omega',u')\right) \nonumber\\
&\qquad \times\int_{\bR^3}\sp\sp d {\bk} \frac{uu'}{E} e^{-i({\bk} \cdot (u'\omega' - u \omega)  - E( u'-u +i\epsilon))}\:,  \label{inter}
\end{align}
where $\Phi_1 \doteq  L^m_{VD}(\phi_1)$ and $\Phi_2 \doteq  L^m_{VD}(\phi_2)$. We start considering the case $m>0$.
The last integral in (\ref{inter}) can be explicitly computed passing in polar coordinates in the variable ${\bk}$ and exploiting the result 3.961 in \cite{Grad}:
$$ \int_{\bR^3}d {\bk} E^{-1} e^{-i({\bk} \cdot (u'\omega' - u \omega)   - E( u'-u + i\epsilon))} =  I(u,u',\omega,\omega',m,\epsilon)
$$
where
$$I(u,u',\omega,\omega',m,\epsilon)\doteq\frac{4\pi m K_1\spa\left(   m \sqrt{    2 uu' (1- \cos\widehat{\omega \omega' })  +2i\epsilon (u-u' -i\epsilon/2)}      \right)     }{ 
\sqrt{    2 uu' (1- \cos\widehat{\omega \omega' })  +2i\epsilon (u-u' -i\epsilon/2)}  }
 \:.$$
 (From now on, considering the complex functions we shall encounter which admit the origin as branch point,  we always assume that the cut stays along the negative  real axis.)
For $\epsilon>0$ fixed,  this function is smooth and exponentially vanishes as the argument of $K_1$ tends to $+\infty$.
We have obtained that:
\begin{align}
&\int_{\bR^3} d{\bk} \:\: \overline{\widehat{\phi_1}({\bk})}\widehat{\phi_2}({\bk})\nonumber\\
&\qquad=\lim_{\epsilon\to 0_+} \int_{T} d\omega d\omega'  du du' u u' \left(\partial_u \Phi_1(\omega,u)  \partial_{u'} \Phi_2(\omega',u')\right)I(u,u',\omega,\omega',m,\epsilon) \:. \label{tot}
\end{align}

Our last step consists of working out  the limit above explicitly showing that it 
can be re-arranged as the right-hand side of (\ref{lambdaagg}).
First of all, we notice that the various integrations computed before taking the limit  can be interchanged in view of Fubini-Tonelli theorem, as the integrand is continuous with
 compact support and thus they are integrable in the product measure.  So we start performing the integral over the angles $(\omega,\omega') \in \bS^2 \times \bS^2$ decomposing it into two integrals 
 computed over two corresponding regions. 
 $A_\delta$ individuated by $1- \cos\widehat{\omega \omega' } > \delta$ and $B_\delta$ individuated by $0\leq 1- \cos\widehat{\omega \omega' } \leq  \delta$
 for a fixed $\delta >0$. Let us separately consider the two terms of the decomposition:
$$\lim_{\epsilon \to 0_+}\int_{T} = \lim_{\epsilon \to 0_+}\int_{A_\delta \times [0,1]^2} +\lim_{\epsilon \to 0_+} 
\int_{B_\delta \times [0,1]^2}$$ of the integral in the right-hand side of (\ref{tot}).
The kernel containing the parameter $\epsilon$ is jointly continuous   in all variables including $\epsilon \in [0, \epsilon_0]$ and thus it is $\epsilon$-uniformly
 bounded in $A_\delta \times [0,1]^2$. Using Lebesgue theorem again:
 \begin{align*}
  \lim_{\epsilon\to 0_+} \int_{A_\delta \times [0,1]^2 } \sp\sp\sp\sp\sp\sp\sp d\omega d\omega'  du du' u u' & \left(\partial_u \Phi_1(\omega,u) \partial_{u'} \Phi_2(\omega',u')  \right)
I(u,u',\omega,\omega',m,\epsilon)\\
 & = \int_{A_\delta \times \bR_+^2 }\sp\sp\sp\sp d\omega d\omega'  dv dv' v v' f_\lambda(v,\omega) 
g_\lambda(v',\omega')  
I(v,v',\omega,\omega',m,0)
\end{align*}
where we have replaced the range $[0,1]$ of $u,u'$ with $\bR_+$ since $\psi$ and $\psi'$ vanish for $u, u'\geq 1$ and, only in the last 
step, we have changed variables: $u \to v \doteq\lambda u$ and $u' \to v' \doteq u'/\lambda$, where $\lambda >0$ is a fixed real
and we have defined $f_\lambda(x,\omega) \doteq \partial_u \Phi_1(\omega,u)|_{u= x/\lambda}$
and 
 $g_\lambda(x,\omega) \doteq \partial_u \Phi_2(\omega,u)|_{u= \lambda x}$
Notice that the result cannot depend on $\lambda$, so we are allowed to take the limit as $\lambda \to +\infty$.
Notice that $\bS^2\times \bS^2\times \bR_+\times \bR_+  \ni(\omega,\omega',v,v') \mapsto f_\lambda(v,\omega) g_\lambda(v',\omega')$
is $\lambda$-uniformly bounded by construction because $\Phi =u\phi\spa $  has 
bounded $u$-derivative as it can be proved by direct inspection starting from the fact that the functions $\phi \in \sS_m(\bM)$
are smooth.  Furthermore, point-wisely,     $f_\lambda(v,\omega) g_\lambda(v',\omega')\to \partial_u\Phi_1(0,\omega) 0 =0$ constantly as $\lambda \to +\infty$ because $\Phi_2(\omega,u)$ smoothly vanishes if $u >1$. 
Finally the function 
$$A_\delta \times \bR_+\times \bR_+  \ni(\omega,\omega',v,v') \mapsto \left|vv' \frac{m K_1\left(   m \sqrt{    2 vv' (1- \cos\widehat{\omega \omega' }) }      \right)     }{ 
\pi^2 \sqrt{    2 vv' (1- \cos\widehat{\omega \omega' })  }  }\right|$$
is integrable. Summing up, Lebesgue theorem of dominated convergence eventually proves that, for every $\delta>0$:
 \begin{align*}
  \lim_{\epsilon\to 0_+} \int_{A_\delta \times [0,1]^2 } \sp\sp\sp\sp\sp\sp\sp d\omega d\omega'  & du du' u u' \ \partial_u \Phi_1(\omega,u)  \partial_{u'} \Phi_2(\omega',u') \ 
 I(u,u',\omega,\omega',m,\epsilon)\\
& =\lim_{\lambda \to +\infty}\int_{A_\delta \times \bR_+^2 }\sp\sp\sp\sp d\omega d\omega'  dv dv' v^2 v'^2 f_\lambda(v,\omega) 
g_\lambda(v',\omega')  
I(v,v',\omega,\omega',m,0) = 0\ .
\end{align*}
We conclude that,
for every $\delta>0$, $\int d{\bk}  \overline{\widehat{\phi_1}}\widehat{\phi_2}$ amounts to:
 $$ \lim_{\epsilon\to 0_+} \int_{B_\delta \times [0,1]^2 } \sp\sp\sp\sp\sp\sp\sp d\omega d\omega'  du du' u u' \left(\partial_u \Phi_1(\omega,u)  \partial_{u'} \Phi_2(\omega',u')\right)  
 I(u,u',\omega,\omega',m,\epsilon)\:.$$
To compute that limit, we notice that, while evaluating the integral above,
 the argument of $K_1$ varies, in fact,  in a bounded set,
since $\epsilon$ can be fixed in a interval $[0, \epsilon_0)$ and the supports of the $\partial_u\phi_i$ are compact,
 therefore we can exploit the expansion about $z=0$:
\beq
\frac{m K_1(m z)  }{z} = \frac{1}{z^2} - \frac{m^2}{4} \ln(z^2) + O(1)\:.
\eeq
The last two terms inserted in the integrand over  $B_\delta \times [0,1]^2$ give rise to functions which are integrable also for $\epsilon=0$. In that case, using Lebesgue theorem again, one sees that the limit as $\epsilon \to 0_+$ can be computed by direct 
substitution of $\epsilon$ with $0$ in the integrand. Next, as $\delta>0$ can be taken arbitrarily small and using the fact that
the measure of $B_\delta \times [0,1]^2$ vanishes as $\delta \to 0$, one easily concludes that the terms $- \frac{m^2}{4} \ln(z^2) + O(1)$
give no contribution to $\int d{\bk}  \overline{\widehat{\phi_1}}\widehat{\phi_2} $. We end up with, 
where we make explicit the factor $u$ in the definition of $\Phi(\omega,u) = u \phi(\omega,u)$:
\beq 
\int_{\bR^3} d{\bk}\overline{\widehat{\phi_1}({\bk})}\widehat{\phi_2}({\bk})= 
\lim_{\epsilon\to 0_+} 
\int_{B_\delta \times [0,1]^2 } \sp\sp\sp\sp\sp\sp\sp d\omega d\omega'  du du' 
\frac{u u' \partial_u u\phi_1(\omega,u)
  \partial_{u'} u'\phi_2(\omega',u')}{2\pi^2 \left(uu' (1- \cos\widehat{\omega \omega' })  +i\epsilon (u-u' -i\epsilon/2)\right)}.\label{Last}
\eeq	
Integrating by parts in the variable $u$ the right-hand side becomes:
\beq
\int_{B_\delta \times [0,1]^2 } \sp\sp\sp\sp\sp\sp\sp d\omega d\omega'  du du' 
 \frac{\phi_1(\omega,u)  \left(\partial_{u'} u'\phi_2(\omega',u')\right) (i\epsilon u u'^2   + \epsilon^2 uu' /2)}{2\pi^2 \left(uu' (1- \cos\widehat{\omega \omega' })  +i\epsilon (u-u' -i\epsilon/2)\right)^2}\:.\label{last}
 \eeq
The part of integral   proportional to $\epsilon^2$ vanishes as $\epsilon\to 0_+$ as we go to prove. Below,  
 $\omega$ is individuated by its polar angles, $\varphi \in (-\pi,\pi)$, $\theta\in (0,\pi)$. Furthermore,  we have fixed the polar axis $z$ to 
be $\omega'$ performing the integration in $d\omega$, so that, in fact,  $\theta$ and $\varphi$ 
depends on $\omega'$ parametrically, too. We have, for some constant $C\geq 0$  depending on $\phi_1$ and $\phi_2$
\begin{align}
&\left|\int_{B_\delta \times [0,1]^2 } \sp\sp\sp\sp\sp\sp\sp d\omega d\omega'  du du' 
\phi_1(\omega,u)  \left(\partial_{u'} u'\phi_2(\omega',u')\right)
 \frac{\epsilon^2 u u'}{2\pi^2 \left(uu' (1- \cos\widehat{\omega \omega' })  +i\epsilon (u-u' -i\epsilon/2)\right)^2}\right|\nonumber \\
 &\leq  2\pi  C
\int_{[0,1]^2 }\sp \sp\sp  du du'
 \frac{\epsilon}{|u-u'|} \arctan\left(\frac{|u-u'|}{\epsilon}\right) \to 0 \:.
\end{align}
In the last step we used the fact that $\bR_+ \ni \mapsto x^{-1} \arctan x$ is bounded (so, the integrand above is bounded  when $(u,u')\in [0,1]^2$), and 
$ \frac{\epsilon}{|u-u'|} \arctan\left(\frac{|u-u'|}{\epsilon}\right) \to 0$ pointwisely as $\epsilon \to 0_+$ so that we can exploit Lebesgue dominated 
convergence theorem.

\noindent Starting form \eqref{Last}
%
and performing an integration by parts we get:
\begin{align}
&\int_{\bR^3} d{\bk} \:\: \overline{\widehat{\phi_1}({\bk})}\widehat{\phi_2}({\bk})\nonumber\\
 &= \frac{1}{2\pi^2}\lim_{\epsilon\to 0_+}\left\{ 
\int_{\bS^2 \times [0,1]^2 } \sp\sp\sp\sp\sp\sp\sp d\omega'  du du' \partial_{u'} u'\phi_2(\omega',u') 2\pi
  \left(\frac{i \epsilon u' \phi_1(u, \omega' )   }{i\epsilon (u-u' -i\epsilon/2)}\right) \nonumber\right. \\
&\qquad\qquad\quad
-\int_{\bS^2 \times [0,1]^2 } \sp\sp\sp\sp\sp\sp\sp d\omega'  du du' \partial_{u'} u'\phi_2(\omega',u')
 \int_{0}^{2\pi} \sp \sp d\varphi  \left(\frac{i \epsilon u' \phi_1(u, 1-\delta, \varphi )   }{uu' \delta  +i\epsilon (u-u' -i\epsilon/2)}\right)\label{LLLast} \\
&\qquad\qquad \left.
-\int_{\bS^2 \times [0,1]^2 } \sp\sp\sp\sp\sp\sp\sp d\omega'  \spa du du' \partial_{u'} u'\phi_2(\omega',u')\sp \int_{0}^{2\pi} \sp\sp \spa d\varphi 
 \int_{1-\delta}^1 \sp \sp\sp d \cos \theta
 \left(\frac{i \epsilon u' \partial_{\cos\theta} \phi_1(u,\theta, \varphi )   }{uu' (1- \cos \theta)  +i\epsilon (u-u' -i\epsilon/2)}\right) \right\} \nonumber.
\end{align}	
 As we prove shortly, the last two integrals vanish so that we get, integrating by parts in the remaining integral:
\beq \int_{\bR^3} d{\bk} \:\: \overline{\widehat{\phi_1}({\bk})}\widehat{\phi_2}({\bk}) = \lim_{\epsilon \to 0_+}-\frac{1}{\pi}  
\int_{\bS^2\times [0,1]\times [0,1]} \sp\sp\sp d\omega  du du' \frac{u\phi_1(\omega,u) u'\phi_2'(\omega,u')}{(u-u'-i\epsilon)^2}\:.
\eeq
This is just what we need to conclude the proof of the theorem. 
 To complete the proof we have to prove that the last two integrals in (\ref{LLLast})  vanish.
Defining, if $\cos \theta_\delta = 1-\delta$,
$$\Psi(u) \doteq \int_{0}^{2\pi} d\varphi  \phi_1(u, \theta_\delta, \varphi ) \:,$$
the next to  last integral 
in (\ref{LLLast}) can be re-arranged as:
$$i\epsilon \int_{\bS^2 \times [0,1]^2 } \sp\sp\sp\sp\sp\sp\sp d\omega'  du du'
\left(\partial_{u'} u'\phi_2(\omega',u')\right) 
\Psi(u)
 \frac{u'}{u'\delta +i\epsilon } \partial_u\ln\left( uu' \delta + i\epsilon (u-u' -i\epsilon/2) \right)\:.$$
In turn this integral decomposes into three parts if integrating by parts:
\begin{align}
&-i\epsilon \int_{\bS^2 \times [0,1]^2 } \sp\sp\sp\sp\sp\sp\sp d\omega'  du du'
\left(\partial_{u'} u'\phi_2(\omega',u')\right) 
 \partial_u\Psi(u)
 \frac{u'}{u'\delta +i\epsilon } \ln\left( uu' \delta + i\epsilon (u-u' -i\epsilon/2) \right) \nonumber \\
 &+ i\epsilon  \int_{\bS^2 \times [0,1]^2 } \sp\sp\sp\sp\sp\sp\sp d\omega'  du du'
\left(\partial_{u'} u'\phi_2(\omega',u')\right) 
 \Psi(0)
 \frac{u'}{u'\delta +i\epsilon } \ln\left(u' +i\epsilon/2) \right)\nonumber  \\
 &+ i\epsilon \ln(-i\epsilon) \int_{\bS^2 \times [0,1]^2 } \sp\sp\sp\sp\sp\sp\sp d\omega'  du du'
\left(\partial_{u'} u'\phi_2(\omega',u')\right) 
 \Psi(0)
 \frac{u'}{u'\delta +i\epsilon }\:.\label{LLLL}
\end{align}
Then notice that the logarithm as well as the fraction $u'/(u'\delta -i\epsilon)$ are 
integrable  for every values of $\epsilon\geq 0$, including $\epsilon =0$. 
Using the fact that $|u' /(u'\delta -i\epsilon)|$ is bounded uniformly in $\epsilon$ and that 
$u'\phi_2(\omega',u') \partial_u\Psi(u)$ is bounded on the domain of integration, 
each integral above  can be proved to vanish as $\epsilon \to 0_+$ due to the presence of the factor $i\epsilon$.
Let us discuss in details the most difficult integral, that is the first one.
Assuming $\delta, \epsilon >0$ sufficiently small,  taking the absolute value of the integrand,  the first integral in (\ref{LLLL}) can be bounded by, where 
$K\geq 0$ is some constant depending on $\phi_1$ and $\phi_2$,
\[
\epsilon K  \int_{[0,1]^2 } \sp\sp\sp\sp  du du' 
 \sqrt{ \left[\ln(uu' \delta)\right]^2 +\pi^2/4} \to 0 \quad \mbox{as $\epsilon \to 0_+$.}
 \]
 The remaining two integrals  in (\ref{LLLL}) can be treated similarly proving much more straightforwardly  that they vanish as well.
Finally, turning our attention on (\ref{LLLast})  again,
defining: 
$$\frac{u}{\sqrt{1-\cos\theta}} \Psi(u,  \theta) \doteq \int_{0}^{2\pi} d\varphi \partial_{\cos\theta} \phi_1(u, \theta, \varphi ) \:,$$
where $\Psi$ is bounded.
The factor $u/\sqrt{1-\cos\theta}$ arises from the fact that, in view of the restriction to $V$, $\phi\restriction_V = 
\phi(u, u\cos \theta, u \sin\theta \cos \varphi,  u \sin\theta \sin \varphi)$ and $d\sin \theta /d\cos \theta = \cos \theta /\sqrt{(1-\cos \theta)(1+\cos \theta)}$.
Thus the last integral in (\ref{LLLast}) can be re-arranged as:
$$i\epsilon \int_{\bS^2 \times [0,1]^2 } \sp\sp\sp\sp\sp\sp\sp d\omega'  du du'
\left(\partial_{u'} u'\phi_2(\omega',u')\right) 
  \int_{0}^\delta \sp  \frac{d \mu}{\sqrt{\mu}}
 u\Psi(u,\theta)
 \frac{u'}{u'\mu +i\epsilon } \partial_u\ln\left( uu' \mu + i\epsilon (u-u' -i\epsilon/2) \right)\:.$$
By integrating by parts it can be re-arranged to:
\beq 
&-i2\epsilon\spa  \int_{\bS^2 \times [0,1]^2 } \spa d\omega'  du du' 
\left(\partial_{u'} u'\phi_2(\omega',u')\right) 
   \int_{0}^{\sqrt{\delta}} \sp d \sqrt{\mu}
 \frac{ \partial_u u\Psi(u,\mu)  u'}{u' (\sqrt{\mu})^2 +i\epsilon } \ln\left(u' \mu + i\epsilon (u-u' -i\epsilon/2) \right)\:. \nonumber 
\eeq
As before, assuming $\epsilon$ and $\delta$ small enough,  this integral can be bounded by:
\[
\epsilon K'  \int_{[0,1]^2\times [0,\sqrt{\delta}] } \sp\sp\sp\sp  \sp\sp\sp\sp  du du' dx
 \sqrt{ \left[\ln(uu'x)\right]^2 +\pi^2/4} \to 0 \quad \mbox{as $\epsilon \to 0_+$.}
 \]
 where $K'\geq 0$ is some constant  depending on $\phi_1$ and $\phi_2$. \\
 The case $m=0$ is very similar but much more simple. In fact, coming back to (\ref{inter}),  making use of 3.895 in \cite{Grad},
 (\ref{Last}) arises directly, the analog integral on $A_\delta$ being vanishing similarly. Afterwards the proof 
 is the same as for $m>0$. 
 \end{proof} 
 
\begin{proof}[Proof of Theorem \ref{T3}]
Replace the coordinate $u\in (0,1)$ on $V$ with the coordinate 
$\ell \in (-\infty,+\infty)$ such that
$u= 1/(1+ e^{-\ell})$ and $X = \partial_\ell$  induces trivial displacements $\ell \to \ell+\tau$. 
 The functions $\Phi \in \sS(V)$
turns out to be smooth jointly in the variables $(\omega,\ell)$, they vanishes for $\ell$ sufficiently 
large and vanish with all of their  
derivatives, uniformly in  $\omega$, for $\ell \to -\infty$ with order $O(e^{\ell})$. In particular 
$\Phi \in \sS(V)$ belongs to 
the Schwartz  space $\mS(\bS^2\times \bR)$ of the complex smooth functions which vanishes as $|\ell| \to +\infty$, 
uniformly in $\omega$ and with all of their derivatives, faster than every inverse power of $|\ell|$. 
Therefore the $\ell$-Fourier transform of $\Phi \in \sS(V)$
is well defined and belongs to $\mS(\bS^2\times \bR)$ again.  
Then the function
$\sK'_\lambda : \sS(V) \to L^2(\bS^2 \times \bR;  m(h)  d\omega
 dh)$ is well defined (notice that $|m(h) | \leq C(1+|h|^2)$ for some constant $C\geq 0$ 
 and all $\ell \in \bR$).
 By direct inspection, using standard theorems of Fourier transform theory, and the fact that
 $\widetilde{\Phi(\omega,-h)} = \overline{\widetilde{\Phi(\omega,h)}}$
if $\Phi \in \sS(V)$ since it is real, 
   one has that
 \beq
 \sigma(\Phi,\Phi') = -2\Imm \langle \sK'_\lambda \Phi,  \sK'_\lambda  \Phi'\rangle_{L^2(\bS^2 \times \bR; 
 m(h)  d\omega dh)} \quad \mbox{for all $\Phi,\Phi' \in \sS(V)$.} \label{simp3}
 \eeq
Let us now prove that  $\overline{  \sK'_\lambda  \sS(V) + i \sK'_\lambda \sS(V)} = L^2(\bS^2 \times \bR; m(h)  d\omega dh)$. 
This is an immediate consequence of the following lemma, whose proof can be found in this appendix. 

\begin{lemma} \label{lemmaA}
 Referring to coordinates $(\omega,\ell)\in \bS^2\times \bR$ to define $C_0^\infty(\bS^2\times \bR)$, the space 
 $\sK'_\lambda  C_0^\infty(\bS^2\times \bR) + i \sK'_\lambda C_0^\infty(\bS^2\times \bR)$
is dense in $L^2(\bS^2 \times \bR; d\mu_{\bS^2} m(h)  dh)$. 
\end{lemma}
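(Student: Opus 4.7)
The plan is to prove the density by a duality/annihilator argument. I will show that any $\psi$ in the Hilbert space $\sH' \doteq L^2(\bS^2 \times \bR;\, m(h)\, d\omega\, dh)$ orthogonal to every element of $\sK'_\lambda\, C_0^\infty(\bS^2 \times \bR) + i\, \sK'_\lambda\, C_0^\infty(\bS^2 \times \bR)$ must vanish. Since this sum coincides with $\sK'_\lambda$ applied to complex-valued test functions, the orthogonality hypothesis reads
\begin{equation*}
\int_{\bS^2 \times \bR} \overline{\widetilde{\Psi}(\omega, h)}\, \psi(\omega, h)\, m(h)\, d\omega\, dh \;=\; 0 \qquad \text{for every } \Psi \in C_0^\infty(\bS^2 \times \bR;\, \bC),
\end{equation*}
with $\widetilde{\Psi}$ the $\ell$-Fourier transform of $\Psi$ appearing in (\ref{Klambda}).

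First I would test against separable functions $\Psi(\omega,\ell) = \chi(\omega) f(\ell)$ with $\chi \in C_0^\infty(\bS^2)$ and $f \in C_0^\infty(\bR;\bC)$. Applying Fubini and then the fundamental lemma of the calculus of variations on $\bS^2$ produces, for each $f$, a $d\omega$-null set $N_f \subset \bS^2$ outside of which the inner integral $\int_\bR \overline{\widetilde f(h)}\,\psi(\omega,h)\,m(h)\,dh$ vanishes. Choosing a countable family $\{f_n\}$ dense in the Schwartz topology of $\mS(\bR)$ (hence in $C_0^\infty(\bR;\bC)$) and letting $N = \bigcup_n N_{f_n}$, one obtains a single null $N$ such that for every $\omega \notin N$ the above inner integral vanishes for every $f \in C_0^\infty(\bR;\bC)$.

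Next I would fix $\omega \notin N$ and exploit the shape of $m$. Direct inspection of $m(h) = h/(1-e^{-4\pi h})$ shows that $m$ is continuous and strictly positive on $\bR$, bounded below by a positive constant on every compact set, satisfies $0 < m(h) \le C(1+|h|)$ globally, and decays exponentially as $h \to -\infty$. Together with $\psi(\omega,\cdot) \in L^2(\bR,m\,dh)$ this gives $\psi(\omega,\cdot)\,m(\cdot) \in L^1_{\mathrm{loc}}(\bR)$ and defines a tempered distribution on $\bR$ (use Cauchy--Schwarz against $(1+h^2)^{-N}$ combined with the linear bound on $m$). Rewriting the pairing by Fubini, the vanishing of $\int \overline{\widetilde f(h)}\,\psi(\omega,h)\,m(h)\,dh$ for all $f \in C_0^\infty(\bR;\bC)$ expresses that the distributional Fourier transform of $\psi(\omega,\cdot)\,m(\cdot)$ annihilates a dense subset of $\mS(\bR)$; by continuity on $\mS$ and injectivity of the Fourier transform on $\mS'(\bR)$, $\psi(\omega,h)\,m(h)=0$ almost everywhere, and since $m>0$, $\psi(\omega,\cdot)=0$. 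A final application of Fubini yields $\psi=0$ in $\sH'$.

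The principal technical obstacle is the justification of the Fubini interchange and the tempered-distribution manipulation, since $m(h)\,dh$ is not finite and $\psi(\omega,\cdot)\,m(\cdot)$ is not a priori globally integrable. I would handle this by inserting a sequence of smooth compactly supported cut-offs $\chi_R(h)$ with $\chi_R \to 1$: on the truncated integrals everything is classically in $L^1$ and all interchanges are legitimate, and the passage to the limit $R\to\infty$ is controlled by Lebesgue dominated convergence, using the rapid decay of $\widetilde f \in \mS(\bR)$ and the linear growth of $m$ to majorize the integrand uniformly in $R$.
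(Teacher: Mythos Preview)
Your annihilator/duality argument is correct. The key ingredients --- that $m$ is strictly positive with at most linear growth, so that $\psi(\omega,\cdot)m(\cdot)$ defines a tempered distribution for a.e.\ $\omega$, and that Fourier transforms of $C_0^\infty(\bR;\bC)$ are $\mS$-dense --- are all in order; your countable-family reduction just requires enlarging the null set $N$ by the Fubini-null set on which $\psi(\omega,\cdot)\notin L^2(\bR,m\,dh)$, which is implicit in your write-up. The Fubini step you worry about is in fact immediate, since $|\chi\widetilde f\,\psi|\,m$ is absolutely integrable on $\bS^2\times\bR$ by Cauchy--Schwarz in the weighted $L^2$ space.

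This is a genuinely different route from the paper's. The paper argues constructively: it observes that $\sK'_\lambda$ (the $\ell$-Fourier transform) sends $C_0^\infty(\bS^2\times\bR;\bC)$ into a subset dense in $\mS(\bS^2\times\bR)$ in the Schwartz topology, that Schwartz convergence implies convergence in any polynomially weighted $L^2$ and in particular in $L^2(m\,d\omega\,dh)$ since $|m(h)|\le C(1+h^2)$, and finally that $C_0^\infty(\bS^2\times\bR;\bC)$ in the $(\omega,h)$ variables is dense in $L^2(m\,d\omega\,dh)$ by the change of variable $x(h)=\int_0^h m(h')\,dh'$. Your approach trades this chain of dense inclusions for a single duality step using injectivity of the Fourier transform on $\mS'(\bR)$; it is arguably more direct and avoids the change-of-variable trick, at the cost of the bookkeeping with null sets and the tempered-distribution estimate.
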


\noindent If we define $\mu_{\lambda'}(\Phi,\Phi') \doteq \Rea\langle \sK'_\lambda \Phi,\sK'_\lambda\Phi' \rangle_{\sH'_\lambda}$ for $\Phi,\Phi'\in \sS(V)$, 
it turns out that $\mu_{\lambda'}$ is a real scalar product (strictly positive) on $\sS(V)$. Furthermore it satisfies
$|\sigma_V(\Phi,\Phi')|^2 \leq 4\mu_{\lambda'}(\Phi, \Phi)\mu_{\lambda'} (\Phi', \Phi')$ for all $\Phi,\Phi' \in S(V)$ in view of (\ref{simp3})
and the Cauchy-Schwarz inequality for $\langle\:,\: \rangle_{\sH'}$. Therefore, due to Proposition 3.1 in \cite{KW},  there is a quasifree state on $\cW(V)$ 
(the unique satisfying the  identity in (\ref{lambda})  with $\mu_\lambda$ replaced for $\mu_{\lambda'}$) with one-particle space structure 
$(\sK'_\lambda, \sH'_\lambda)$. To conclude the proof of (a) we have to prove that $(\sK'_\lambda, \sH'_\lambda)$ is unitarily equivalent to $(\sK_\lambda, \sH_\lambda)$
(so that $\lambda'$ coincides to $\lambda$).
Changing variables in the two-point function of state $\lambda$ and obtaining:
\beq
\lambda(\Phi,\Phi') = \lim_{\epsilon \to 0_+} -\frac{1}{4\pi} \int_{\bS^2 \times \bR^2} d\omega d\ell d\ell' 
\frac{\Phi(\omega, \ell) \Phi'(\omega, \ell')}{\left(\sinh(\frac{\ell- \ell'}{2}) -i\epsilon \cosh(\ell/2) \cosh(\ell'/2)  \right)^2}\:, \label{GG'}
\eeq
we can exploit  the following technical result proved below in this appendix.

\begin{lemma} \label{lemmaaggiunto}  Let $N$ be a smooth Riemannian manifold and let $\mu$ be the natural measure induced by the metric on $N$. 
If $f,h \in C_0^\infty(N)$ with $h>0$, $df \neq 0$, $fdh -hdf \neq 0$  on $N$, then:
\beq
 \lim_{\epsilon\to 0^+} \int_N \frac{g(q)}{(f(q) \pm i\epsilon h(q))^2}d \mu = \lim_{\epsilon\to 0^+} \int_N \frac{g(q)}{(f(q) \pm i\epsilon)^2} d\mu
\quad \mbox{for every $g\in C_0^\infty(N)$,} \label{GGQQqq}
\eeq
and both limits do exist and are finite. 
\end{lemma}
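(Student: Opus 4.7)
The plan is to reduce both limits, via a single integration by parts in $f$, to the same distributional pairing that is manifestly independent of $h$. Since $df \neq 0$ on $N$, the zero set $\Sigma := \{f=0\}$ is empty (and the lemma is trivial) or a smooth embedded hypersurface. Away from any fixed open neighborhood $U$ of $\Sigma$, the integrands $g/(f\pm i\epsilon h)^{2}$ and $g/(f\pm i\epsilon)^{2}$ are uniformly dominated by $C/f^{2}$ on $\mathrm{supp}\,g\setminus U$ and converge pointwise to $g/f^{2}$, so by dominated convergence their contribution to both limits is the common quantity $\int_{N\setminus U} g/f^{2}\,d\mu$. It therefore suffices to work inside a tubular neighborhood of $\Sigma\cap\mathrm{supp}\,g$, covered by finitely many charts $(f,y_{1},\dots,y_{n-1})$ in which $f$ itself is one of the coordinates and $d\mu = J(f,y)\,df\,dy$ with $J$ smooth and positive.

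In each such chart I would apply the identity
\[
\frac{1}{(f\pm i\epsilon h(f,y))^{2}} \;=\; -\,\frac{1}{1\pm i\epsilon\,\partial_{f}h(f,y)}\;\partial_{f}\frac{1}{f\pm i\epsilon h(f,y)}
\]
and integrate by parts in $f$; the boundary terms vanish because $gJ$ is compactly supported in the chart. Writing $\Phi_{\epsilon} := gJ/(1\pm i\epsilon\,\partial_{f}h)$, the integral becomes $\int \partial_{f}\Phi_{\epsilon}\,(f\pm i\epsilon h)^{-1}\,df\,dy$, which has only a simple pole, and $\Phi_{\epsilon}\to gJ$, $\partial_{f}\Phi_{\epsilon}\to \partial_{f}(gJ)$ uniformly on the chart as $\epsilon\to 0^{+}$. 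The remaining ingredient is a Sokhotski--Plemelj relation with positive smooth prefactor: for any $h>0$ smooth,
\[
\lim_{\epsilon\to 0^{+}}\frac{1}{f\pm i\epsilon h(f,y)} \;=\; \mathrm{PV}\frac{1}{f}\;\mp\;i\pi\,\delta(f)
\]
in $\mathcal{D}'(\mathbb{R}_{f})$, locally uniformly in the transverse variable $y$. This follows by the change of variable $\xi = f/(\epsilon h(0,y))$, which turns the imaginary part $\mp\epsilon h/(f^{2}+\epsilon^{2}h^{2})$ into an approximation of the identity whose factors of $h(0,y)$ cancel in the limit, while the real part converges to $\mathrm{PV}(1/f)$ by the standard argument. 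Pairing with $\partial_{f}(gJ)$ and using dominated convergence in $y$, the limit exists and equals
\[
\int \partial_{f}(gJ)\!\left[\mathrm{PV}\frac{1}{f}\mp i\pi\,\delta(f)\right]df\,dy \;+\; \int_{N\setminus U}\frac{g}{f^{2}}\,d\mu,
\]
an expression that does not involve $h$. Applying the identical computation with $h\equiv 1$ then gives the right-hand side of \eqref{GGQQqq}, proving the two limits exist, are finite, and coincide.

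The step I expect to be the main obstacle is the Sokhotski--Plemelj relation with a nonconstant positive multiplier $h$, particularly establishing the required local uniformity in $y$ so that the $y$-integration can be performed after taking the distributional limit in $f$. As a sanity check and fallback, the hypothesis $f\,dh - h\,df \neq 0$ (equivalently $d(f/h)\neq 0$) permits writing $(f\pm i\epsilon h)^{-2} = h^{-2}(F\pm i\epsilon)^{-2}$ with $F := f/h$ and rerunning the whole argument using $F$ as a regular coordinate in place of $f$; this confirms the symmetry between the two sides of \eqref{GGQQqq} and guards against hidden asymmetries in the localization step.
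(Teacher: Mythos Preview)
Your proof is correct and shares the paper's architecture: localize in charts where $f$ is a coordinate, establish the Sokhotski--Plemelj relation $(f\pm i\epsilon h)^{-1}\to \mathcal{P}(1/f)\mp i\pi\,\delta(f)$ for arbitrary smooth $h>0$, and then reduce the second-order pole to the first-order one. The difference lies only in how that reduction is carried out. The paper differentiates the first-order relation in the distributional sense; this produces an extra term $\pm i\epsilon\int g\,(\partial_{y^1}h)\,(f\pm i\epsilon h)^{-2}\,d\mu$, and showing it vanishes is precisely where the hypothesis $f\,dh-h\,df\neq 0$ is invoked, since one rewrites the integral as $\int \tilde g\,((f/h)\pm i\epsilon)^{-2}\,d\mu$ and needs $d(f/h)\neq 0$ to know that this has a finite limit. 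You instead integrate by parts on the test-function side, absorbing the $h$-dependence into the harmless prefactor $(1\pm i\epsilon\,\partial_f h)^{-1}\to 1$; the residual $O(\epsilon)$ discrepancy in $\partial_f\Phi_\epsilon$ pairs with $(f\pm i\epsilon h)^{-1}$ to give $O(\epsilon\,|\log\epsilon|)\to 0$, so the hypothesis $d(f/h)\neq 0$ never enters your main line and appears only in your fallback. Your route is thus marginally more economical and in fact shows the extra hypothesis to be dispensable; the paper's route makes its role explicit but is otherwise equivalent.
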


\noindent Assuming $N= \bS^2 \times \bR^2$ equipped with the natural product metric (referred to coordinates $(\ell,\ell')$ on $\bR^2$)
 and taking $f(\omega,\ell,\ell') = \sinh((\ell-\ell')/2)$, 
so that $df \neq 0$ everywhere on  $N$, and 
$h(\omega,\ell,\ell') = \cosh(\ell/2)\cosh(\ell'/2)$ which is strictly positive and $f dh -h d f \neq 0$ as it can be checked by direct inspection,
Lemma \ref{lemmaaggiunto} and (\ref{GG'}) entail:
\beq
\lambda(\Phi,\Phi') = \lim_{\epsilon \to 0_+} -\frac{1}{4\pi} \int_{\bS^2 \times \bR^2} d\omega d\ell d\ell' 
\frac{\Phi(\omega, \ell) \Phi'(\omega, \ell')}{\left(\sinh(\frac{\ell- \ell'}{2}) -i\epsilon  \right)^2} \label{GGpepe} \quad\mbox{if $\Phi,\Phi' \in C_0^\infty(\bS^2\times \bR)$.}
\eeq
The $\ell$-Fourier transform of the distribution $ 1/(\sinh(\ell/2) - i0_+)^2$ is $-2\sqrt{2\pi} m(h) $ with $\delta$
defined in (\ref{mu}), so that the convolution theorem applied  to the right hand side of  (\ref{GGpepe}) proves that,
if $\Phi,\Phi' \in C_0^\infty(\bS^2\times \bR)$:
\beq
\langle \sK_\lambda \Phi,\sK_\lambda\Phi' \rangle_{\sH_\lambda} = \int_{\bS^2 \times \bR^2}\sp\sp\sp\sp m(h)  d\omega  dh \:\:  \overline{\widetilde{\Phi}}(\omega,h) 
 \widetilde{\Phi'}(\omega,h) = \langle \sK'_{\lambda} \Phi,\sK'_{\lambda}\Phi' \rangle_{\sH_{\lambda'}}\:. \label{GGG}
\eeq
We have found that the $\bR$-linear, bijective  map $U_0 :  \sK'_{\lambda}(C_0^\infty(\bS^2\times \bR)) \to \sK_{\lambda}(C_0^\infty(\bS^2\times \bR))$
which associates\footnote{This map is well defined since $\sK_\omega \Phi = \sK_\omega \Phi_1$ implies $\Phi = \Phi_1$ as
 the $\bR$-linear map $\sK_\omega : \sS \to \sH_\omega$ is injective for every one-particle space structure $(\sK_\omega, \sH_\omega)$
  of every quasifree state $\omega$ of a Weyl algebra of a symplectic space $(\sS,\sigma)$, because $\Imm\langle \sK_\omega \Phi, \sK_\omega \Phi_1\rangle$
  is proportional to $\sigma(\Phi,\Phi_1)$ that is nondegenerate by hypotheses.} $\sK'_\lambda \Phi$ to $\sK_\lambda \Phi$
is isometric. On the other hand the following lemma (proved below in the appendix) holds.

\begin{lemma} \label{lemmaA2}
 Referring to coordinates $(\omega,\ell)\in \bS^2\times \bR$ to define $C_0^\infty(\bS^2\times \bR)$, the space 
 $\sK_{\lambda}(C_0^\infty(\bS^2\times \bR)) + i  \sK_{\lambda}(C_0^\infty(\bS^2\times \bR))$
is dense in $\sH_\lambda$. 
\end{lemma}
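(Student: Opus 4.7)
\emph{Plan of proof.} The strategy is direct approximation. Since $\sH_\lambda$ is by construction the closure of $\sK_\lambda(\sS(V))+i\sK_\lambda(\sS(V))$ in $L^2(\bS^2\times\bR_+, 2k\,d\omega dk)$, it suffices to show that every real $\Phi\in\sS(V)$ can be approximated in the $\sH_\lambda$-norm by the image under $\sK_\lambda$ of a sequence $\{\Phi_n\}\subset C_0^\infty(\bS^2\times\bR)$ (in the $\ell$-coordinate, which in the $u$-coordinate means smooth and compactly supported in $(0,1)$); the general complex case then follows by decomposing into real and imaginary parts.

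Because every $\Phi\in\sS(V)$ already vanishes in a neighbourhood of $u=1$, the only issue is to cut off near the tip $u=0$. I would fix a smooth $\chi_n\colon[0,1]\to[0,1]$ with $\chi_n=0$ on $[0,1/n]$, $\chi_n=1$ on $[2/n,1]$ and $|\chi_n'|\leq Cn$, and set $\Phi_n\doteq\chi_n\Phi$; by construction $\Phi_n\in C_0^\infty(\bS^2\times\bR)$ in the $\ell$-coordinate. The task then reduces to proving $\|\sK_\lambda\Psi_n\|_{\sH_\lambda}\to 0$ for $\Psi_n\doteq(1-\chi_n)\Phi$.

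The standing bounds $|\Phi(\omega,u)|\leq C_\Phi u$ and $|\partial_u\Phi(\omega,u)|\leq C_\Phi'$, uniform in $\omega$, which follow from the factorization $\Phi=uf\!\restriction_V$ in the definition of $\sS(V)$, give first $|\Psi_n|\leq C_\Phi u$ and, second,
$|\partial_u\Psi_n|\leq |\chi_n'|\cdot C_\Phi u+|1-\chi_n|\cdot C_\Phi'\leq 2CC_\Phi+C_\Phi'$ independently of $n$, while $\supp(\Psi_n)\subset\bS^2\times[0,2/n]$. Applying Plancherel in $u$ at fixed $\omega$ (and the identity $\widehat{\partial_u\Psi_n}(k)=-ik\widehat{\Psi_n}(k)$) yields, uniformly in $\omega$,
\begin{equation*}
\int_\bR|\widehat{\Psi_n}(\omega,k)|^2 dk=\int_0^{2/n}|\Psi_n|^2 du=O(n^{-3}),\qquad \int_\bR k^2|\widehat{\Psi_n}(\omega,k)|^2 dk=\int_0^{2/n}|\partial_u\Psi_n|^2 du=O(n^{-1}).
\end{equation*}
Splitting $\int_0^\infty 2k|\widehat{\Psi_n}|^2 dk$ at the scale $k=\sqrt{n}$, with $2k\leq 2\sqrt{n}$ below and $2k\leq 2k^2/\sqrt{n}$ above, produces
\begin{equation*}
\int_0^\infty 2k|\widehat{\Psi_n}(\omega,k)|^2 dk\ \leq\ 2\sqrt{n}\cdot O(n^{-3})+\frac{2}{\sqrt{n}}\cdot O(n^{-1})=O(n^{-3/2}),
\end{equation*}
uniformly in $\omega\in\bS^2$; integration over the compact sphere yields $\|\sK_\lambda\Psi_n\|_{\sH_\lambda}^2=O(n^{-3/2})\to 0$.

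The main obstacle, and the point where the argument could have gone wrong, is precisely this weighted high-frequency estimate: the factor $2k$ in the $\sH_\lambda$-norm emphasises high frequencies, which is exactly what a sharp cut-off near $u=0$ generates. The decisive feature is that every element of $\sS(V)$ vanishes \emph{linearly} at the tip thanks to the factorization $\Phi=uf\!\restriction_V$, so that the remainder $\Psi_n$ enjoys both the linear bound $|\Psi_n|\leq C_\Phi u$ \emph{and} the $n$-independent $L^\infty$ bound on $\partial_u\Psi_n$; together these are just enough to beat the weight $2k$ after the dyadic split. Without the $O(u)$ vanishing at the tip the high-frequency tail would not be controlled and the density statement would fail in the topology of $\sH_\lambda$.
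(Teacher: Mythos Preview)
Your proof is correct and follows essentially the same route as the paper: cut off $\Phi\in\sS(V)$ near the tip $u=0$ by $\Phi_n=\chi_n\Phi$ with $\chi_n(u)=\chi(nu)$, use the $O(u)$ vanishing of $\Phi$ at the tip to control both $\|\Psi_n\|_{L^2}$ and $\|\partial_u\Psi_n\|_{L^2}$ (with $\Psi_n=\Phi-\Phi_n$), and interpolate to reach the $2k$-weighted norm. The only cosmetic difference is that the paper interpolates via Cauchy--Schwarz, $\int k|\widehat{\Psi_n}|^2\,dk\le\big(\int|\widehat{\Psi_n}|^2\big)^{1/2}\big(\int k^2|\widehat{\Psi_n}|^2\big)^{1/2}$, whereas you do the explicit dyadic split at $k=\sqrt{n}$; both arguments give the result, and your version even yields the quantitative rate $O(n^{-3/2})$.
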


\noindent Since also $\sK'_\lambda  C_0^\infty(\bS^2\times \bR) + i \sK'_\lambda C_0^\infty(\bS^2\times \bR)$
is dense in $\sH'_{\lambda}$ (by Lemma \ref{lemmaA}),  Lemma A.1 in \cite{KW} implies that $U_0$ linearly and continuously 
extends to a unique ($\bC$-linear) Hilbert space isomorphism 
$U : \sH'_{\lambda} \to \sH_\lambda$. By construction $U \sK'_{\lambda} = \sK_\lambda$, so that the two
one-particle space structures $(\sK'_\lambda, \sH'_\lambda)$ and $(\sK_\lambda, \sH_\lambda)$ are unitarily equivalent.

Let us prove $(a)$ and $(b)$ in the thesis of Theorem \ref{T3}. Since $X= \partial_\ell$, it holds $(\beta^{X}_\tau \Phi)(\omega,\ell) =  \Phi(\omega,\ell - \tau)$.
Thus,  referring the the one-particle space structure $(\sK'_\lambda, \sH'_\lambda)$,
the action of $\beta^{X}_\tau$ is equivalent to the appearance of a phase $e^{ih\tau}$ in front of the $\sK'_\lambda \Phi$ by 
standard properties of the Fourier 
transform.  Defining  $(V^{(\lambda')}_\tau \Psi)(\omega,h) \doteq e^{ih\tau} \Psi(\omega,h) $, 
it holds $V^{(\lambda')}_\tau K'_\lambda \Phi = K'_\lambda \beta^{X}_\tau \Phi$.
 In view of Stone theorem and a few of elementary observations, $\hat{h}$ is the generator of $V^{(\lambda')}$. 
As $\mu_\lambda(\beta_\tau^{X}\Psi, \beta_\tau^{X}\Psi) =  \langle V^{(\lambda')}_t K'_\lambda\Phi, V^{(\lambda')}_t 
K'_\lambda \Phi\rangle_{\sH'_\lambda} =
 \langle  K'_\lambda\Phi, K'_\lambda \Phi\rangle_{\sH'_\lambda} = \mu_\lambda(\Phi,\Phi)$ because phases cancel, $\lambda$
 turns out to be $\alpha^{X}$-invariant due to (\ref{lambda}),  (\ref{lambdaagg}) and (\ref{alphatau}). Passing to the unitarily equivalent one-particle space structure
$(\sK_\lambda,\sH_\lambda)$, the result remains unchanged if replacing $V^{(\lambda')}$ with the corresponding unitary
$V^{(\lambda)}$.
 Passing to the second-quantization
 $U^{(\lambda)}_\tau$ of $V^{(\lambda)}_\tau$ (assuming $U^{(\lambda)}_\tau\Psi_\lambda = \Psi_\lambda$)
 implements $\alpha^X$ as follows from (\ref{Vs}) by standard procedures. 
To conclude the proof of  (b), for $\beta = 2\pi$, define the antilinear map $j : \sK'_\lambda \sS(V) \to \sK'_\lambda \sS(V)$ such that
$(j \widetilde{\Phi})(\omega, h) \doteq -e^{-\beta h/2}\overline{\widetilde{\Phi}(\omega, -h)}$.
The definition of $j$ is well-posed since $\overline{\widetilde{\Phi}(\omega, -h)} = 
\widetilde{\Phi}(\omega,h)$ if $\Phi \in \sS(V)$ because $\Phi$ is real-valued and 
$\sK'_\lambda (\sS) \subset \Dom\left( e^{-\frac{1}{2}\beta \hat{h}}\right)$  by direct inspection.
Notice that $jj=I$ and $j$ commutes with $V^{(\lambda)}_t$. We can extend $j$ to  the whole $\sH'_\lambda$ by antilinearity and continuity eventually obtaining 
an antilinear operator
$j: \sH'_\lambda \to \sH'_\lambda$ with $j j = I$ and the following facts are verified:
(i) $\sK'_\lambda (\sS) \subset \Dom\left( e^{-\frac{1}{2}\beta \hat{h}}\right)$,
 (ii) $[j, V^{(\lambda)}_t] =0$ if $t\in \bR$, 
(iii) $e^{-\frac{1}{2}\beta \hat{h}} \sK'_\lambda  \psi = -j \sK'_\lambda \psi$ if $\psi \in \sK'_\lambda\sS(V)$.
By the standard procedure of second-quantization with Fock space $\cH_\lambda$, $j$ 
uniquely determines an  antiunitary  operator $J$ leaving fixed $\Psi_\lambda$. By standard
procedures using (i),(ii)  and (iii) and referring to $J$ and $U^{(\lambda)}_t$, 
one easily proves that the requirements (1), (2) and (3) in the (second equivalent) definition of KMS state presented at the end of this appendix are satisfied
(see Section  3.2 in \cite{KW} and references therein for details) so that $\lambda$ turns out to be KMS at the inverse temperature $\beta=2\pi$ with respect to 
$\alpha^{X}$. The remaining part  of the statement (a) arises by  the link between KMS condition and modular group 
outlined at the beginning of Section \ref{sec5}. (b) arises immediately defining $V_\tau^{(\lambda)}$
as the unitary corresponding to $V_\tau^{(\lambda')}$ in the structure $(\sK_\lambda, \sH_\lambda)$.
\end{proof}

\begin{proof}[Proof of the Lemma \ref{lemmaA}]
 $\sS(V)+i \sS(V) \supset C_0^\infty(\bS^2\times \bR) + i C_0^\infty(\bS^2\times \bR)
 \doteq C_0^\infty(\bS^2\times \bR;\bC)$. Since the latter is dense in $\mS(\bS^2\times \bR)$ and 
 the Fourier transform is continuous with respect to the topology of $\mS(\bS^2\times \bR)$ and leaves that space fixed, we conclude that 
$\overline{\sK'_\lambda \sS(V) + i\sK'_\lambda \sS(V)}$ (where the closure is that of the topology 
of $L^2(\bS^2 \times \bR;  m(h)  d\omega dh)$) 
includes the subspace $\mS_0 \doteq 
\sK'_\lambda  C_0^\infty(\bS^2\times \bR) + i \sK'_\lambda C_0^\infty(\bS^2\times \bR)$,  which is dense in $\mS(\bS^2\times \bR)$ 
with respect the  $\mS(\bS^2\times \bR)$  topology (where, now
 $\bS^2\times \bR$ is referred to the coordinates $(\omega, h)$). Since the convergence in the $\mS(\bS^2\times \bR)$ 
 topology implies that in the topology of each
$L^2(\bS^2 \times \bR;  (1+ |h|^n) d\omega dh)$ for $n=0,1,2,\ldots$, and  $|m(h) | \leq C(1 + h^2)$ for some $C\geq 0$
and all $h\in \bR$, we conclude that $\mS_0$ is dense in $\mS(\bS^2\times \bR)$
also referring to the topology of $L^2(\bS^2 \times \bR; m(h)  d\omega dh)$. In particular  it has to hold
$\mS(\bS^2\times \bR)\subset \overline{\sK'_\lambda \sS(V) + i \sK'_\lambda \sS(V)}$ and thus
$C_0^\infty(\bS^2\times \bR; \bC) \subset \overline{\sK'_\lambda \sS(V) + i\sK'_\lambda\sS(V)}$ where, again,
 $\bS^2\times \bR$ is referred to the coordinates $(\omega, h)$.
 To conclude, it is sufficient to establish 
that $C_0^\infty(\bS^2\times \bR; \bC)$ is dense in $L^2(\bS^2 \times \bR;  m(h)  d\omega dh)$. 
This is a trivial consequence of the fact that $C_0^\infty(\bS^2\times (0,+\infty); \bC)$ is dense in $L^2(\bS^2 \times (0,+\infty); d\omega dx)$,
 passing from the variable 
$x$ to the variable $h$ such that $x(h) = \int_0^h m(h') dh'$ and noticing that $g\in C^\infty_0(\bS^2\times (0,+\infty); \bC)$ if and only if 
$\bR \ni h \mapsto g(x(h))$ belongs to $C^\infty_0(\bS^2\times \bR; \bC)$. 
\end{proof}

\begin{proof}[Proof of Lemma \ref{lemmaaggiunto}] With our hypothesis on $f$,
all the nonempty sets of the form $\Sigma_c := \{x \in N \:|\: f(x) = c\}$ 
  are embedded submanifolds of $N$. Thus, in a neighborhood $\Omega_p$ of every
 $p \in N$, there is a coordinate patch $\psi: \Omega_p \ni q \mapsto  (y^1(q),y^2(q),\ldots, y^n(q))$ satisfying  $f\circ \psi^{-1} = y^1$
 so that
 $\Sigma_c\cap \Omega_p$ is made of  the points $q\in N$ with $y^1(q) =c$. We henceforth assume  that $\Sigma_0 \neq \emptyset$
  and we restrict to work in $\Omega_p$.
Distributions $\delta(f)$ and $\cP(1/f)$  in $\cD'(\Omega_p)$ can be defined as follows, for every $g \in C_0^\infty(\Omega_p)$:
\begin{align}
 & \langle \delta(f), g \rangle \doteq  \int_{\bR^{n-1}} g(0,y^2,\ldots,y^n)  \frac{d\mu_{\Sigma_c}}{|df|}\nonumber \:,\\
 &\left\langle \cP\left(\frac{1}{f}\right), g \right \rangle \doteq \lim_{\epsilon\to 0^+} \left(\int_{-\infty}^{-\epsilon}
   \frac{dy^1}{y^1} + \int_{\epsilon}^{+\infty}  \frac{dy^1}{y^1}\right)
 \int_{\bR^{n-1}} g(y^1,\ldots,y^n) \frac{d\mu_{\Sigma_c}}{|df|} \:,\nonumber
\end{align} 
where $\mu_{\Sigma_0}$ is the measure induced on $\Sigma_0$ by the metric of $N$, and $|df|$ the norm of $df$ referred 
to the metric.
The reader can verify that 
the definitions are well-posed, reduce to the standard ones for $N=\bR$ (equipped with the natural Euclidean metric)
 and are independent  from the used coordinate patch  on $\Omega_p$ verifying the hypotheses.
Finally, working in the said coordinate patch and through a very straightforward generalization of the proof of Sochockij formulas as in \cite{Vladimirov}, 
one extends these formulas 
to our case. More precisely, it turns out that: {\em Assuming $df\neq 0$ on $N$,  in a sufficiently small open neighborhood $\Omega_p$ of any point $p\in N$}:
\beq
\lim_{\epsilon \to 0^+} \int_{N} \frac{g(q)}{f(q) \pm i \epsilon h(q)} \: d\mu  = \left\langle \cP\left(\frac{1}{f}\right),g \right \rangle \mp 
i \pi \langle \delta(f), g\rangle \quad \mbox{\em for every $g \in C^\infty_0(\Omega_p)$}  \:,
\eeq
{\em where the assigned function  $h: N \to \bR$ is  every  smooth function that is {\em strictly positive} on $\Omega_p$.}
In particular 
one can choose $h \equiv 1$. The obtained result entails that, in weak sense: 
$$\frac{1}{f(q) \pm i \epsilon h(q)} \to  \cP(1/f) \mp i\pi \delta(f)
\leftarrow \frac{1}{f(q) \pm i \epsilon}  \quad \mbox{as $\epsilon \to 0^+$.}$$
As  weakly convergent sequences of distributions
can be derived term-by-term obtaining again a sequence that weakly converges to a distribution, using again coordinates $y^1,\ldots,y^n$, taking the $y^1$ 
derivative
--  taking into account some further  smooth factors arising from the density function 
in the integration measure --
 and coming back to general coordinates on $\Omega_p$, we conclude that, if $f$ and $h$ satisfy our hypotheses
 and  $g \in C^\infty_0(\Omega_p)$:
\beq
\lim_{\epsilon \to 0^+} \left(\int_{N} \frac{g(q)}{(f(q) \pm i \epsilon h(q))^2} \: d\mu  
 \pm i\epsilon  \int_{N} \frac{g(q) \partial_{y^1} h(q)}{(f(q) \pm i \epsilon h(q))^2} \: d\mu \right) \nonumber \\ =
\lim_{\epsilon \to 0^+} \int_{N} \frac{g(q)}{(f(q) \pm i \epsilon )^2} \: d\mu \in \bC \:. \label{GGQQ}
\eeq
Notice that the result above not only states that the two sides coincide, but it even states that both limit  exist and are finite.
The second integral in the left-hand side can be re-arranged as: 
$$\int_{N} \frac{g(q) (\partial_{y^1} h(q))/h(q)^{1/2}}{((f(q)/h(q)) \pm i \epsilon )^2} \: d\mu$$
The existence of the limit in the right-hand side in
 (\ref{GGQQ}) for $f(q)$ replaced by $f(q)/h(q)$,
and $g(q)$  replaced by $g(q) (\partial_{y^1} h(q))/h(q)^{1/2}$
 implies that 
$\lim_{\epsilon\to 0^+} \int_{N} \frac{g(q) \partial_{y^1} h(q)}{(f(q) \pm i \epsilon h(q))^2} \: d\mu$
 exists in $\bC$, provided that $d(f/g) \neq 0$, i.e.
$hdf -fdh \neq 0$, shrinking the neighborhood $\Omega_p$ if necessary. Consequently, under these hypotheses,
 the second term in the left-hand side 
of (\ref{GGQQ}) vanishes as $\epsilon \to 0^+$ in view of the overall factor $\epsilon$. 
We conclude that: {\em  if $df \neq 0$, $h>0$,  $hdf -fdh \neq 0$ on $N$ then, in a sufficiently small 
neighborhood $\Omega_p$ of every $p\in N$, one has}:
\beq
\lim_{\epsilon \to 0^+}\!\int_{N}\! \frac{g(q)}{(f(q) \pm i \epsilon h(q))^2}\ d\mu =
\lim_{\epsilon \to 0^+} \!\int_{N} \!\frac{g(q)}{(f(q) \pm i \epsilon )^2}\ d\mu \in \bC,\quad \mbox{$g\in C_0^\infty(\Omega_p)$.} \label{GGQQq}
\eeq
Now, with the given hypotheses on $f$ and $h$ on $N$, we extend the result to the whole manifold $N$.   Take $g\in C_0^\infty(N)$. For every $p\in \supp(g)$ there is a neighborhood 
$\Omega_p$ where (\ref{GGQQq}) can be applied. As $\supp(g)$ is compact, we can extract a finite open covering $\{\Omega_{p_i}\}_{i=1,\ldots, N}$  
of $\supp(g)$ and construct a  partition of the unit $\{ \psi_i\}_{i=1,\ldots,N}$ subordinate to that finite covering.
Notice that in each $\Omega_{p_i}$ the identity (\ref{GGQQq}) holds because we can construct the relevant coordinate systems therein.
Then, since the sum over $i$  is finite so that it can be interchanged with the integral symbol, and all the limits are finite, we achieve the thesis:
\begin{align}
& \lim_{\epsilon \to 0^+}\int_{N} \frac{g(q)}{(f(q) \pm i \epsilon h(q))^2} \: d\mu 
=\lim_{\epsilon \to 0^+}  \sum_{i=1}^N  \int_{\Omega_{p_i}} \frac{g(q)\psi_i(q)}{(f(q) \pm i \epsilon h(q))^2} \: d\mu \nonumber \\
& =  \sum_{i=1}^N \lim_{\epsilon \to 0^+}   \int_{\Omega_{p_i}} \frac{g(q)\psi_i(q)}{(f(q) \pm i \epsilon h(q))^2} \: d\mu  
=\sum_{i=1}^N \lim_{\epsilon \to 0^+}\int_{\Omega_{p_i}} \frac{g(q)\psi_i(q)}{(f(q) \pm i \epsilon)^2} \: d\mu  \nonumber \\
& = 
 \lim_{\epsilon \to 0^+} \sum_{i=1}^N\int_{\Omega_{p_i}} \frac{g(q)\psi_i(q)}{(f(q) \pm i \epsilon)^2} \: d\mu =
 \lim_{\epsilon \to 0^+}\int_{N} \frac{g(q)}{(f(q) \pm i \epsilon)^2} \: d\mu\:.\nonumber
\end{align}
\end{proof}

\begin{proof}[Proof of Lemma \ref{lemmaA2}] Let $\Phi \in \sS(V)$, then $\Phi$ is smooth in coordinates $(\omega, u)\in \bS^2 \times \bR_+$, 
 approaching $u=0$ it vanishes with order $O(u)$ uniformly in the angles, 
$\Phi$ vanishes also if $u>u_0>0$ for some $u_0<1$, finally
$\partial_u \Phi_u$ is bounded.
Define $\chi : \bR \to [0,1]$ such that $\chi \in C^\infty(\bR)$ and $\chi(u)=1$ for $u\geq 2$ whereas $\chi(u) = 0$ for $u\leq 1$. Next define
$\chi_n(u) \doteq \chi(nu)$ for $n=1,2,3 \ldots$. By direct inspection one sees that the sequence of functions $\Phi_n \doteq \chi_n \cdot \Phi$
satisfies the following as $n\to +\infty$: (a) $|\Phi_n| \leq |\Phi|$ and $\Phi_n \to \Phi$ pointwisely, (b) 
$\Phi_n \to \Phi$ in the sense of $L^2(\bS^2\times \bR_+, d\omega du)$, (c) $\partial_u\Phi_n \to \partial_u \Phi$
in the sense of $L^2(\bS^2\times \bR_+, d\omega du)$.
Passing to the $u$-Fourier transforms, from (b) and (c)  and Cauchy-Schwarz inequality, one has that $\sK_\lambda \Phi_n \to \sK_\lambda \Phi$ in the topology of
 $L^2(\bS^2 \times \bR_+, k^nd\omega dk)$
with $n=0,1,2$. This entails the thesis immediately because, by construction, 
$\Phi_n \in C^\infty_0(\bS^2 \times \bR)$, where now $\bR$ is referred to the coordinate $\ell$.
\end{proof}

\begin{proof}[Proof of Lemma \ref{lemmagoursat}] We adopt the same conventions and notations as in the proof 
of Lemma \ref{lemmafourier}. For the fixed $\Phi$ and $\phi$ as in the hypotheses, the sequence  $\{\phi_n\}_{n\in \bN} \subset \sS_m(D)$ 
is individuated by the sequence of corresponding
 smooth compactly-supported Cauchy data on $\Sigma_D$, $\{ (\phi_{n}(1,\cdot),
 \partial_t\phi_{n}(1,\cdot)) \}_{n\in\bN}$ with:
 (a) $\phi_{n}(1,\cdot) \to \phi(1,\cdot)$ in the topology of $H^1(\Sigma_D)$  and 
 (b) $\partial_t\phi_{n}(1,\cdot) \to \partial_t\phi(1,\cdot)$ 
 in the topology of $H^0(\Sigma) \equiv L^2(\Sigma_D,d{x})$.\\
  The former sequence  exists because  $\phi(1,\cdot) \in H^1_0(\Sigma_M)$ and  $C_0^\infty(\Sigma)$ is 
dense in $ H^1_0(\Sigma_D)$.
The latter sequence exists because  $C_0^\infty(\Sigma_D)$ is dense in $ H^0(\Sigma_D)$. 
Passing to the ${\bx}$ Fourier-Plancherel
 transform, the fact that 
$\{ (\phi_{n}(1,\cdot),
 \partial_t\phi_{n}(1,\cdot)) \}_{n\in\bN}$ converges to the pair $(\phi(1,\cdot),
 \partial_t\phi(1,\cdot))$ in the said topologies
  entails that $\widehat{\phi}_{n} \to \widehat{\phi}$ in the topology of $L^2(\bR^3, d{\bk})$
 as it  follows from (\ref{nFMM}). Indeed, denoting by $\cF$ the standard Fourier-Plancherel transformation on $\bR^3$, 
 (\ref{nFMM}) 
and the Cauchy-Schwarz identity  
 imply that the nonnegative number $||\widehat{\phi}_{n} - \widehat{\phi} ||_{L^2(\bR^3, d{\bk})}^2/2$
is bounded by  $$\int_{\bR^3}\sp\sp \sqrt{m^2+ {\bk}^2}
 |\cF(\phi_{n}(1,\cdot))({\bk})  - 
 \cF(\phi(1,\cdot)) ({\bk}) |^2 d{\bk} +   \int_{\bR^3}\sp\sp\frac{|\cF(\partial_t\phi_{n}(1,\cdot))({\bk})  - 
 \cF(\partial_t\phi(1,\cdot))({\bk}) |^2} {\sqrt{m^2+ {\bk}^2}} d{\bk}$$
 $$ +  ||\phi_{n}(1,\cdot)  - 
 \phi(1,\cdot)||_{L^2(\Sigma_D, d{\bx})}\:\:
 ||\partial_t\phi_{n}(1,\cdot)  - \partial_t\phi(1,\cdot)||_{L^2(\Sigma_D, d{\bx})}\to 0\quad \mbox{for $m>0$ as $n\to +\infty$,}$$
 because $(\phi_{n}(1,\cdot),
 \partial_t\phi_{n}(1,\cdot))\to (\phi(1,\cdot),
 \partial_t\phi(1,\cdot))$ in the said Sobolev topologies.  If $m=0$, the only apparent problem concerns the second integral.
 The obstruction can be avoided using a sequence of {\em uniformly bounded}  smooth compactly supported (in $\Sigma_D$) functions 
 $\partial_t\phi_{n}(1,\cdot)$ converging almost everywhere to $\partial_t\phi(1,\cdot))$ (and thus also in $L^2(\Sigma, dx)$ since $\Sigma_D$ has finite measure).
  It exists as a consequence of Luzin and Stone-Weierstrass theorems, and the sequence of the Fourier transforms $\cF(\partial_t\phi_{n}(1,\cdot))$ converges to
  $\cF(\partial_t\phi(1,\cdot))$  both pointwisely (due to Lebesgue dominated convergence theorem) and in the $L^2(\bR^3,dk)$ sense. 
 Since $\cF(\partial_t\phi(1,\cdot))$ and the $\cF(\partial_t\phi_{n}(1,\cdot))$ are uniformly bounded by a constant independent from $n$ and $1/||\bk||$ being $d\bk$-integrable, one easily proves that the second integral above vanishes 
 for $n\to +\infty$.\\
  We have so far obtained that, for $m\geq 0$,
 $\widetilde{\sK}_m L^m_{\widetilde{D}D} (\phi_n) \to  \widetilde{\sK}_m \phi$.  
 To prove that the other convergence property in the thesis  holds,  we notice that  the thesis of  Lemma \ref{lemmabastardo}, in particular
the identity
$
\int_{\bR^3} |\widehat{\phi}(\bk)|^2 d\bk = \langle \sK_\lambda \Phi , \sK_\lambda \Phi  \rangle_{\sH_\lambda}
$
is valid 
 for a generic smooth-in-$\overline{D}$ function  $\phi$ satisfying the Klein-Gordon equation in $M$, such that
 $\widehat{\phi}$ defined as in (\ref{nFMM}) is $L^2(\bR^3,d\bk)$
and $\Phi = L^m_{V\widetilde{D}}\phi$ defines an element
 of $\sS(V)$ as in our case, since just those 
hypotheses were used in the proof. Therefore,  we can re-adapt that proof to our case replacing $\phi$ for $\phi-\phi_n$
and $\Phi$ for $\Phi- L^m_{VD}(\phi_n)$, obtaining
$||\sK_\lambda \Phi - \sK_\lambda L^m_{VD}(\phi_n)||_{\sH_\lambda}^2 = ||\widehat{\phi_n}
 -\widehat{\phi}||_{L^2(\bR^3,d\bk)}^2 \to 0$ for $n\to +\infty$.
The last statement in the lemma is trivially true noticing that  $\widetilde{\sH}_m \supset \sH_m$ are closed subspaces of $L^2(\bR^3,d\bk)$ and thus
 $\sK_m = \widetilde{\sK}_mL^m_{\widetilde{D}D}$ is valid by the definitions of $\widetilde{\sK}_m$ and $\sK_m$. Since  convergence property 
 $\widetilde{\sK}_m L^m_{\widetilde{D}D} \phi_n \to \widetilde{\sK}_m  \phi$ shows that  $\sH_m$ is dense in $\widetilde{\sH}_m$, we conclude that $\widetilde{\sH}_m = \sH_m$.
\end{proof}

\subsection*{On KMS states}
 Comparing Definition 5.3.1 and Proposition 5.3.7 in \cite{BR2} we say that a state $\omega$ on a $C^*$-algebra $\mA$ is  {\em KMS at inverse temperature $\beta\in \bR$}
 with respect to a one-parameter group of 
$*$-automorphisms $\{ \alpha_t\}_{t\in \bR}$ 
if the function
$\bR \ni t \mapsto  F^{(\omega)}_{A,B}(t) \doteq \omega\left(A\alpha_t(B)\right)$ satisfies the  following three requirements for every pair $A,B \in \mA$.
{\bf (a)} It  extends to a continuous complex function $F^{(\omega)}_{A,B} = F^{(\omega)}_{A,B}(z)$
on  $\overline{D_\beta} \doteq \{z \in \bC\:|\:  0 \leq \Imm z \leq \beta\}$  if $\beta\geq 0$, or
$\overline{D_\beta} \doteq \{z \in \bC\:|\: \beta \leq \Imm z \leq 0\}$ if $\beta\leq 0$;
{\bf (b)} the extension is analytic in the interior of $\overline{D_\beta}$;
{\bf (c)}  the {\em KMS condition} holds:
$
F^{(\omega)}_{A,B}(t+i\beta) = \omega\left(\alpha_t(B)A \right)  
$ for all $t \in \bR$.
Another definition is the following.  We say that a state $\omega$ on a $C^*$-algebra $\mA$ is  {\em KMS at inverse temperature $\beta\in \bR$}
 with respect to a one-parameter group of 
$*$-automorphisms $\{ \alpha_t\}_{t\in \bR}$ 
if its GNS triple
$(\cH_\omega, \pi_\omega, \Psi_\omega)$ satisfies the following three requirements. 
{\bf (1)} $\omega$ is $\alpha$-invariant and  the unique unitary group $\bR \ni t \mapsto U_t$
which leaves $\Psi_\omega$ invariant  implementing $\alpha$  (i.e.
$\pi_\omega\left(\alpha_{t}(A)\right) = U_t \pi_\omega(A) U^*_t$  if $A\in \mA$ and $t\in \bR$)
 is strongly continuous. {\bf (2)} $\pi_\omega\left(\mA\right) \Psi_\omega\subset \Dom\left(e^{-\beta H/2}\right)$ where  $e^{itH} = U_t$, for $t\in \bR$, with
  $H$ self-adjoint in $\cH_\omega$. 
 {\bf (3)} There exists  $J: \cH_\omega \to \cH_\omega$ antilinear with $JJ=I$ such that:
$J e^{-it H} = e^{-it H}J$, for all $t\in \bR$,  and
 $e^{-\beta H/2} \pi_\omega(A)\Psi_\omega = J \pi_\omega(A^*)\Psi_\omega$ for all $A\in \mA$.\\
These two definitions are equivalent as we prove here. A state satisfying (a), (b), (c) 
is $\{ \alpha_t\}_{t\in \bR}$-invariant  \cite{BR2} and fulfils 
the conditions (1), (2) and (3) due to Theorem 6.1 in \cite{hug}.
Conversely, consider a state $\omega$ on $\mA$ fulfilling
 the conditions (1), (2) and (3).
When $A$ and $B$ 
are entire analytic elements of $\mA$ (see \cite{BR2}), $\bR \ni t \mapsto F^{(\omega)}_{A,B}(t)$ uniquely 
extends to an analytic function on the whole $\bC$ and thus (a) and (b)  are true.
(1), (2), (3) and $e^{z H} \Psi_\omega = \Psi_\omega$, for all 
$z\in \overline{D_\beta}$ (following from (2) and (3)) entail  (c), too:
$\omega(\alpha_t(B) A) = \langle\Psi_\omega,\: U_t \pi_\omega(B) U^*_t \pi_\omega(A) \Psi_\omega \rangle 
= \langle  \pi_\omega(B^*)\Psi_\omega,\: U_t^* \pi_\omega(A) \Psi_\omega\rangle $
$ = \langle  J U_t^* \pi_\omega(A)\Psi_\omega  ,\: J \pi_\omega(B^*) \Psi_\omega\rangle$\\
$= \langle   U_t^* e^{-\beta H/2}\pi_\omega(A^*) \Psi_\omega ,\: e^{-\beta H/2} \pi_\omega(B) \Psi_\omega\rangle $
$ = \langle\Psi_\omega,\: \pi_\omega(A) e^{i(t+i\beta)H} \pi_\omega(B)e^{-i(t+i\beta)H}  \Psi_\omega\rangle$\\
$= F_{A,B}^{(\omega)}(t+i\beta)$.
The validity of conditions (a), (b) and (c) for  entire analytic elements $A,B \in \mA$ implies the validity  
for all $A,B\in \mA$, as established in \cite{BR2} (compare Definition 5.3.1 and Proposition 5.3.7 therein).

\end{document}